\newtheorem{thm}{Theorem}
\newtheorem{cor}[thm]{Corollary}
\newtheorem{lem}[thm]{Lemma}
\newtheorem{prop}[thm]{Proposition}
\newtheorem{ex}[thm]{Example}
\theoremstyle{definition}
\newtheorem{defn}[thm]{Definition}
\theoremstyle{remark}
\newtheorem{rem}[thm]{Remark}
\newcommand{\mb}{\mathbf}
\DeclareMathOperator*{\supp}{supp}
\DeclareMathOperator*{\esssup}{ess\,sup}
\DeclareMathOperator*{\Var}{Var}
\begin{document}

\title{$E_{\gamma}$-Resolvability}%
\author{\IEEEauthorblockN{Jingbo Liu~~~~~~~~~Paul Cuff~~~~~~~~Sergio Verd\'{u}}\\
\IEEEauthorblockA{Dept. of Electrical Eng., Princeton University, NJ 08544\\
\{jingbo,cuff,verdu\}@princeton.edu}}%

\maketitle

\begin{abstract}
The conventional channel resolvability refers to the minimum rate needed for an input process to approximate the channel output distribution in total variation distance. In this paper we study $E_{\gamma}$-resolvability,
in which total variation is replaced by the more general $E_{\gamma}$ distance. A general one-shot achievability bound for the precision of such an approximation is developed.
Let $Q_{\sf X|U}$ be a random transformation, $n$ be an integer, and $E\in(0,+\infty)$.
We show that in the asymptotic setting where $\gamma=\exp(nE)$, a (nonnegative) randomness rate above $\inf_{Q_{\sf U}: D(Q_{\sf X}\|{{\pi}}_{\sf X})\le E} \{D(Q_{\sf X}\|{{\pi}}_{\sf X})+I(Q_{\sf U},Q_{\sf X|U})-E\}$ is sufficient to approximate the output distribution ${{\pi}}_{\sf X}^{\otimes n}$ using the channel $Q_{\sf X|U}^{\otimes n}$, where $Q_{\sf U}\to Q_{\sf X|U}\to Q_{\sf X}$, and is also necessary in the case of finite $\mathcal{U}$ and $\mathcal{X}$.
In particular, a randomness rate of $\inf_{Q_{\sf U}}I(Q_{\sf U},Q_{\sf X|U})-E$ is always sufficient.
We also study the convergence of the approximation error under the high probability criteria in the case of random codebooks.
Moreover, by developing simple bounds relating $E_{\gamma}$ and other distance measures, we are able to determine the exact linear growth rate of the approximation errors measured in relative entropy and smooth R\'{e}nyi divergences for a fixed-input randomness rate.
The new resolvability result is then used to derive 1) a one-shot upper bound on the probability of excess distortion in lossy compression, which is exponentially tight in the i.i.d.~setting, 2) a one-shot version of the mutual covering lemma, and 3) a lower bound on the size of the eavesdropper list to include the actual message and a lower bound on the eavesdropper false-alarm probability in the wiretap channel problem, which is (asymptotically) ensemble-tight.
\end{abstract}

\newcommand\blfootnote[1]{%
  \begingroup
  \renewcommand\thefootnote{}\footnote{#1}%
  \addtocounter{footnote}{-1}%
  \endgroup
}
\blfootnote{This paper was presented in part at 2015 IEEE International Symposium on Information Theory (ISIT).}

\begin{IEEEkeywords}
Resolvability,
source coding,
broadcast channel,
mutual covering lemma,
wiretap channel
\end{IEEEkeywords}

\section{Introduction}\label{secBackground}
\emph{Channel resolvability}, introduced by Han and Verd\'u \cite{han1993approximation}, is defined as the
minimum randomness rate required to synthesize an input
so that its corresponding output distribution approximates a target output distribution.
While the resolvability problem itself differs from classical topics in information theory such as data compression and transmission,
\cite{han1993approximation}
unveils its potential utility in operational problems through the solution of the strong converse problem of identification coding \cite{ahlswede1989identification}.
Other applications of distribution approximation
in information theory include
common randomness of two random variables \cite{wyner1975common},
strong converse in identification through channels \cite{han1993approximation},
random process simulation \cite{steinberg1996simulation},
secrecy \cite{csiszar1996}\cite{hayashi2006general}\cite{bloch2013strong}\cite{han2013reliability}, channel synthesis \cite{steinberg1994}\cite{cuff2012distributed},
lossless and lossy source coding \cite{steinberg1996simulation}\cite{han1993approximation}\cite{song}, and the empirical distribution of a capacity-achieving code \cite{shamai1997empirical}\cite{polyanskiy2014empirical}.
The achievability part
of resolvability (also known as the soft-covering lemma in \cite{cuff2012distributed})
is particularly useful, and coding theorems via resolvability
have certain advantages over what is obtained from traditional
typicality-based approaches (see e.g. \cite{bloch2013strong}).

If the channel is stationary memoryless and the target output distribution is induced by a stationary memoryless input,
then the resolvability is the minimum mutual information over all input distributions inducing the (per-letter) target output distribution,
no matter when the approximation error is measured in total variation distance \cite{han1993approximation}\cite[Theorem~1]{watanabe2014strong}, normalized relative entropy \cite[Theorem~6.3]{wyner1975common}\cite{han1993approximation}, or unnormalized relative entropy \cite{han2013reliability}.
In contrast, relatively few measures for the quality of the approximation of output statistics have been proposed for which the resolvability can be strictly smaller than mutual information.
As shown by Steinberg and Verd\'{u} \cite{steinberg1996simulation}, one exception is the Wasserstein distance measure, in which case the finite precision resolvability for the identity channel can be related to the rate-distortion function of the source where the distortion is the metric in the definition of the Wasserstein distance \cite{steinberg1996simulation}.

In this paper we generalize the theory of resolvability by considering a distance measure, $E_{\gamma}$, defined in Section~\ref{sec_pre}, of which the total variation distance is a special case where $\gamma=1$. The $E_{\gamma}$ metric\footnote{``Metric'' or ``distance'' are used informally since, other than nonnegativity, $E_{\gamma}$, in general, does not satisfy any of the other three requirements for a metric.} is more forgiving than total variation distance when $\gamma>1$, and the larger $\gamma$ is, the less randomness is needed at the input for approximation in $E_{\gamma}$.
Various achievability and converse bounds for resolvability in the $E_{\gamma}$ metric are derived in Section~\ref{sec_source}-\ref{sec_channel}.

If we fix an input randomness rate and consider the minimum exponential growth rate of $\gamma$ such that the approximation error measured in $E_{\gamma}$ is small, we are effectively dealing with resolvability in a large deviations regime. Using general bounds on $E_{\gamma}$ and related metrics (not specific to the resolvability problem) developed in Section~\ref{sec_renyi},
we conclude that, in fact, the growth rate of the exponent of $\gamma$ is the same as (see Section~\ref{sec_univ}):
\begin{enumerate}
  \item The minimum exponential growth of a threshold such that the cdf of the relative information between the true distribution and a target distribution is close to one. (That is, the \emph{excess relative information} defined in Section~\ref{sec_pre} is small.)
  \item The linear growth rate of the minimum relative entropy between the true output distribution and the target distribution.
  \item The linear growth rate of the minimum \emph{smooth R\'{e}nyi $\alpha$-divergence} (of any order except for $\alpha=1$) between the true output distribution and the target distribution.
\end{enumerate}
In the case of a discrete memoryless channel with a \emph{given} stationary memoryless target output, we provide a single-letter characterization of the minimum exponential growth rate of $\gamma$ to achieve approximation in $E_{\gamma}$ (Theorem~\ref{thm_conv}).
The corresponding problem for the \emph{worst case} target distribution (which is generally not stationary memoryless even for stationary memoryless channels) has a different flavor; a converse bound (Theorem~\ref{thm28}) can be derived by drawing connections to the identification coding problem (as did in \cite{han1993approximation}), but which generally does not match the achievability bound unless $\gamma=1$.

In addition to achievability results in terms of the expectation of the approximation error over a random codebook, we also prove achievability under the high probability criteria\footnote{\label{ft_high}More precisely, by ``the high probability criteria'' we that mean that the approximation error satisfies a Gaussian concentration result (which ensures a doubly exponential decay in blocklength when the single-shot result is applied to the multi-letter setting).} for a random codebook (Section~\ref{sec_tail}).
The implications of the latter problem in secrecy has been noted by several authors \cite{cuff2016}\cite{Goldfeld2016}\cite{tahmasbi2016}.
Here we adopt a simple non-asymptotic approach based on concentration inequalities, dispensing with the finiteness or stationarity assumptions on the alphabet required by the previous proof method \cite{cuff2016} based on Chernoff bounds.

The $E_{\gamma}$ metric provides a very convenient tool for change-of-measure purposes: if $E_{\gamma}(P\|Q)$ is small, and the probability of some event is large under $P$, then the probability of this event under $Q$ is essentially lower-bounded by $\frac{1}{\gamma}$ (see \eqref{e_prop3}). In the special case of $\gamma=1$ (total variation distance), this change-of-measure trick has been widely used, see \cite{cuff2012distributed}\cite{song},
but the general $\gamma\ge1$ case is more interesting, since $P$ and $Q$ need not be essentially the same, thereby opening up novel applications of the resolvability theorem (e.g.~the one-shot mutual covering lemma in Section~\ref{sec_mutual}). In this paper we present three information theoretic applications of $E_{\gamma}$-resolvability (Sections~\ref{seclikelihood}-\ref{sec_wiretap}):
\begin{itemize}
  \item One-shot lower bound on the probability that the distortion lies below a certain threshold in lossy compression (successful decompression). Compared with the proof based on the soft-covering lemma (achievability part of resolvability in total variation distance) \cite{song}, the new bound is capable of recovering the exact success exponent, previously obtained using the method of types (see \cite{csiszar1981information}) for discrete memoryless settings.
      In contrast, our derivation applies to general sources and dispenses with memoryless and finite alphabet assumptions.
\item A one-shot generalization of the mutual covering lemma, with a proof significantly different from the original one based on second moments  \cite{el1981proof}\cite{el2011network}. In \cite{jingbo2015marton} we applied the one-shot mutual covering lemma to derive a one-shot version of Marton's inner bound for the broadcast channel with a common message, without using time-sharing/common randomness.
 \item One-shot achievability for wiretap channels, where a novel secrecy measure in terms of the eavesdropper ability to perform list decoding and detect the absence of message is proposed. The previous proofs for wiretap channels using the conventional resolvability (soft-covering lemma) \cite{hayashi2006general}\cite{bloch2013strong} are only suitable when the rate is low enough to achieve perfect secrecy.
     In contrast, $E_{\gamma}$-resolvability yields lower bounds on the minimum size of the eavesdropper list for an arbitrary rate of communication. This interpretation of security in terms of list size is reminiscent of equivocation \cite{wyner1975wire}, and indeed we recover the same formula in the asymptotic setting,
     even though there is no direct correspondence between both criteria.
     Moreover, we also consider a more general case where the eavesdropper wishes to reliably detect whether a message is sent,
     while being able to produce a list including the actual message if it decides it is present.
     This is a practical setup because ``no message'' may be valuable information which the eavesdropper wants to ascertain reliably.
     The idea is reminiscent of the stealth communication problem (see \cite{hou2014effective}\cite{bloch2015} and the references therein) which also involves a hypothesis test on whether a message is sent. However, the setup and the analysis (including the covering lemma) are quite different from \cite{hou2014effective} and \cite{bloch2015}. In comparison, our results are more suitable for the regime with higher communication rates and lower secrecy demands. In the discrete memoryless case, we obtain single-letter expressions of the tradeoff between the transmission rate, eavesdropper list, and the exponent of the false-alarm probability for the eavesdropper (i.e. declaring the presence of a message when there is none).
\end{itemize}

\section{Preliminaries}\label{sec_pre}
In this paper, several distance measures between two probability distributions play an important role.
In this section, we introduce these distance measures and discuss some of their relations.
At the end of this section, the channel resolvability problem is formulated.

\subsection{Excess Relative Information Metric}
Given two nonnegative $\sigma$-finite measures $\nu\ll \mu$ on $\mathcal{X}$, define the \emph{relative information} (for each $x\in\mathcal{X}$) as the logarithm of the Radon-Nikodym derivative:
\begin{align}
\imath_{\nu\|\mu}(x):=\log\frac{{\rm d}\nu}{{\rm d}\mu}(x).
\end{align}
For $\gamma>0$ and a probability measure $P$, we define the \emph{excess relative information metric with threshold $\gamma$},
\begin{align}\label{excess}
\bar{F}_{\gamma}(P\|\mu):=\mathbb{P}[\imath_{P\|\mu}(X)>\log\gamma]
\end{align}
where $X\sim P$. As such, it can be expressed in terms of the \emph{relative information spectrum} (see \cite{verdubook}) as
\begin{align}
\bar{F}_{\gamma}(P\|\mu)=1-F_{P\|\mu}(\log \gamma).
\end{align}

Note that \eqref{excess} is nonnegative and vanishes when $P=\mu$ provided that $\gamma>1$, and can therefore be considered as a measure of the discrepancy between $P$ and $\mu$.
In addition to playing an important role in one-shot analysis (see \cite{verdu2012non}), \eqref{excess} provides richer information than the relative entropy measure since
\begin{align}
D(P\|\mu)
&:=\mathbb{E}[\imath_{P\|\mu}(X)]
\\
&=\int_{[0,+\infty)}\mathbb{P}[\imath_{P\|\mu}(X)>\tau]{\rm d}\tau
\nonumber
\\
&\quad-\int_{(-\infty,0]}(1-\mathbb{P}[\imath_{P\|\mu}(X)>\tau]){\rm d}\tau.
\end{align}
Moreover, the excess relative information is also related to total variation distance since for probability measures $P$ and $Q$,
\begin{align}
\frac{1}{2}|P-Q|=\mathbb{P}[\imath_{P\|Q}(X)>0]-\mathbb{P}
[\imath_{P\|Q}(Y)>0],
\end{align}
where $X\sim P$ and $Y\sim Q$ \cite{verdu2014total}. However, perhaps surprisingly, the excess relative information metric \emph{does not} satisfy the data processing inequality, in contrast to the relative entropy and total variation distance:
\begin{prop}\label{prop_egamma}
Suppose $P_X\to P_{Y|X} \to P_{Y}$, $Q_X\to P_{Y|X} \to Q_Y$ and $P_X\ll\gg Q_X$ (that is, $P_X\ll Q_X$ and $Q_X\ll P_X$). Then there exists $\gamma>0$ such that
\begin{align}
\bar{F}_{\gamma}(P_X\|Q_X)< \bar{F}_{\gamma}(P_Y\|Q_Y)
\end{align}
where $(X,Y)\sim P_{XY}$, unless $P_{X|Y}=Q_{X|Y}$ almost surely.
\end{prop}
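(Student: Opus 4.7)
My plan is to argue by contradiction. Assuming $\bar{F}_{\gamma}(P_X\|Q_X)\ge \bar{F}_{\gamma}(P_Y\|Q_Y)$ for \emph{every} $\gamma>0$, I will deduce $P_{X|Y}=Q_{X|Y}$ almost surely, which contradicts the hypothesis. The first step is to rephrase the assumption probabilistically: setting $A:=\imath_{P_X\|Q_X}(X)$ and $B:=\imath_{P_Y\|Q_Y}(Y)$ with $(X,Y)\sim P_{XY}$, the assumption says precisely that $A$ stochastically dominates $B$ under $P$, so $\mathbb{E}_P[\psi(A)]\ge \mathbb{E}_P[\psi(B)]$ for every nondecreasing test function $\psi$ whose two expectations are defined.

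I then apply this to the strictly increasing function $\psi(t)=-e^{-2t}$, obtaining $\mathbb{E}_P[e^{-2A}]\le \mathbb{E}_P[e^{-2B}]$. A change of measure from $P_X$ to $Q_X$ (and from $P_Y$ to $Q_Y$) rewrites the two sides as
\begin{align*}
\mathbb{E}_P[e^{-2A}]&=\mathbb{E}_{Q_X}\!\left[\tfrac{dQ_X}{dP_X}(X)\right]=\chi^2(Q_X\|P_X)+1, \\
\mathbb{E}_P[e^{-2B}]&=\mathbb{E}_{Q_Y}\!\left[\tfrac{dQ_Y}{dP_Y}(Y)\right]=\chi^2(Q_Y\|P_Y)+1,
\end{align*}
so the stochastic-dominance inequality becomes $\chi^2(Q_X\|P_X)\le \chi^2(Q_Y\|P_Y)$. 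On the other hand, because $P_X$ and $Q_X$ are processed by the \emph{same} channel $P_{Y|X}$, the standard DPI for $\chi^2$-divergence yields the reverse inequality $\chi^2(Q_X\|P_X)\ge \chi^2(Q_Y\|P_Y)$. The two together force equality in the $\chi^2$ DPI, whose equality case --- via strict convexity of $x\mapsto x^2$ in the defining conditional-Jensen step $\bigl(\mathbb{E}_P[\tfrac{dQ_X}{dP_X}\mid Y]\bigr)^2\le \mathbb{E}_P\bigl[\bigl(\tfrac{dQ_X}{dP_X}\bigr)^{\!2}\mid Y\bigr]$ --- is exactly $\tfrac{dQ_X}{dP_X}(X)=\tfrac{dQ_Y}{dP_Y}(Y)$ $P$-a.s. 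By the chain rule $\imath_{P_X\|Q_X}(X)-\imath_{P_Y\|Q_Y}(Y)=\imath_{P_{X|Y}\|Q_{X|Y}}(X|Y)$ this reads $P_{X|Y}=Q_{X|Y}$ a.s., the desired contradiction.

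The main obstacle I anticipate is purely technical: integrability of the unbounded test function $-e^{-2t}$, i.e., finiteness of $\mathbb{E}_P[e^{-2A}]=\mathbb{E}_P[(dQ_X/dP_X(X))^2]$. In the finite-alphabet setting it is automatic from $P_X\ll\gg Q_X$ (the likelihood ratio is bounded), and this is the case of principal interest. For arbitrary alphabets I would run the identical argument with $\psi(t)=-e^{-\alpha t}$ for any $\alpha>1$ for which $\mathbb{E}_P[(dQ_X/dP_X)^{\alpha-1}]<\infty$, replacing the role of $x^2$ by the strictly convex $x^{1-\alpha}$ on $(0,\infty)$ in the Jensen step and the $\chi^2$-DPI by the Rényi-divergence DPI of order $\alpha$; in the rare cases where no such $\alpha$ exists, I would truncate $-e^{-2t}$ to a bounded, strictly concave surrogate (linearizing below a small cutoff on $e^t$) and pass to the limit as the cutoff vanishes.
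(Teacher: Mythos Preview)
Your argument is correct, but it takes a different and somewhat less economical route than the paper's. Both proofs begin the same way: assume for contradiction that $\bar{F}_{\gamma}(P_X\|Q_X)\ge \bar{F}_{\gamma}(P_Y\|Q_Y)$ for all $\gamma>0$, which says that $\tfrac{dQ_X}{dP_X}(X)$ is stochastically dominated by $\tfrac{dQ_Y}{dP_Y}(Y)$ under $P$. You then apply the nondecreasing test function $\psi(t)=-e^{-2t}$ to turn this into $\chi^2(Q_X\|P_X)\le\chi^2(Q_Y\|P_Y)$, contrast it with the $\chi^2$ data-processing inequality in the other direction, and read off the equality case. This works, with the integrability caveat you flag.

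The paper's choice is cleaner: it effectively uses the test function $\psi(t)=-e^{-t}$, i.e., it integrates the tail of $\tfrac{dQ}{dP}$ itself. Since $\mathbb{E}_{P_X}\bigl[\tfrac{dQ_X}{dP_X}\bigr]=\mathbb{E}_{P_Y}\bigl[\tfrac{dQ_Y}{dP_Y}\bigr]=1$ \emph{always}, the two layer-cake integrals coincide automatically, so the stochastic domination is forced to be equality for every threshold. This yields the stronger intermediate conclusion that the two likelihood ratios have the \emph{same distribution} under $P$, from which $D(P_X\|Q_X)=D(P_Y\|Q_Y)$ (or equality in any $f$-divergence) is immediate, and the KL equality case finishes. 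The payoff is that no moment condition on $dQ_X/dP_X$ is ever needed, so the truncation/R\'enyi-$\alpha$ workarounds you propose become unnecessary.
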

Note that in the absence of the condition $P_X \ll\gg Q_X$ it is indeed possible to find examples
where the cdf of the relative information at the input is dominated by that at the output.
\begin{proof}
Suppose, for the sake of contradiction, 
\begin{align}\label{e1}
\mathbb{P}\left[\frac{{\rm d}Q_X}{{\rm d}P_X}(X)\ge \lambda\right]
\le \mathbb{P}\left[\frac{{\rm d}Q_Y}{{\rm d}P_Y}(Y)\ge \lambda\right]
\end{align}
for all $\lambda>0$, where $\frac{{\rm d}Q_Y}{{\rm d}P_Y}$ is well defined because $P_Y\ll\gg Q_Y$ follows from $P_X\ll\gg Q_X$. However, since
\begin{align}
1&=\mathbb{E}\left[\frac{{\rm d}Q_X}{{\rm d}P_X}(X)\right]
\\
&=\int_0^{\infty}\mathbb{P}\left[\frac{{\rm d}Q_X}{{\rm d}P_X}(X)\ge \lambda\right]{\rm d}\lambda
\\
&=\mathbb{E}\left[\frac{{\rm d}Q_Y}{{\rm d}P_Y}(Y)\right]
\\
&=\int_0^{\infty}\mathbb{P}\left[\frac{{\rm d}Q_Y}{{\rm d}P_Y}(Y)\ge \lambda\right]{\rm d}\lambda,
\end{align}
\eqref{e1} must hold with equality all $\lambda>0$ except for a set of measure zero. But both sides of \eqref{e1} are decreasing, left continuous functions of $\lambda$, so in fact the equality holds for all $\lambda>0$. This implies that $D(P_X\|Q_X)=D(P_Y\|Q_Y)$, and hence $P_{X|Y}=Q_{X|Y}$ almost surely.
\end{proof}
The failure of the data processing inequality suggests that the excess relative information itself has a limited operational significance; rather, we shall mainly resort to the simplicity of its definition and its connection to other distance measures which have more significant operational meanings. In \cite{verdu2014total} several bounds on total variation distance using the excess relative information metric are derived. Next we refine those results by providing the tightest possible bounds (i.e.~the locus of possible values of $\bar{F}_{\lambda}(P\|Q)$ as a function of the total variation distance $|P-Q|$; see Figure~\ref{fig1}).
\begin{prop}\label{prop1}
If $P\ll Q$ are distributions on $\mathcal{X}$ (not necessarily discrete), and $\delta:=\frac{1}{2}|P-Q|$, then
\begin{align}\label{ub}
\bar{F}_{\lambda}(P\|Q)
\le\left\{\begin{array}{cl}
            \frac{\lambda\delta}{\lambda-1} & \lambda\in[\frac{1}{1-\delta},\infty)\\
            1 & \lambda\in(0,\frac{1}{1-\delta})
          \end{array}
          \right.
\end{align}
and
\begin{align}\label{lb}
\bar{F}_{\lambda}(P\|Q)
\ge\left\{\begin{array}{cl}
            0 & \lambda\in(\frac{1}{1-\delta},\infty) \\
            1-(1-\delta)\lambda & \lambda\in(1,\frac{1}{1-\delta}] \\
            \delta & \lambda\in(1-\delta,1] \\
            1-\frac{\lambda\delta}{1-\lambda} & \lambda\in (0,1-\delta]
          \end{array}
          \right.
\end{align}
where $X\sim P$.
Moreover, the bounds above are tight (that is, the values given on the right hand sides are the supremum/infimum over $P\ll Q$ with a total variation distance $\delta$).
\end{prop}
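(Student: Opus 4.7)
The plan is to let $f := \frac{{\rm d}P}{{\rm d}Q}$ and $A := \{f > \lambda\}$, so that $\bar{F}_{\lambda}(P\|Q) = P(A) = \int_A f \, {\rm d}Q$. Two identities from $\frac{1}{2}|P-Q|=\delta$ will be used throughout: $\int (f-1)^+ \, {\rm d}Q = \int (1-f)^+ \, {\rm d}Q = \delta$. The second one yields the auxiliary bound $Q(\{f \le 1\}) \ge \delta$ since $(1-f)^+ \le 1$; this will be crucial below.

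For the upper bound, on $A$ we have $f > \lambda$, so $P(A) \ge \lambda Q(A)$; and the total variation constraint gives $P(A) - Q(A) \le \delta$. Combining yields $P(A)(1 - 1/\lambda) \le \delta$, i.e., $P(A) \le \lambda\delta/(\lambda-1)$, which is sharper than the trivial bound $1$ exactly when $\lambda \ge 1/(1-\delta)$. Tightness is obtained by a two-atom construction $f \in \{v_1, v_2\}$ with $v_1 < 1 < v_2 \to \lambda^+$ and $Q$-masses calibrated so that $(v_2 - 1)q_2 = (1 - v_1)q_1 = \delta$.

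For the lower bound I split into three regimes. When $\lambda < 1$, the inequality $f \le \lambda$ on $A^c$ gives $P(A^c) \le \lambda Q(A^c)$, and $Q(A^c) \le P(A^c) + \delta$ then yields $P(A^c)(1-\lambda) \le \lambda\delta$, i.e., $P(A) \ge 1 - \lambda\delta/(1-\lambda)$. When $\lambda \in (1-\delta, 1]$, monotonicity of $\bar{F}_\lambda$ in $\lambda$ together with $\bar{F}_1 = Q(\{f>1\}) + \delta \ge \delta$ gives $\bar{F}_\lambda \ge \delta$. The subtlest case is $\lambda > 1$: decompose $\delta = \int(f-1)^+\,{\rm d}Q$ as $(P(A) - Q(A)) + \int_{\{1 < f \le \lambda\}}(f-1)\,{\rm d}Q$, and bound the second summand by $(\lambda-1)(1 - \delta - Q(A))$ using $Q(\{f \le 1\}) \ge \delta$. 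Rearranging gives $P(A) \ge \lambda Q(A) + 1 - (1-\delta)\lambda \ge 1 - (1-\delta)\lambda$. Tightness for each piece follows from a three-atom construction with $f$ supported on $\{0, v_2, v_3\}$ where $v_2 \le \lambda < v_3$ and $v_3 \to \infty$, with $Q$-masses tuned so as to saturate the relevant constraint ($q_2 \le 1-\delta$, $q_2(1-v_2) \le \delta$, etc.).

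The main obstacle is the lower bound for $\lambda > 1$: a naive application of the TV inequality $|P(A) - Q(A)| \le \delta$ together with $P(A^c) \le \lambda Q(A^c)$ only reproduces the upper bound route and yields no nontrivial lower bound on $P(A)$. The new ingredient is the auxiliary inequality $Q(\{f \le 1\}) \ge \delta$, which ties the positive and negative parts of $f-1$ together through mass conservation, thereby controlling the intermediate region $\{1 < f \le \lambda\}$ that neither the event $A$ nor the bare TV constraint reaches directly.
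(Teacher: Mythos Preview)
Your argument is correct, and it proceeds rather differently from the paper's. The paper first invests in a reduction lemma: partitioning $\mathcal{X}$ into the four sets $\{f>1\}\cap\{f>\lambda\}$, $\{f\le1\}\cap\{f>\lambda\}$, etc., it shows that both $|P-Q|$ and $\bar{F}_\lambda(P\|Q)$ are preserved under the pushforward to a (at most) three-point alphabet, and then verifies each regime by elementary algebra on three atoms. For the upper bound in the range $\lambda\ge 1/(1-\delta)$ the paper actually defers to an external reference, whereas your two-line argument ($P(A)\ge\lambda Q(A)$ combined with $P(A)-Q(A)\le\delta$) is self-contained.

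The substantive case is the lower bound for $\lambda>1$. Your decomposition $\delta=(P(A)-Q(A))+\int_{\{1<f\le\lambda\}}(f-1)\,{\rm d}Q$ together with the auxiliary bound $Q(\{f\le1\})\ge\delta$ is exactly the content of the paper's three-atom computation, where the latter inequality appears implicitly as $Q(1)=P(1)+\delta\ge\delta$ after the reduction. So the underlying inequalities coincide; what differs is that you work directly with $f$ on the original space while the paper first compresses to three atoms. Your route is more streamlined for proving the inequalities themselves; the paper's reduction has the advantage that once $|\mathcal{X}|\le3$ is established, the tightness constructions become transparent finite optimizations rather than limiting procedures (your sketches with $v_3\to\infty$ and $v_2\to\lambda^+$ are correct but would benefit from being written out explicitly, as the paper does).
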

\begin{proof}
See Appendix~\ref{pf_prop1}.
\end{proof}
As shown in \cite[Lemma~5]{han1993approximation}, a useful and compact bound on $\bar{F}_{\lambda}(P\|Q)$ is
\begin{align}
\frac{1}{2}|P-Q|\le \lambda+\bar{F}_{\lambda}(P\|Q).
\label{e_hanbound}
\end{align}
The bound in \eqref{e_hanbound} is not tight, in contrast to \eqref{lb}. In particular, \eqref{e_hanbound} does not show the fact the $\bar{F}_{\lambda}(P\|Q)$ tends to 1 as $\lambda\downarrow 0$.
\begin{figure}[h!]
  \centering
  \includegraphics[width=3.8in]{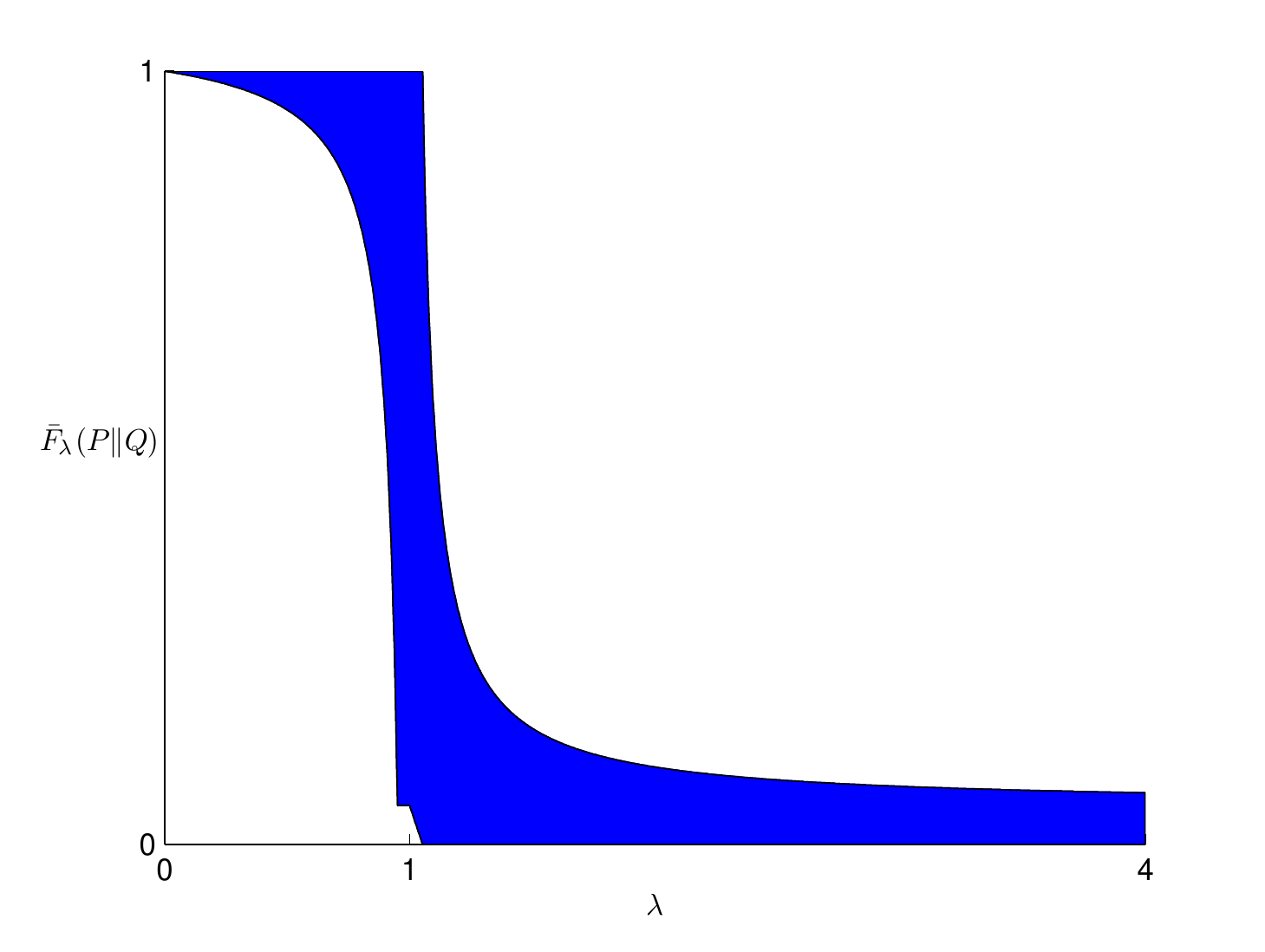}\\
  \caption{Locus of possible values of $\bar{F}_{\lambda}(P\|Q)$ for $|P-Q|=0.1$.}\label{fig1}
\end{figure}

\subsection{$E_{\gamma}$ Metric}\label{sec_related}
\begin{defn}
Given probability distributions $P\ll Q$ on the same alphabet and $\gamma\ge1$, define
\begin{align}
E_{\gamma}(P\|Q)
:=\mathbb{P}[\imath_{P\|Q}(X)>\log\gamma]
-\gamma\,\mathbb{P}[\imath_{P\|Q}(Y)>\log\gamma]\label{e_prop1}
\end{align}
where $X\sim P$ and $Y\sim Q$.
\end{defn}
We can see that $E_{\gamma}$ is an $f$-divergence \cite{csiszar1967} with
\begin{align}
f(x)=(x-\gamma)^+.\label{e_prop2}
\end{align}
From the Neyman-Pearson lemma we have an alternative formula for $E_{\gamma}$:
\begin{align}
E_{\gamma}(P\|Q)=\max_{\mathcal{A}}\left\{P(\mathcal{A})-\gamma Q(\mathcal{A})\right\}.
\label{e_prop3}
\end{align}
$E_{\gamma}$ is a basic quantity in binary hypothesis testing: in the Bayesian case, where $P$ and $Q$ have a priori probabilities
$\pi_P$, and $\pi_Q$, respectively, the probability of making the correct decision is given by
\begin{align}
&\quad\pi_P\mathbb{P}\left[\imath_{P\|Q}(X)>\log\frac{\pi_Q}{\pi_P}\right]
+\pi_Q\mathbb{P}\left[\imath_{P\|Q}(Y)\le\log\frac{\pi_Q}{\pi_P}\right]
\nonumber\\
&=\pi_Q + \pi_P \, E_{\frac{\pi_Q}{\pi_P}} (P \| Q).
\end{align}
$E_{\gamma}$ has been considered in various fields under different names; for example, in cryptography (more specifically, differential privacy \cite{dwork2008differential} \cite{barthe2013beyond}) a computation over a database is said to be $(\epsilon,\delta)$-differentially private if the $E_{\exp(\epsilon)}$ distance between the output distributions for any two databases which \emph{differ in at most one element} \cite{dwork2008differential} is upper-bounded by $\delta$. The function \eqref{e_prop2} is called a hockey-stick function in financial engineering \cite{hull2000other}, and so $E_{\gamma}$ is sometimes called a hockey-stick divergence \cite{sharma2012strong} \cite{sharma2013fundamental}.

It appears that, $E_{\gamma}$ was introduced to information theory by \cite{polyanskiy2010channel} to simplify the expression of the DT bound therein.
Also, \cite{polyanskiy2010arimoto} derived a general channel coding converse using any $g$-divergence (a divergence satisfying the data processing inequality, including all $f$-divergences), which recovers the Wolfowitz converse when specialized to $E_{\gamma}$, and admits generalization to the quantum setting \cite{sharma2012strong} \cite{sharma2013fundamental}. The use of $E_{\gamma}$ for change-of-measure appeared in \cite{polyanskiy2014private} \cite{jingbo2015egamma} \cite{jingbo2015marton}.

Below, we prove some basic properties of $E_{\gamma}$ useful for later sections. Additional properties of $E_{\gamma}$ can be found in \cite{sason2015}\cite[Theorem~21]{yurythesis}.
\begin{figure}[h!]
  \centering
  \includegraphics[width=3.8in]{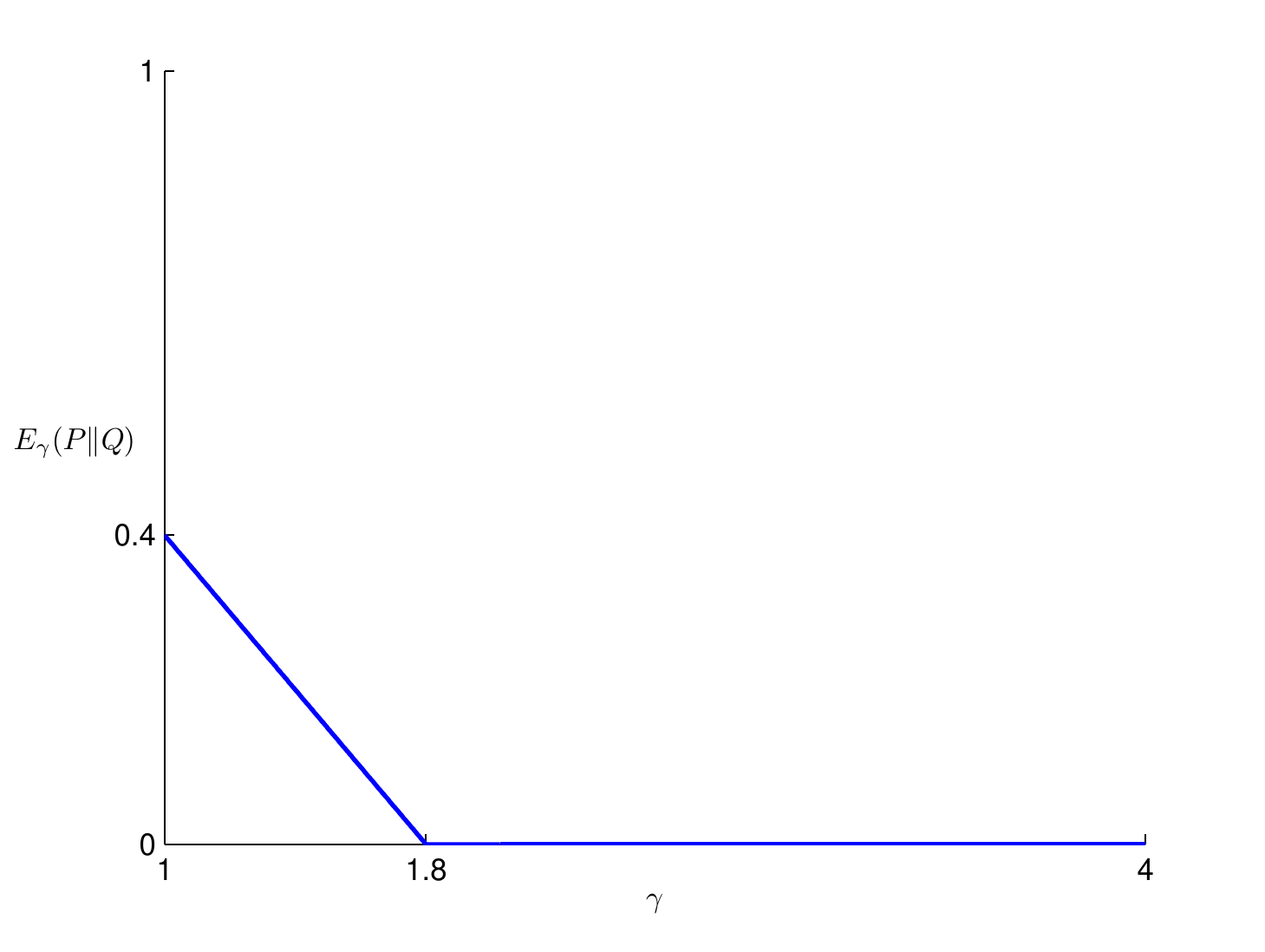}\\
  \caption{$E_{\gamma}(P\|Q)$ as a function of $\gamma$ where $P={\rm Ber}(0.1)$ and $Q={\rm Ber}(0.5)$.}\label{fig2}
\end{figure}
\begin{prop}\label{prop3}
Assume that $P\ll S\ll Q$ are probability distributions on the same alphabet, and $\gamma,\gamma_1,\gamma_2\ge1$.
\begin{enumerate}
\item $[1,\infty)\to [0,\infty)\colon \gamma\mapsto E_{\gamma}(P\|Q)$ is convex, non-increasing, and continuous.
\item\label{prop3_1} For any event $\mathcal{A}$,
\begin{align}
Q(\mathcal{A})\ge \frac{1}{\gamma}(P(\mathcal{A})-E_{\gamma}(P\|Q)).
\end{align}
\item\label{prop3_2} Triangle inequalities:
\begin{align}
E_{\gamma_1\gamma_2}(P\|Q)&\le E_{\gamma_1}(P\|S)+\gamma_1E_{\gamma_2}(S\|Q),
\label{e19}
\\
E_{\gamma}(P\|Q)+E_{\gamma}(P\|S)
&\ge\frac{\gamma}{2}|S-Q|+1-\gamma.\label{e20}
\end{align}
\item \label{prop3_3}Monotonicity: if $P_{XY}=P_XP_{Y|X}$ and $Q_{XY}=Q_XQ_{Y|X}$ are joint distributions on $\mathcal{X}\times\mathcal{Y}$, then
\begin{align}
E_{\gamma}(P_X\|Q_X)\le E_{\gamma}(P_{XY}\|Q_{XY})
\end{align}
where equality holds for all $\gamma \geq 1$ if and only if $P_{Y|X}=Q_{Y|X}$.
\item \label{pt3_4}Given $P_X$, $P_{Y|X}$ and $Q_{Y|X}$, define
\begin{align}
E_{\gamma}(P_{Y|X}\|Q_{Y|X}|P_X):=\mathbb{E}[E_{\gamma}(P_{Y|X}(\cdot|X)\|Q_{Y|X}(\cdot|X))]
\end{align}
where the expectation is w.r.t. $X\sim P_X$. Then
we have
\begin{align}
E_{\gamma}(P_XP_{Y|X}\|P_XQ_{Y|X})= E_{\gamma}(P_{Y|X}\|Q_{Y|X}|P_X).
\end{align}
\item
\begin{align}
1-\gamma\left(1-\frac{1}{2}|P-Q|\right)\le E_{\gamma}(P\|Q)\le \frac{1}{2}|P-Q|.\label{e74}
\end{align}
\end{enumerate}
\end{prop}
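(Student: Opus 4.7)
My strategy throughout is to exploit the two alternative representations of $E_{\gamma}$ already in hand: the $f$-divergence form with $f(x)=(x-\gamma)^{+}$ from \eqref{e_prop2}, and the Neyman--Pearson variational formula \eqref{e_prop3}. Items 1, 2, 5, and the upper bound in item 6 are essentially direct reads of these; items 3 and 4 need a little algebra; and the lower bound in item 6 is a one-line consequence of \eqref{e20}. The one spot where I expect to have to think is the ``only if'' direction of the equality characterization in item 4.

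Items 1, 2, 5, and the upper half of item 6 I view as routine. For item 1, \eqref{e_prop3} exhibits $\gamma\mapsto E_{\gamma}(P\|Q)$ as the supremum of affine functions of $\gamma$ with slopes $-Q(\mathcal{A})\in[-1,0]$, which at once yields convexity, monotonicity, and continuity on $(1,\infty)$; continuity at $\gamma=1$ follows by monotone convergence in the $f$-divergence form. Item 2 is a rearrangement of $P(\mathcal{A})-\gamma Q(\mathcal{A})\le E_{\gamma}(P\|Q)$. For item 5, since $\tfrac{d(P_X P_{Y|X})}{d(P_X Q_{Y|X})}(x,y)=\tfrac{dP_{Y|X}(\cdot|x)}{dQ_{Y|X}(\cdot|x)}(y)$, Fubini applied to the $f$-divergence integral gives the identity. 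The upper half of item 6 uses $\gamma\ge 1$ to write $P(\mathcal{A})-\gamma Q(\mathcal{A})\le P(\mathcal{A})-Q(\mathcal{A})$ in \eqref{e_prop3} and take sup; the lower half is \eqref{e20} applied with $S=P$, using $E_{\gamma}(P\|P)=0$ for $\gamma\ge 1$.

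For item 3, \eqref{e19} follows from the decomposition $P(\mathcal{A})-\gamma_1\gamma_2 Q(\mathcal{A})=[P(\mathcal{A})-\gamma_1 S(\mathcal{A})]+\gamma_1[S(\mathcal{A})-\gamma_2 Q(\mathcal{A})]$ by bounding each bracket via \eqref{e_prop3} and then taking supremum over $\mathcal{A}$. For \eqref{e20}, I pick a set $\mathcal{A}$ attaining $Q(\mathcal{A})-S(\mathcal{A})=\tfrac12|S-Q|$ and apply \eqref{e_prop3} to $\mathcal{A}^{\mathrm{c}}$ for $E_{\gamma}(P\|Q)$ and to $\mathcal{A}$ for $E_{\gamma}(P\|S)$; adding the two inequalities causes the $P(\mathcal{A})$ terms to cancel and leaves exactly $1-\gamma+\gamma(Q(\mathcal{A})-S(\mathcal{A}))$, the claimed bound. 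For item 4, the inequality itself is immediate from \eqref{e_prop3} applied to cylinder sets $\mathcal{A}\times\mathcal{Y}$, and the ``if'' direction of the equality statement is a consequence of item 5. The main obstacle is the ``only if'': assuming equality for all $\gamma\ge 1$, I would rewrite the identity via the $f$-divergence form and then use that the pointwise Jensen inequality $(L_X-\gamma)^{+}\le \mathbb{E}_{Q_{Y|X}}[(L_X L_{Y|X}-\gamma)^{+}|X]$ (with $L_X=dP_X/dQ_X$, $L_{Y|X}=dP_{Y|X}/dQ_{Y|X}$, and $\mathbb{E}_{Q_{Y|X}}[L_{Y|X}|X]=1$) must become an equality $Q_X$-a.s.\ for every rational $\gamma\ge 1$. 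Equality in Jensen for the kinked convex function $(\cdot-\gamma)^{+}$ forces $L_X(x)L_{Y|X}(\cdot|x)-\gamma$ to have constant sign under $Q_{Y|X}(\cdot|x)$; sweeping $\gamma$ through a dense set of thresholds (and in particular near $L_X(x)$) collapses the conditional law of $L_{Y|X}(\cdot|x)$ to a point mass at $1$, yielding $P_{Y|X}=Q_{Y|X}$ on the support of $P_X$.
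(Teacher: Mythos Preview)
For everything except the ``only if'' direction of item~4, your proposal is correct and matches the paper's treatment. Your arguments for \eqref{e19} and \eqref{e20} are the same decompositions the paper writes out (for \eqref{e20} the paper phrases it as $\max_{\mathcal{A}}+\max_{\mathcal{A}}\ge\max_{\mathcal{A}}$ of the sum rather than fixing a TV-optimal $\mathcal{A}$, but that is the same move), and your derivation of the left inequality in \eqref{e74} from \eqref{e20} with $S=P$ is exactly what the paper records. The paper omits items 1, 2, 4, 5 and the right inequality in \eqref{e74} as routine consequences of \eqref{e_prop3}.

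The gap is in your ``only if'' argument for item~4. Your Jensen-equality step gives, for each $x$ and each $\gamma\ge 1$, that $L_X(x)L_{Y|X}(\cdot|x)$ lies entirely on one side of $\gamma$ under $Q_{Y|X}(\cdot|x)$. When $L_X(x)\ge 1$ you can indeed sweep $\gamma\uparrow L_X(x)$ to force $L_X(x)L_{Y|X}\ge L_X(x)$ a.s., and then $\mathbb{E}_{Q_{Y|X}}[L_{Y|X}]=1$ pins $L_{Y|X}\equiv 1$. But when $L_X(x)<1$ the left side of your Jensen inequality is $0$ for every $\gamma\ge 1$, and equality only yields $L_X(x)L_{Y|X}(\cdot|x)\le 1$ a.s., which does \emph{not} force $L_{Y|X}(\cdot|x)\equiv 1$. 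This gap cannot be closed, because the ``only if'' claim is false as stated: take $\mathcal{X}=\mathcal{Y}=\{0,1\}$, $P_X=(\tfrac14,\tfrac34)$, $Q_X=(\tfrac12,\tfrac12)$, $Q_{Y|X}(\cdot|x)=(\tfrac12,\tfrac12)$ for both $x$, $P_{Y|X}(\cdot|1)=(\tfrac12,\tfrac12)$, $P_{Y|X}(\cdot|0)=(1,0)$. Then $\tfrac{dP_{XY}}{dQ_{XY}}$ takes the values $1,0,\tfrac32,\tfrac32$ on $(0,0),(0,1),(1,0),(1,1)$, giving $E_\gamma(P_{XY}\|Q_{XY})=\tfrac12(\tfrac32-\gamma)^+=E_\gamma(P_X\|Q_X)$ for all $\gamma\ge 1$, yet $P_{Y|X}(\cdot|0)\ne Q_{Y|X}(\cdot|0)$ with $P_X(0),Q_X(0)>0$. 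The paper does not prove this direction either, so the obstacle is the statement itself rather than a missing idea in your argument.
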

\begin{proof}
The proofs of 1), 2), 4), 5) and the second inequality in \eqref{e74}
are omitted since they follow either directly from \eqref{e_prop3} or are similar to the corresponding properties for total variation distance.

For 3), observe that
\begin{align}
E_{\gamma_1\gamma_2}(P\|Q)
&=\max_{\mathcal{A}}(P(\mathcal{A})-\gamma_1S(\mathcal{A})
+\gamma_1S(\mathcal{A})-\gamma_1\gamma_2 Q(\mathcal{A}))
\\
&\le\max_{\mathcal{A}}(P(\mathcal{A})-\gamma_1S(\mathcal{A}))
+\max_{\mathcal{A}}(\gamma_1S(\mathcal{A})-\gamma_1\gamma_2 Q(\mathcal{A}))
\\
&=E_{\gamma_1}(P\|S)+\gamma_1
E_{\gamma_2}(S\|Q),
\end{align}
and that
\begin{align}
E_{\gamma}(P\|Q)+E_{\gamma}(P\|S)
&=\max_{\mathcal{A}}(P(\mathcal{A})-\gamma Q(\mathcal{A}))
\nonumber\\
&\quad+\max_{\mathcal{A}}(1-P(\mathcal{A})-\gamma +\gamma S(\mathcal{A}))
\\
&\ge
\max_{\mathcal{A}}(P(\mathcal{A})-\gamma Q(\mathcal{A})
\nonumber\\
&\quad+1-P(\mathcal{A})-\gamma +\gamma S(\mathcal{A}))
\\
&=\gamma\max_{\mathcal{A}}(S(\mathcal{A})-Q(\mathcal{A}))
+1-\gamma
\\
&=\frac{\gamma}{2}|S-Q|
+1-\gamma.
\end{align}
As far as 6) note that the left inequality in \eqref{e74} follows
by setting $S=P$ in \eqref{e20}.
\end{proof}
In view of the right inequality in \eqref{e74},
in the resolvability problem, when the rate of the codebook is not large enough to soft-cover the output distribution in total variation, it may be still possible to do so in the $E_{\gamma}(P\|Q)$ metric.

\subsection{Smooth R\'{e}nyi divergence and Relationships between the Distance Measures}\label{sec_renyi}
To discuss the smooth R\'{e}nyi divergence, it is convenient to generalize some of our definitions to allow nonnegative finite measures that are not necessarily probability measures:
\begin{defn}
For $\gamma\ge1$, nonnegative finite measures $\mu$ and $\nu$ on $\mathcal{X}$, $\mu\ll\nu$,
\begin{align}
|\mu-\nu|:=\int|{\rm d}\mu-{\rm d}\nu|,
\end{align}
and\footnote{
Following established usage in measure theory,  we use $\lambda(\imath_{\mu\|\nu}>\log\gamma)$ as an abbreviation of $\lambda(\{x\colon\imath_{\mu\|\nu}(x)>\log\gamma\})$ for an arbitrary signed measure $\lambda$.}
  \begin{align}
  E_{\gamma}(\mu\|\nu)&
  :=\sup_{\mathcal{A}}\{\mu(\mathcal{A})-\gamma \nu(\mathcal{A})\}
  \label{e_egammadefn}
  \\
  &=(\mu-\gamma\nu)(\imath_{\mu\|\nu}>\log\gamma)
  \\
  &=\frac{1}{2}(\mu-\gamma\nu)(\mathcal{X})
  -\frac{1}{2}(\mu-\gamma\nu)(\imath_{\mu\|\nu}\le\log\gamma)
  \nonumber\\
  &\quad+\frac{1}{2}(\mu-\gamma\nu)(\imath_{\mu\|\nu}>\log\gamma)
  \\
  &=\frac{1}{2}\mu(\mathcal{X})-\frac{\gamma}{2} \nu(\mathcal{X})+\frac{1}{2}|\mu-\gamma \nu|.
  \label{e_37}
  \end{align}
\end{defn}
Note that $E_1(P\|\mu)\neq\frac{1}{2}|P-\mu|$ when $\mu$ is not a probability measure.

The following result is a generalization of the $\gamma_1=1$ case of \eqref{e19} to unnormalized measures, and the proof is immediate from the definition of \eqref{e_egammadefn} and the subadditivity of the sup operator.
\begin{prop}\label{prop_tri}
Triangle inequality: if $\mu$, $\nu$ and $\theta$ are nonnegative finite measures on the same alphabet, $\mu\ll\theta\ll\nu$, then
\begin{align}
E_{\gamma}(\mu\|\nu)\le E_1(\mu\|\theta)+E_{\gamma}(\theta\|\nu).
\label{e38}
\end{align}
\end{prop}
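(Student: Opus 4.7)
The plan is to exploit the variational characterization $E_{\gamma}(\mu\|\nu)=\sup_{\mathcal{A}}\{\mu(\mathcal{A})-\gamma\nu(\mathcal{A})\}$ given by \eqref{e_egammadefn}, which holds uniformly for arbitrary nonnegative finite measures (its proof is just the Neyman-Pearson argument applied to the likelihood ratio test $\{\imath_{\mu\|\nu}>\log\gamma\}$ and makes no use of normalization). With all three terms of \eqref{e38} written as suprema of set functions of the same form, the inequality will fall out of the subadditivity of the supremum.

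Concretely, I would fix an arbitrary measurable $\mathcal{A}$ and insert the intermediate measure $\theta$:
\begin{align}
\mu(\mathcal{A})-\gamma\nu(\mathcal{A})
= \bigl[\mu(\mathcal{A})-\theta(\mathcal{A})\bigr]
+\bigl[\theta(\mathcal{A})-\gamma\nu(\mathcal{A})\bigr].
\end{align}
Each bracket is then bounded above by the supremum of the same expression over all measurable sets, which are by definition $E_1(\mu\|\theta)$ and $E_{\gamma}(\theta\|\nu)$ respectively. Since the resulting upper bound is independent of $\mathcal{A}$, taking the supremum on the left-hand side gives \eqref{e38}. This is structurally identical to the proof of \eqref{e19} in part~3) of Proposition~\ref{prop3}, specialized to $\gamma_1=1$ and transposed from probability measures to nonnegative finite ones.

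There is essentially no obstacle here. The absolute continuity chain $\mu\ll\theta\ll\nu$ is only needed to guarantee that the alternative Radon-Nikodym formulas for $E_{\gamma}$ are well defined; the sup-of-set-differences representation used in the argument above never touches densities, so nothing breaks when the measures are unnormalized. One might worry that generalizing beyond probability measures might introduce a sign issue (e.g.\ a negative value of $\mu(\mathcal{X})-\gamma\nu(\mathcal{X})$), but the sup formulation automatically includes $\mathcal{A}=\emptyset$ and hence always yields a nonnegative quantity, so this poses no difficulty either.
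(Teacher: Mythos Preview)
Your proposal is correct and is exactly the approach the paper takes: the paper simply states that the result is immediate from the definition \eqref{e_egammadefn} and the subadditivity of the sup operator, which is precisely the argument you spell out.
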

The R\'{e}nyi divergence, defined as follows, is not an $f$-divergence, but is a monotonic function of the Hellinger distance \cite{polyanskiy2010arimoto}.
\begin{defn}[R\'{e}nyi $\alpha$-divergence]
Let $\mu$ be a nonnegative finite measure and $Q$ a probability measure on $\mathcal{X}$, $\mu\ll Q$, $X\sim Q$. For $\alpha\in(0,1)\cup(1,+\infty)$,
  \begin{align}
  D_{\alpha}(\mu\|Q)
  :=\frac{1}{\alpha-1}\log\mathbb{E}\left[\left(\frac{{\rm d}\mu}{{\rm d}Q}(X)\right)^{\alpha}\right],
  \end{align}
  and
  \begin{align}
  D_0(\mu\|Q)&:=\log\frac{1}{Q(\imath_{\mu\|Q}>-\infty)},
  \\
  D_{\infty}(\mu\|Q)&:=\mu\mbox{-}\esssup \imath_{\mu\|Q}
  \end{align}
  which agree with the the limits as $\alpha\downarrow 0$ and $\alpha\uparrow\infty$.
\end{defn}
$D_{\alpha}(P\|Q)$ is non-negative and monotonically increasing in $\alpha$.
More properties about the R\'enyi divergence can be found, e.g.~in \cite{renyidef}\cite{erven2014}\cite{verdubook}.
\begin{defn}[smooth R\'{e}nyi $\alpha$-divergence]\label{defn8}
For $\alpha\in(0,1)$, $\epsilon\in(0,1)$,
  \begin{align}
  D_{\alpha}^{+\epsilon}(P\|Q):=\sup_{\mu\in B^{\epsilon}(P)}D_{\alpha}(\mu\|Q);\label{e_prop15}
  \end{align}
  for $\alpha\in (1,\infty]$, $\epsilon\in(0,1)$,
  \begin{align}
  D_{\alpha}^{-\epsilon}(P\|Q):=\inf_{\mu\in B^{\epsilon}(P)}D_{\alpha}(\mu\|Q),\label{e_prop16}
  \end{align}
  where $B^{\epsilon}(P):=\{\mu\textrm{ nonnegative}:~E_1(P\|\mu)\le\epsilon\}$ is the $\epsilon$-neighborhood of $P$ in $E_1$.
\end{defn}
\begin{rem}
Our smooth $\infty$-divergence agrees with the smooth max R\'{e}nyi divergence in \cite{warsi2013one}, although the definitions look different. However, our smooth
$0$-divergence is different from the smooth min R\'{e}nyi divergences in \cite{wang2009simple} and \cite[Definition~1]{warsi2013one} except for non-atomic measures.\footnote{With the definition of smooth min R\'{e}nyi divergence in \cite{wang2009simple} and \cite[Definition~1]{warsi2013one},
Proposition~\ref{prop_1}-\ref{ptD0}) would hold with the $\beta_{\alpha}$ as defined in \eqref{e_beta2} rather than \eqref{eq46}.}
\end{rem}
\begin{rem}
The smooth R\'{e}nyi divergence is a natural extension of the smooth R\'{e}nyi entropy $H^{\epsilon}_{\alpha}$ defined in \cite{renner2005simple} (which can be viewed as a special case where the reference measure is the counting measure). In \cite{warsi2013one} the smooth min and max R\'{e}nyi divergences are introduced, which correspond to the $\alpha=0$ and $\alpha=+\infty$ cases of Definition~\ref{defn8}.
Moreover, we have introduced $+/-$ in the notation to emphasize the difference between the two possible ways of smoothing in \eqref{e_prop15} and \eqref{e_prop16}.
\end{rem}
The following quantity, which characterizes the binary hypothesis testing error, is a $g$-divergence but not a monotonic function of any $f$-divergence \cite{polyanskiy2010arimoto}. We will see in Proposition~\ref{prop_1} that it is a monotonic function of the $0$-smooth R\'{e}nyi divergence.
\begin{defn}
For nonnegative finite measures $\mu$ and $\nu$ on $\mathcal{X}$, define
\begin{align}
\beta_{\alpha}(\mu,\nu):=
\min_{\mathcal{A}:\mu(\mathcal{A})\ge\alpha}
\nu(\mathcal{A}).
\label{eq46}
\end{align}
\end{defn}
\begin{rem}\label{rem12}
In the literature, the definition of $\beta_{\alpha}(P,Q)$
is usually restricted to probability measures and allows randomized tests:
\begin{align}
\beta_{\alpha}(P_W,Q_W):=\min
\int P_{Z|W}(1|w){\rm d}Q_W(w)
\label{e_beta2}
\end{align}
where the minimization is over all random transformations $P_{Z|W}\colon\mathcal{Z}\to\{0,1\}$ such that
\begin{align}
\int P_{Z|W}(1|w){\rm d}P_W(w)\ge\alpha.
\end{align}
In contrast to $E_{\gamma}$, allowing randomization in the definition can change the value of $\beta_{\alpha}$ except for non-atomic measures. Nevertheless, many important properties of $\beta_{\alpha}$ are not affected by this difference.
\end{rem}

We conclude this section with some inequalities relating those distance measures we have discussed, which will be used to establish the asymptotic equivalence in the large deviation regime in Section~\ref{sec_univ}.
\begin{prop}\label{prop_1}
Suppose $\mu$ is a finite nonnegative measure and $P$ and $Q$ are probability measures, all on $\mathcal{X}$, $X\sim P$, $\epsilon\in(0,1)$, and $\gamma\ge1$.
\begin{enumerate}
   \item \label{pt13_1}For $a>1$,
   \begin{align}
   E_{\gamma}(P\|Q)\le
   \bar{F}_{\gamma}(P\|Q)\le \frac{a}{a-1}E_{\frac{\gamma}{a}}(P\|Q).
   \label{e47}
   \end{align}
   \item \label{pt_mono}$D^{+\epsilon}_{\alpha}(P\|Q)$ is increasing in $\alpha\in[0,1]$ and $D^{-\epsilon}_{\alpha}(P\|Q)$ is increasing in $\alpha\in[1,\infty]$.
   \item If $P$ is a probability measure, then
   \begin{align}
   D(P\|Q)
      &\le\int_{0}^{\infty}
      \mathbb{P}[\imath_{P\|Q}(X)>\tau]\,
      {\rm d}\tau;\label{e_prop20}
      \\
   D(P\|Q)
  &\ge  E_{\gamma}(P\|Q)\log\gamma-2e^{-1}\log e.
  \label{e_prop_21}
   \end{align}

  \item If $\alpha\in[0,1)$ then
      \footnote{The special case of \eqref{e_prop21} for $\alpha=\frac{1}{2}$, $\mu(\mathcal{X})=1$ and $\gamma=1$ is equivalent to a well-known bound on a quantity called \emph{fidelity} using total variation distance, see \cite{wilde2013quantum}.}
  \begin{align}
D_{\alpha}(\mu\|Q)\le \log \gamma-
\frac{1}{1-\alpha}
\log\left(\mu(\mathcal{X})-E_{\gamma}(\mu\|Q)\right).
\label{e_prop21}
\end{align}
If $\alpha\in(1,\infty)$ then
  \begin{align}
D_{\alpha}(\mu\|Q)\ge \log \gamma+
\frac{1}{\alpha-1}
\log
E_{\gamma}(\mu\|Q).
\label{e_prop21_1}
\end{align}

  \item Suppose $\alpha\in[0,1)$, $E_{\gamma}(P\|Q)<1-\epsilon$, then
\begin{align}
D^{+\epsilon}_{\alpha}(P\|Q)\le \log \gamma-
\frac{1}{1-\alpha}
\log\left(1-\epsilon-E_{\gamma}(P\|Q)\right).
\label{e_prop23}
\end{align}
Suppose $\alpha\in(1,\infty]$, $E_{\gamma}(P\|Q)>\epsilon$, then
\begin{align}
D^{-\epsilon}_{\alpha}(P\|Q)\ge \log \gamma+
\frac{1}{\alpha-1}
\log
\left(E_{\gamma}(P\|Q)-\epsilon\right).
\label{e_prop23_1}
\end{align}

  \item \label{pt_13_6}$D_{\infty}^{-\epsilon}(P\|Q)\le \log\gamma\Leftrightarrow E_{\gamma}(P\|Q)\le\epsilon.
$ That is, $D^{-\epsilon}_{\infty}(P\|Q)=\log\inf\{\gamma:E_{\gamma}(P\|Q)\le\epsilon\}.
      $

  \item 
  \label{ptD0}
   $D_0^{+\epsilon}(P\|Q)=-\log\beta_{1-\epsilon}(P,Q)$.

  \item Fix $\tau\in\mathbb{R}$ and $\epsilon\ge\mathbb{P}[\imath_{P\|Q}(X)\le\tau]$, where $X\sim P$ and $P$ is non-atomic. Then
      \begin{align}
      D_0^{+\epsilon}(P\|Q)\ge \tau+\log\frac{1}{1-\epsilon}.
      \label{e_53}
      \end{align}
      Moreover, $D_0^{+\epsilon}(P\|Q)\ge \tau$ holds when $P$ is not necessarily non-atomic.
\end{enumerate}
\end{prop}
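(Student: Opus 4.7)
The plan is to invoke part~\ref{ptD0}) of this proposition, which identifies $D_0^{+\epsilon}(P\|Q)=-\log\beta_{1-\epsilon}(P,Q)$, and then upper-bound $\beta_{1-\epsilon}(P,Q)$ by exhibiting a single admissible test set. Let $\mathcal{B}:=\{x\in\mathcal{X}\colon\imath_{P\|Q}(x)>\tau\}$, with the convention $\imath_{P\|Q}(x)=+\infty$ on any $P$-positive, $Q$-null set (so that $\mathcal{B}$ absorbs the singular part of $P$). The hypothesis $\epsilon\ge\mathbb{P}[\imath_{P\|Q}(X)\le\tau]$ immediately gives $P(\mathcal{B})\ge 1-\epsilon$, so $\mathcal{B}$ is admissible in the minimization defining $\beta_{1-\epsilon}(P,Q)$.

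For the unconditional claim, I would simply integrate the likelihood ratio on $\mathcal{B}$: since ${\rm d}Q/{\rm d}P\le e^{-\tau}$ wherever it is defined on $\mathcal{B}$, and the singular part (if any) contributes zero to $Q(\mathcal{B})$, we obtain $Q(\mathcal{B})\le e^{-\tau}P(\mathcal{B})\le e^{-\tau}$. Combining with part~\ref{ptD0}) yields $D_0^{+\epsilon}(P\|Q)\ge\tau$.

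For the sharper bound under non-atomicity, the slack above (where $P(\mathcal{B})$ may exceed $1-\epsilon$) has to be removed. I would invoke the intermediate-value property of non-atomic measures (Sierpi\'{n}ski's theorem): a non-atomic probability measure restricted to any measurable set $\mathcal{B}$ attains every value in $[0,P(\mathcal{B})]$ on measurable subsets. This yields $\mathcal{A}\subseteq\mathcal{B}$ with $P(\mathcal{A})=1-\epsilon$ exactly. The same likelihood-ratio argument then gives $Q(\mathcal{A})\le e^{-\tau}P(\mathcal{A})=(1-\epsilon)e^{-\tau}$, and hence $D_0^{+\epsilon}(P\|Q)\ge\tau+\log\tfrac{1}{1-\epsilon}$ via part~\ref{ptD0}).

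The main technical ingredient is this trimming step, which is exactly where non-atomicity is used; without it one cannot, in general, shave off the excess $P$-mass inside $\mathcal{B}$ (consider $P$ concentrated on a single atom lying in $\mathcal{B}$), and this is precisely why the additive bonus $\log\frac{1}{1-\epsilon}$ is lost in the general statement.
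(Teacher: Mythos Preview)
Your proof is correct and matches the paper's approach essentially line for line: both invoke part~\ref{ptD0}) to reduce to bounding $\beta_{1-\epsilon}(P,Q)$, use the superlevel set $\{x:\imath_{P\|Q}(x)>\tau\}$ as the test set (yielding $Q\le e^{-\tau}$ in general), and then exploit non-atomicity to trim the set down to $P$-mass exactly $1-\epsilon$ for the sharper bound. Your explicit mention of Sierpi\'{n}ski's theorem and the handling of the singular part are a bit more detailed than the paper, but the argument is the same.
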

\begin{proof}
See Appendix~\ref{pf_prop_1}.
\end{proof}
\subsection{The Resolvability Problem}
The setup in Figure~\ref{fig2} is the same as in the original paper on channel resolvability under total variation distance \cite{han1993approximation}.
Given a random transformation $Q_{X|U}$ and a target distribution ${{\pi}}_X$, we wish to minimize the size $M$ of a codebook $c^M=(c_m)_{m=1}^M$ such that when the codewords are equiprobably selected,
the output distribution
\begin{align}
Q_{X[c^M]}:=\frac{1}{M}\sum_{m=1}^M Q_{X|U=c_m}
\label{e_pxc}
\end{align}
approximates ${{\pi}}_X$. The difference from \cite{han1993approximation} is that we use $E_{\gamma}$ (and other metrics) to measure the level of the approximation.
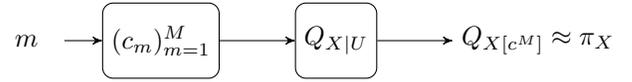
\begin{figure}[h!]
  \centering
\begin{tikzpicture}
[node distance=0.5cm,minimum height=10mm,minimum width=10mm,arw/.style={->,>=stealth'}]
  \node[rectangle,rounded corners] (SD) {$m$};
  \node[rectangle,draw,rounded corners] (PX) [right =of SD] {$(c_m)_{m=1}^M$};
  \node[rectangle,draw,rounded corners] (PZX) [right =1cm of PX] {$Q_{X|U}$};
  \node[rectangle,rounded corners] (PZ) [right =1cm of PZX] {$Q_{X[c^M]}\approx {{\pi}}_X$};

  \draw [arw] (SD) to node[midway,above]{} (PX);
  \draw [arw] (PX) to node[midway,above]{} (PZX);
  \draw [arw] (PZX) to node[midway,above]{} (PZ);
\end{tikzpicture}
\caption{Setup for channel resolvability.}
\label{fig2}
\end{figure}

The fundamental \emph{one-shot} tradeoff is the minimum $M$ required for a prespecified degree of approximation.
One-shot bounds are general in that no structural assumptions on either the random transformation or the target distribution are imposed.
The corresponding asymptotic results can usually be recovered quite simply from the one-shot bounds using, say, the law of large numbers in the memoryless case.
Asymptotic limits are of interest because of the compactness of the expressions,
and because good one-shot bounds are not always known, especially in the converse parts (see for example the converse of resolvability in Section~\ref{sec_tens}).
Unless otherwise stated, the alphabets considered in this paper are not restricted to be finite or countable.
This applies to all the one-shot results in this paper.
Finite alphabets are only assumed in the converses in Sections~\ref{sec_tens} and \ref{sec_worst}, for which we restrict to finite input alphabets or even to discrete memoryless channels (DMC)\footnote{A DMC is a stationary memoryless channel whose input and output alphabets are finite, which is denoted by the corresponding per-letter random transformation (such as $Q_{\sf X|U}$) in this paper.}.

Next, we define the achievable regions in the \emph{general asymptotic setting}
(when sources and channels are arbitrary, see \cite{han1993approximation}\cite{hayashi2006general}) as well as the case of stationary memoryless channels and memoryless outputs. Boldface letters such as $\mb{X}$ denote a general sequence of random variables $\left(X^n\right)_{n=1}^{\infty}$, and sanserif letters such as $\sf X$ denote the generic distributions in iid settings.
\begin{defn}\label{defn14}
Given a channel\footnote{In this setting a ``channel'' refers to a sequence of random transformations.} $(Q_{X^n|U^n})_{n=1}^{\infty}$ and a sequence of target distributions $({{\pi}}_{X^n})_{n=1}^{\infty}$, the triple $(G,R,\mb{X})$ is $\epsilon$-\emph{achievable} ($0<\epsilon<1$) if there exists $(c^{M_n})_{n=1}^{\infty}$,
where $c^{M_n}:=(c_1,\dots,c_{M_n})$ and $c_m\in \mathcal{U}^n$ for each $m=1,\dots,M_n$, and $(\gamma_n)_{n=1}^{\infty}$, so that
\begin{align}
\limsup_{n\to\infty}\frac{1}{n}\log M_n
&\le R;
\\
\limsup_{n\to\infty}\frac{1}{n}\log\gamma_n
&\le G;\label{e65}
\\
\sup_n E_{\gamma_n}(Q_{X^n[c^{M_n}]}\|
{{\pi}}_{X^n})
&\le \epsilon.\label{e66}
\end{align}
Moreover, $(G,R,\mb{X})$ is said to be \emph{achievable} if it is $\epsilon$-achievable for all $0<\epsilon<1$.
Define the asymptotic fundamental limits\footnote{We can write $\min$ instead of $\inf$ in \eqref{e53} and \eqref{e_54} since for fixed $\mb{X}$ the set of $(G,R)$ such that $(G,R,\mb{X})$ is $\epsilon$-achievable is necessarily closed; similarly in \eqref{e_55} and \eqref{e_56}.}
\begin{align}
G_{\epsilon}(R,\mb{X})&:=\min\{g:(g,R,\mb{X})\textrm{ is $\epsilon$-achievable}\};
\\
G(R,\mb{X})&:=\sup_{\epsilon>0}G_{\epsilon}(R,\mb{X}),
\label{e53}
\end{align}
and
\begin{align}
S_{\epsilon}(G,\mb{X})&:=\min\{r:(G,r,\mb{X})\textrm{ is $\epsilon$-achievable}\};
\label{e_54}
\\
S(G,\mb{X})&:=\sup_{\epsilon>0}S_{\epsilon}(G,\mb{X}),
\end{align}
which, in keeping with \cite{han1993approximation}, we refer to as the resolvability function.
In the special case of $Q_{X^n|U^n}=Q_{\sf X|U}^{\otimes n}$ and target ${{\pi}}_{X^n}={{\pi}}_{\sf X}^{\otimes n}$, we may write the quantities in \eqref{e53} and \eqref{e_54} as $G(R,{{\pi}}_{\sf X})$ and $S(G,{{\pi}}_{\sf X})$.
\end{defn}
Note that by \cite{han1993approximation}\cite{watanabe2014strong}, $(0,R,{{\pi}}_{\sf X})$ is achievable if and only if $R\ge I(P_{\sf U},Q_{\sf X|U})$ for some $P_{\sf U}$ satisfying $P_{\sf U}\to Q_{\sf X|U}\to {{\pi}}_{\sf X}$.

In Section~\ref{sec_univ} we show that the same exponent \eqref{e53} is obtained with other distance measures.

In addition to approximation of a given target distribution, \cite{han1993approximation} also considered the minimum rate of randomness needed to approximate a \emph{worst-case} output distribution under total variation distance, which has implications for identification coding.
The $E_{\gamma}$ version of this problem amounts to finding the achievable pairs defined as follows:
\begin{defn}\label{defn_worst}
Given a channel $(Q_{X^n|U^n})_{n=1}^{\infty}$, the pair
$(G,R)$ is $\epsilon$-\emph{achievable} ($0<\epsilon<1$) if for any sequence of \emph{input} distributions $P_{U^n}$,
the triple $(G,R,\mb{X})$ is $\epsilon$-achievable,
where $\mb{X}=(X^n)_{n=1}^{\infty}$ and $P_{U^n}\to P_{X^n|U^n}
\to P_{X^n}$. Moreover, $(G,R)$ is achievable if it is $\epsilon$-\emph{achievable} for all $0<\epsilon<1$.
We define the asymptotic fundamental limits
\begin{align}\label{e_55}
G(R):=\min\{g:(g,R)\textrm{ is achievable}\}
\end{align}
and the \emph{resolvability functions}
\begin{align}
S_{\epsilon}(G)&:=\min\{r:(G,r)\textrm{ is $\epsilon$-achievable}\};
\\
S(G)&:=\sup_{\epsilon>0}S_{\epsilon}(G).
\label{e_56}
\end{align}
\end{defn}
Note that Definition~\ref{defn_worst} considers output distributions that correspond to some input to the channel, which is a subclass of all the distributions on the output alphabet.
In contrast, the memoryless output distribution in Definition~\ref{defn14} need not correspond to any input distribution. The reason for this dichotomy will be explained in Remark~\ref{rem_27} and at the beginning of Section~\ref{sec_worst}.

A useful property is that the approximation error \eqref{e66} actually converges \emph{uniformly}. A similar observation was made in the proof of \cite[Lemma~6]{han1993approximation} in the context of resolvability in total variation distance.
\begin{prop}\label{prop16}
If $(G,R)$ is $\epsilon$-achievable for $(Q_{X^n|U^n})_{n=1}^{\infty}$, then there exists $(\gamma_n)_{n=1}^{\infty}$ and $(M_n)_{n=1}^{\infty}$ such that
\begin{align}
\limsup_{n\to\infty}\frac{1}{n}\log\gamma_n&\le G;
\\
\limsup_{n\to\infty}\frac{1}{n}\log M_n&\le R;
\\
\sup_n \sup_{P_{U^n}}
\inf_{c^{M_n}}
E_{\gamma_n}(Q_{X^{n}[c^{M_n}]}\|Q_{X^n})
&\le \epsilon,
\end{align}
where $Q_{U^n}\to Q_{X^n|U^n}\to Q_{X^n}$.
\end{prop}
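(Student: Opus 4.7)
The strategy is a proof by contradiction, followed by a diagonal construction. For each $n$ and $\gamma\ge 1$, introduce the worst-case minimum code size
\[
\overline{M}_n(\gamma) \;:=\; \sup_{P_{U^n}}\,\inf\!\Bigl\{M\ge 1 \,:\, \inf_{c^M} E_{\gamma}(Q_{X^n[c^M]}\|Q_{X^n}) \le \epsilon\Bigr\},
\]
where the outer supremum ranges over all input distributions on $\mathcal{U}^n$ and $Q_{X^n}$ is the induced output distribution $Q_{U^n}\to Q_{X^n|U^n}\to Q_{X^n}$. The whole proof reduces to the key rate bound: for every $\delta>0$,
\[
\limsup_{n\to\infty}\,\tfrac{1}{n}\log\overline{M}_n\bigl(e^{n(G+\delta)}\bigr) \;\le\; R.
\]

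To establish this key bound, suppose for contradiction that it fails. Then there exist $\delta,\delta'>0$ and a subsequence $n_k\uparrow\infty$ with $\overline{M}_{n_k}(e^{n_k(G+\delta)})>e^{n_k(R+\delta')}$. By definition of the supremum we can choose, for each $k$, an input distribution $P^*_{U^{n_k}}$ such that \emph{every} codebook $c^M$ of size $M\le e^{n_k(R+\delta'/2)}$ violates $E_{e^{n_k(G+\delta)}}(Q_{X^{n_k}[c^M]}\|Q_{X^{n_k}})\le \epsilon$. Splice these $P^*_{U^{n_k}}$ into a full input sequence $\{P_{U^n}\}_{n\ge 1}$, choosing the off-subsequence terms arbitrarily. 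Applying the $\epsilon$-achievability hypothesis of $(G,R)$ to this sequence produces $(\gamma'_n,M'_n)$ with $\limsup\tfrac{1}{n}\log\gamma'_n\le G$, $\limsup\tfrac{1}{n}\log M'_n\le R$ and codebooks $c^{M'_n}$ satisfying $\sup_n E_{\gamma'_n}(Q_{X^n[c^{M'_n}]}\|Q_{X^n})\le\epsilon$. For all sufficiently large $k$, the limsup estimates give $\gamma'_{n_k}\le e^{n_k(G+\delta)}$ and $M'_{n_k}\le e^{n_k(R+\delta'/2)}$, and by the monotonicity $\gamma\mapsto E_{\gamma}(\cdot\|\cdot)$ from Proposition~\ref{prop3} the code $c^{M'_{n_k}}$ then achieves $E_{e^{n_k(G+\delta)}}(\cdot\|\cdot)\le\epsilon$ at input $P^*_{U^{n_k}}$, contradicting the very choice of $P^*_{U^{n_k}}$.

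Given the key bound, the desired $(\gamma_n,M_n)$ are built by diagonalization: pick any $\delta_k\downarrow 0$ and thresholds $N_k\uparrow\infty$ such that $\overline{M}_n(e^{n(G+\delta_k)})\le e^{n(R+\delta_k)}$ for all $n\ge N_k$; for $n\in[N_k,N_{k+1})$ set $\gamma_n:=e^{n(G+\delta_k)}$ and $M_n:=\lceil e^{n(R+\delta_k)}\rceil$, with arbitrary values for the initial segment. The limsup conditions on $\gamma_n$ and $M_n$ follow from $\delta_k\to 0$. The only residual obstacle is a bookkeeping point: the infimum in the conclusion is over codebooks of \emph{exactly} $M_n$ codewords, whereas the per-input witness code from the preceding argument may have some smaller size $M'\le \overline{M}_n(\gamma_n)\le M_n$. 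This is handled by padding via codeword repetition; convexity of $E_{\gamma}$ in its first argument (Proposition~\ref{prop3}) shows that full duplication leaves the induced mixture unchanged and that the $O(M'/M_n)$ residual from the incomplete final copy can be absorbed by choosing $M_n$ slightly larger than $\overline{M}_n(\gamma_n)$, without affecting the exponential rate $R$.
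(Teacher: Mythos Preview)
Your approach is essentially the same as the paper's: both fix $G'>G$, $R'>R$ and argue by contradiction that $\sup_{P_{U^n}}\inf_{c^{M_n}}E_{\exp(nG')}(\cdot\|\cdot)\le\epsilon$ holds for all large $n$ with $M_n=\exp(nR')$, by splicing worst-case input distributions into a full sequence, applying the $\epsilon$-achievability hypothesis, and invoking the monotonicity of $E_\gamma$ in $\gamma$; both then pass from $(G',R')$ to $(G,R)$ by a closure/diagonal argument. Your write-up makes the diagonalization more explicit than the paper's one-line appeal to closedness.

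One remark on your padding step. Given a witness codebook of size $M'\le M_n$ with error $\le\epsilon$, cyclic repetition to size $M_n$ yields error at most $\epsilon+\frac{M_n\bmod M'}{M_n}\le\epsilon+\frac{M'}{M_n}$, and this residual cannot literally be ``absorbed'' into the $\le\epsilon$ conclusion no matter how much larger you take $M_n$: you get $\epsilon+o(1)$, not $\epsilon$. (Indeed $\inf_{c^M}E_\gamma(\cdot\|Q_X)$ is \emph{not} monotone in $M$: for the identity channel on $\{0,1\}$ with $Q_X$ equiprobable and $\gamma=1$, the infimum is $0$ at $M=2$ but $1/6$ at $M=3$.) The paper's proof elides this same point, tacitly treating a size-$M_n''<\exp(nR')$ codebook as a size-$\exp(nR')$ one. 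The natural resolution is to read $\inf_{c^{M_n}}$ as ranging over codebooks of size \emph{at most} $M_n$, which is the operationally meaningful interpretation (codebook size is a resource upper bound) and under which both arguments go through without any padding.
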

\begin{proof}
Fix arbitrary $G'>G$ and $R'>R$.
Observe that the sequence
\begin{align}
\sup_{Q_{U^n}}
\inf_{c^{M_n}}
E_{\exp(nG')}(Q_{X^{n}[c^{M_n}]}\|Q_{X^n})
\label{e64}
\end{align}
where $M_n:=\exp(nR')$,
must be upper-bounded by $\epsilon$ for large enough $n$. For if otherwise, there would be a sequence $(Q_{U^n})_{n=1}^{\infty}$ such that
the infimum in \eqref{e64}
is not upper-bounded by $\epsilon$ for large enough $n$, which is a contradiction since we can find $(c^{M_n})_{n=1}^{\infty}$ and $(\gamma_n)_{n=1}^{\infty}$ in Definition~\ref{defn14} such that $M_n<\exp(nR')$ and $\gamma_n\le \exp(nG')$ for $n$ large enough, and apply the monotonicity of $E_{\gamma}$ in $\gamma$. Finally, since $(G',R')$ can be arbitrarily close to $(G,R)$ and the $\epsilon$-achievable region is a closed set, we conclude that $(G,R)$ is $\epsilon$-achievable.
\end{proof}

\section{Source Resolvability}\label{sec_source}
This section is devoted to the source resolvability problem \cite{han1993approximation}, which can be viewed as channel resolvability with identity channels.
For source resolvability, we derive simple and tight one-shot bounds, the ideas of which are not easily extendable to the general channel resolvability problem.
To get a better grasp of the source resolvability problem, recall the definition of $M$-type distributions in \cite{han1993approximation}.
\begin{defn}[\cite{han1993approximation}]
A distribution $P$ on $\mathcal{X}$ is said to be an $M$-type ($M\in\mathbb{N}$) if for each $x\in\mathcal{X}$, the probability $P(x)$ is a multiple of $\frac{1}{M}$.
\end{defn}
Clearly, $P$ is an $M$-type if and only if there exists an identity random transformation $Q_{X|U}$ and a codebook $(c_m)_{m=1}^M$ such that $Q_{X[c^M]}=P$. Hence the source $E_{\gamma}$-resolvability can also be viewed as minimizing $M$ such that there exists an $M$-type distribution that approximates a given source distribution in $E_{\gamma}$.

One of the advantages of studying the source resolvability problem is that, in contrast to the general channel resolvability problem, we will be able to give a general one-shot converse.
Moreover, as noted in \cite{han1993approximation}, source resolvability is intimately connected to almost-lossless source coding and random number generation.
Now in the general case of source approximation under $E_{\gamma}$, suppose a random variable $X$ can be approximated by an $M$-type random variable $\hat{X}$ in the sense that
\begin{align}\label{e73}
E_{\gamma}(P_{\hat{X}}\|P_X)\le \delta
\end{align}
for some $\gamma,\delta>0$. Then we can approximately generate $X$ from an equiprobable $U_M\in\{1,\dots,M\}$ as $\hat{X}=\phi(U_M)$ with some deterministic function $\phi$. Again, \eqref{e73} has the operational meaning that the probability $P_{\hat{X}}(\mathcal{E})$ of any error event $\mathcal{E}$ associated with the random number is guaranteed to be small, say $\le 10^{-2}$, provided that the error probability under the target distribution is \emph{very} small, that is, provided that $P_{X}(\mathcal{E})\le \frac{1}{\gamma}\left(10^{-2}
-E_{\gamma}(P_{\hat{X}}\|P_X)\right)$.

\subsection{One-shot Achievability Bound}
We develop a simple one-shot achievability bound without using random coding, which is asymptotically tight for a general sequences of sources.\footnote{In contrast, the result obtained by particularizing the bound for channel resolvability to identity channels in Section~\ref{sec_ach} is not (Remark~\ref{rem20}), which is why source resolvability is discussed separately.}

For a discrete random variable $X\sim P$ on $\mathcal{X}$, define the \emph{information} for $x\in\mathcal{X}$
\begin{align}
\imath_{X}(x):=\log\frac{1}{P(x)}.
\end{align}
\begin{thm}\label{thm_sa}
For a discrete $X\sim P_X$, integer $M>0$, and $\gamma\ge 1$, there exists an $M$-type distribution $P_{\hat{X}}$ such that
\footnote{
For $a\in\mathbb{R}$, we use the notation
$
[a]^+:=\max\{a,0\}.
$}
\begin{align}
E_{\gamma}(P_{\hat{X}}\|P_X)\le \left[1-\frac{\gamma}{2}\mathbb{P}\left[\imath_X(X)\le\log \gamma M\right]\right]^+\label{e_73}
\end{align}
\end{thm}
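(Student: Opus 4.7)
The plan is to construct $P_{\hat{X}}$ explicitly by a quantize-then-redistribute procedure, working with the discrete-alphabet identity $E_\gamma(P\|Q) = \sum_x [P(x) - \gamma Q(x)]^+$ that follows from \eqref{e_prop3}. I would first introduce $\mathcal{T} := \{x : P_X(x) \ge 1/(\gamma M)\}$, so $p := P_X(\mathcal{T}) = \mathbb{P}[\imath_X(X) \le \log \gamma M]$, and for each $x$ set $a_x := \lfloor M \gamma P_X(x) \rfloor$. By construction $a_x \ge 1$ exactly on $\mathcal{T}$, and $a_x/M \le \gamma P_X(x)$ pointwise.

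The heart of the argument is a single rounding estimate: $\lfloor y \rfloor \ge y/2$ for every $y \ge 1$. Applied on $\mathcal{T}$, this gives $a_x \ge \tfrac{1}{2} M\gamma P_X(x)$, and summing yields $K := \sum_x a_x \ge \tfrac{1}{2} M\gamma p$. This is exactly where the factor $1/2$ in \eqref{e_73} comes from.

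The construction of the $M$-type then splits on the sign of $M - K$. If $K \ge M$, pick integers $0 \le k_x \le a_x$ with $\sum_x k_x = M$ and put $P_{\hat{X}}(x) := k_x/M$; since $k_x/M \le \gamma P_X(x)$ pointwise, we get $E_\gamma(P_{\hat{X}}\|P_X) = 0$. If $K < M$, choose $k_x = a_x + \ell_x$ with integers $\ell_x \ge 0$ summing to $M - K$; then $a_x/M - \gamma P_X(x) \le 0$ forces $[k_x/M - \gamma P_X(x)]^+ \le \ell_x/M$, and summing gives $E_\gamma(P_{\hat{X}}\|P_X) \le (M-K)/M \le 1 - \tfrac{1}{2}\gamma p$. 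In either case the bound in \eqref{e_73} follows.

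The main obstacle is selecting the right construction. Natural alternatives---rounding $M P_X(x)$ directly, or drawing $M$ i.i.d.\ samples from $P_X$---do not produce the tight constant $1/2$. The key twist is to quantize $\gamma P_X$ rather than $P_X$, because then the pointwise bound $a_x \le M\gamma P_X(x)$ absorbs \emph{all} of the $E_\gamma$ excess on the quantized atoms and isolates it in the residual mass $M - K$, which the rounding inequality $\lfloor y\rfloor \ge y/2$ controls with exactly the desired constant.
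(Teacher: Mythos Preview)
Your proof is correct and follows essentially the same route as the paper: both quantize $\gamma P_X$ via $a_x = \lfloor M\gamma P_X(x)\rfloor$, invoke the rounding bound $\lfloor y\rfloor \ge y/2$ for $y \ge 1$ to lower-bound the total mass $K = \sum_x a_x$ by $\tfrac{1}{2}M\gamma\,\mathbb{P}[\imath_X(X)\le \log \gamma M]$, and then split on whether $K \ge M$. Your handling of the two cases is slightly tidier---you select integer counts $0\le k_x\le a_x$ summing to $M$ when $K\ge M$, and redistribute the residual $M-K$ within $\mathcal{X}$ when $K<M$---whereas the paper normalizes $\mu_{\hat X}$ in the first case and introduces an auxiliary symbol $\mathtt{e}\notin\mathcal{X}$ in the second; your version therefore keeps $P_{\hat X}$ an honest $M$-type supported on $\mathcal{X}$ in both cases.
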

\begin{proof}
Define an unnormalized measure $\mu_{\hat{X}}$ on $\mathcal{X}$ where
\begin{align}
\mu_{\hat{X}}(x):=\frac{1}{M}\left\lfloor M\gamma P_X(x)\right\rfloor
\end{align}
for each $x\in\mathcal{X}$.
Note that if $P_X(x)\ge\frac{1}{M\gamma}$, then $\mu_{\hat{X}}(x)\ge \frac{\gamma}{2} P_{X}(x)$; therefore
\begin{align}
\mu_{\hat{X}}(\mathcal{X})\ge \frac{\gamma}{2} \mathbb{P}\left[\imath_X(X)\le\log M\gamma\right].
\label{e75}
\end{align}
Moreover since $\mu_{\hat{X}}\le \gamma P_X$\footnote{For measures $\mu$ and $\nu$ on the same measurable space $(\mathcal{X},\mathscr{F})$, we write $\mu\le \nu$ if $\mu(\mathcal{A})\le \nu(\mathcal{A})$ for any $\mathcal{A}\in\mathscr{F}$.},
if $\mu_{\hat{X}}(\mathcal{X})>1$ then
$P_{\hat{X}}:=\frac{1}{\mu_{\hat{X}}(\mathcal{X})}\mu_{\hat{X}}$
is a probability measure such that
$P_{\hat{X}}\le \mu_{\hat{X}}$, hence $E_{\gamma}(P_{\hat{X}}\|P_X)
\le
E_{\gamma}(\mu_{\hat{X}}\|P_X)=0$, and so \eqref{e_73} holds;
otherwise, we introduce a symbol $\tt{e}$ not in $\mathcal{X}$, and set
\begin{align}
P_{\hat{X}}({\tt e})=1-\mu_{\hat{X}}(\mathcal{X})
\label{e_76}
\end{align}
so that $P_{\hat{X}}$ becomes a distribution on $\mathcal{X}\cup\{{\tt e}\}$. Then $P_X$ can also be viewed as a distribution on $\mathcal{X}\cup\{{\tt e}\}$, and we have
\begin{align}
E_{\gamma}(P_{\hat{X}}\|P_X)=P_{\hat{X}}({\tt e})
\end{align}
and so the result follows from \eqref{e75} and \eqref{e_76}.
\end{proof}
\begin{rem}
When $\gamma=1$, the bound \eqref{e_73} without the factor $1/2$ holds \cite{han2003information}.
\end{rem}

\subsection{One-shot Converse Bound}
The following result generalizes the bound \cite[Lemma~2.1.2]{han2003information} to values other than $\gamma = 1$.
\begin{thm}\label{thm_sc}
Suppose $P_X$ and an $M$-type distribution $P_{\hat{X}}$ are defined on the same discrete alphabet. Then for any $\gamma>0$, $a>0$,
\begin{align}
E_{\gamma}(P_{\hat{X}}\|P_X)\ge1-\gamma\mathbb{P}[\imath_X(X)<\log\gamma M+a]-\exp(-a).
\end{align}
\end{thm}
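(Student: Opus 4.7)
The strategy is to pick a single well-chosen test set $\mathcal{A}$ in the variational formula $E_{\gamma}(P_{\hat{X}}\|P_X) = \sup_{\mathcal{A}}\{P_{\hat{X}}(\mathcal{A}) - \gamma P_X(\mathcal{A})\}$ from \eqref{e_prop3}. Two features of the setup will be exploited: (i) because $P_{\hat{X}}$ is an $M$-type, every atom has mass at least $1/M$, so its support $\mathcal{S}$ has cardinality at most $M$; (ii) every $x$ outside the ``typical set'' $\mathcal{G}:=\{x:\imath_X(x)<\log\gamma M + a\}$ obeys the pointwise bound $P_X(x)\le 1/(\gamma M e^{a})$.

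Set $\mathcal{A}:=\mathcal{G}\cup\mathcal{S}$. Since $\supp(P_{\hat{X}})\subseteq\mathcal{A}$, feature (i) immediately gives $P_{\hat{X}}(\mathcal{A})=1$. For the $P_X$ side, write $\mathcal{A}$ as the disjoint union $\mathcal{G}\cup(\mathcal{S}\setminus\mathcal{G})$; by feature (ii), each $x\in\mathcal{S}\setminus\mathcal{G}\subseteq\mathcal{G}^c$ satisfies $P_X(x)\le 1/(\gamma M e^{a})$, and together with $|\mathcal{S}\setminus\mathcal{G}|\le|\mathcal{S}|\le M$ this yields
\begin{align}
P_X(\mathcal{A}) \le P_X(\mathcal{G}) + \frac{e^{-a}}{\gamma}.
\end{align}
Plugging both estimates back into \eqref{e_prop3} produces
\begin{align}
E_{\gamma}(P_{\hat{X}}\|P_X) \ge P_{\hat{X}}(\mathcal{A}) - \gamma P_X(\mathcal{A}) \ge 1 - \gamma P_X(\mathcal{G}) - e^{-a},
\end{align}
which is precisely the claim once one notes $P_X(\mathcal{G})=\mathbb{P}[\imath_X(X)<\log\gamma M+a]$.

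There is no real obstacle; the only delicate observation is the choice of $\mathcal{A}$. Enlarging $\mathcal{G}$ by the \emph{entire} $M$-type support $\mathcal{S}$ — rather than by some subset of $\mathcal{G}^c$ chosen without reference to $\mathcal{S}$ — is what inflates $P_{\hat{X}}(\mathcal{A})$ all the way up to $1$, while feature (ii) together with $|\mathcal{S}|\le M$ keeps the corresponding increase of $\gamma P_X(\mathcal{A})$ bounded by $e^{-a}$. Every other step is a routine union bound.
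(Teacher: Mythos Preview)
Your proof is correct and is essentially identical to the paper's own argument: both choose the test set $\mathcal{A}=\{x:\imath_X(x)<\log\gamma M+a\}\cup\supp(P_{\hat X})$, use $P_{\hat X}(\mathcal{A})=1$, and bound $P_X(\mathcal{A})$ via $|\supp(P_{\hat X})|\le M$ together with the pointwise bound on $P_X$ outside the typical set. The only differences are notational.
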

\begin{proof}
Define a set
\begin{align}
\mathcal{A}:=\{x\in\mathcal{X}:\imath_X(x)<\log\gamma M+a\}\cup\supp(\hat{X})
\end{align}
Then we have
\begin{align}
P_X(\mathcal{A})&\le \mathbb{P}[\imath_X(X)<\log\gamma M+a]
\nonumber\\
&\quad+\mathbb{P}[\imath_X(X)\ge\log\gamma M+a,X\in\supp(\hat{X})]
\\
&\le\mathbb{P}[\imath_X(X)<\log\gamma M+a]
+\frac{1}{\gamma M}\exp(-a)\cdot M \label{e77}
\end{align}
where \eqref{e77} is because $|\supp(\hat{X})|\le M$. Therefore the result follows since $E_{\gamma}(P_{\hat{X}}\|P_X)\ge P_{\hat{X}}(\mathcal{A})-\gamma P_X(\mathcal{A})$ and $P_{\hat{X}}(\mathcal{A})=1$.
\end{proof}
For the asymptotic analysis, we are interested in the regime where $\gamma$ and $M$ grow exponentially in $n$.
Combining Theorem~\ref{thm_sa} and Theorem~\ref{thm_sc}, we have
\begin{cor}\label{thm_s}
For an arbitrary sequence $\mb{X}=(X^n)_{n=1}^{\infty}$, and the identity channel, $\epsilon\in(0,1)$ and $G>0$,
\begin{align}
&\quad S_{\epsilon}(G,\mb{X})
\nonumber\\
&=\inf\left\{R:
\limsup_{n\to\infty}\frac{1}{n}\log\frac{1}{\mathbb{P}[\frac{1}{n}
\imath_{X^n}
(X^n)<R+G]}\le G\right\}.
\label{e_source}
\end{align}
\end{cor}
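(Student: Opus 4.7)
The plan is to combine Theorem~\ref{thm_sa} (achievability) and Theorem~\ref{thm_sc} (converse) by specializing the one-shot bounds with $M_n = \lceil e^{nR}\rceil$ and $\gamma_n$ of exponential order $e^{nG}$, and passing to the $n\to\infty$ limit. Throughout, write $R^*$ for the infimum on the right-hand side of \eqref{e_source}.

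For the achievability direction ($S_\epsilon(G,\mb{X})\le R^*$), fix any $R$ in the condition set of \eqref{e_source} and write $\mathbb{P}[\tfrac{1}{n}\imath_{X^n}(X^n)<R+G]=\exp(-n(G+\bar{\eta}_n))$, so that by hypothesis $\limsup_n \bar{\eta}_n\le 0$. The main subtlety is that this condition does \emph{not} force $\bar{\eta}_n\le 0$ eventually, and hence the naive choice $\gamma_n=e^{nG}$ need not drive the right-hand side of \eqref{e_73} to zero. I overcome this by choosing $\gamma_n$ adaptively, e.g.,
\begin{align*}
\gamma_n := \exp\bigl(n(G+\bar{\eta}_n^+)+\sqrt{n}\bigr),\qquad \bar{\eta}_n^+ := \max(\bar{\eta}_n,0).
\end{align*}
Because $\bar{\eta}_n^+\ge 0$ and $\limsup_n \bar{\eta}_n^+ =0$, this choice satisfies $\limsup_n \tfrac{1}{n}\log\gamma_n = G$, while a short computation using monotonicity of the cdf gives $\tfrac{\gamma_n}{2}\mathbb{P}[\imath_{X^n}(X^n)\le \log(\gamma_n M_n)] \ge \tfrac{1}{2}\exp(\sqrt{n})\to\infty$. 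Theorem~\ref{thm_sa} then produces an $M_n$-type $P_{\hat{X}^n}$ with $E_{\gamma_n}(P_{\hat{X}^n}\|P_{X^n}) = 0$ for all sufficiently large $n$; the finitely many small-$n$ indices are handled by inflating $\gamma_n$ and $M_n$, which does not affect any $\limsup$.

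For the converse ($S_\epsilon(G,\mb{X})\ge R^*$), suppose $R$ is $\epsilon$-achievable via some $(M_n,\gamma_n,P_{\hat{X}^n})$. Applying Theorem~\ref{thm_sc} with $a = n\delta$ for arbitrary $\delta>0$ produces
\begin{align*}
\gamma_n\,\mathbb{P}\!\left[\tfrac{\imath_{X^n}(X^n)}{n}<\tfrac{\log(\gamma_n M_n)}{n}+\delta\right] \ge 1-\epsilon-e^{-n\delta}.
\end{align*}
Taking $\tfrac{1}{n}\log$ and $\limsup$, and using $\limsup_n \tfrac{\log\gamma_n}{n}\le G$, shows that $\limsup_n \tfrac{1}{n}\log\tfrac{1}{\mathbb{P}[\imath_{X^n}(X^n)/n<t_n]} \le G$ with $t_n$ eventually at most $R+G+2\delta$. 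Monotonicity of the cdf then upgrades this to the same bound at the fixed threshold $R+G+2\delta$, placing $R+2\delta$ in the condition set; letting $\delta\downarrow 0$ yields $R\ge R^*$.

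The main obstacle is the adaptive scaling of $\gamma_n$ in the achievability step: because the hypothesis only controls the tail probability in the $\limsup$ sense and not uniformly, $\gamma_n$ must be chosen to react to the observed sequence $\bar{\eta}_n$ rather than being a fixed exponential. The converse reduces to routine limsup bookkeeping once Theorem~\ref{thm_sc} is applied.
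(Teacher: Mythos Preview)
Your proof is correct and rests on the same one-shot bounds (Theorems~\ref{thm_sa} and~\ref{thm_sc}) as the paper's proof. The only notable difference is in the achievability step: the paper simply takes $\gamma_n=\exp(n(G+\delta))$ for a fixed $\delta>0$, shows that $(G+\delta,R)$ is $\epsilon$-achievable, and then appeals to the closedness of the $\epsilon$-achievable region to conclude $(G,R)$ is $\epsilon$-achievable; you instead construct an adaptive $\gamma_n$ depending on $\bar{\eta}_n$ so as to land directly at exponent $G$. Both are valid: the paper's route is slightly shorter and avoids introducing $\bar{\eta}_n$, while yours has the minor advantage of not invoking closedness. Your converse is essentially the paper's argument spelled out more carefully (the paper phrases the contrapositive with a $\liminf$, whereas your treatment via the $\delta$-slack and CDF monotonicity makes the needed $\limsup$ conclusion explicit).
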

\begin{proof}
In Theorem~\ref{thm_sa}, note that for any $(G,R)$, if
\begin{align}
\limsup_{n\to\infty}\frac{1}{n}\log\frac{1}{\mathbb{P}[\frac{1}{n}
\imath_{X^n}
(X^n)<R+G]}\le G
\end{align}
then by Theorem~\ref{thm_sa} we see that $(G+\delta,R)$ is $\epsilon$-achievable for any $\delta>0$.
Since the set of $\epsilon$-achievable pairs is closed, $(G,R)$ is also $\epsilon$-achievable.
This shows the $\le $ part of \eqref{e_source}.
If $(G,R)$ is such that
\begin{align}
\liminf_{n\to\infty}\frac{1}{n}\log\frac{1}{\mathbb{P}[\frac{1}{n}
\imath_{X^n}
(X^n)<R+G]}>G,
\end{align}
then by Theorem~\ref{thm_sc}, $(G,R)$ is not $\epsilon$-achievable, hence $S_{\epsilon}(G,\mb{X})>R$. This shows the $\ge$ part of \eqref{e_source}.
\end{proof}
\begin{rem}\label{rem20}
In Section~\ref{sec_ach}, we develop achievability bounds for channel resolvability (e.g.~\eqref{t3}), which imply the following bound on source resolvability
\begin{align}\label{e84}
S(G,\mb{X})
&\le \sup_{\epsilon>0}\inf\left\{R:\liminf_{n\to\infty}\mathbb{P}\left[
\frac{1}{n}\imath_{X^n}(X^n)<R+G\right]\ge1-\epsilon\right\}
\\
&=\bar{H}(\mb{X})-G
\end{align}
when particularized to the identity channel, where
\begin{align}
\bar{H}(\mb{X}):=\sup_{\epsilon>0}\inf\left\{R:\liminf_{n\to\infty}
\mathbb{P}\left[\frac{1}{n}\imath_{X^n}(X^n)<R\right]\ge1-\epsilon\right\}
\end{align}
is the \emph{sup-entropy rate} defined in \cite{han1993approximation}. In view of Corollary~\ref{thm_s}, the achievability bound \eqref{e84} is not tight in general.
Indeed, the achievability construction in the proof of Theorem~\ref{thm_sa} is based on quantizing a scaling of the source distribution, which is more efficient than the random coding argument in the channel counterpart.
\end{rem}
In the case of discrete memoryless sources, we obtain a more explicit formula by performing large deviation analysis of Corollary~\ref{thm_s}:
\begin{cor}\label{thm_s1}
For any per-letter distribution ${{\pi}}_{\sf X}$ on a finite alphabet, real number $G>0$, and the identity channel,
\begin{align}
S_{\epsilon}(G,{{\pi}}_{\sf X})=\inf_{Q_{\sf X}:D(Q_{\sf X}\|{{\pi}}_{\sf X})\le G} [D(Q_{\sf X}\|{{\pi}}_{\sf X})+H(Q_{\sf X})-G]^+.
\end{align}
\end{cor}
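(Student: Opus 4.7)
The plan is to specialize the general one-shot result of Corollary~\ref{thm_s} to the i.i.d.\ case $X^n\sim\pi_{\sf X}^{\otimes n}$ and then carry out a Cram\'er-type large deviations evaluation. Under the product law,
\begin{align}
\frac{1}{n}\imath_{X^n}(X^n)=\frac{1}{n}\sum_{i=1}^n\log\frac{1}{\pi_{\sf X}(X_i)},
\end{align}
which is an empirical average of i.i.d.\ bounded random variables (bounded because $\mathcal{X}$ is finite and we may discard symbols of zero probability). By Cram\'er's theorem, or equivalently by Sanov's theorem combined with the contraction principle applied to $Q_{\sf X}\mapsto\mathbb{E}_{Q_{\sf X}}[-\log\pi_{\sf X}]=H(Q_{\sf X})+D(Q_{\sf X}\|\pi_{\sf X})$,
\begin{align}
\lim_{n\to\infty}\frac{1}{n}\log\frac{1}{\mathbb{P}[\frac{1}{n}\imath_{X^n}(X^n)< T]}=\inf_{Q_{\sf X}:\,H(Q_{\sf X})+D(Q_{\sf X}\|\pi_{\sf X})\le T}D(Q_{\sf X}\|\pi_{\sf X})
\end{align}
for every $T$ strictly larger than $\inf_{Q_{\sf X}}\{H(Q_{\sf X})+D(Q_{\sf X}\|\pi_{\sf X})\}=\pi_{\sf X}$-essential infimum of $-\log\pi_{\sf X}$; in particular the rate function takes value $0$ precisely when $T\ge H(\pi_{\sf X})$, so we never need to invoke Cram\'er at the boundary.

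Next I substitute this into the characterization of $S_\epsilon(G,\pi_{\sf X})$ from Corollary~\ref{thm_s} (which is $\epsilon$-independent, hence the absence of $\epsilon$ on the right-hand side). The condition
\begin{align}
\limsup_{n\to\infty}\frac{1}{n}\log\frac{1}{\mathbb{P}[\frac{1}{n}\imath_{X^n}(X^n)<R+G]}\le G
\end{align}
becomes the requirement that there exists $Q_{\sf X}$ with
\begin{align}
D(Q_{\sf X}\|\pi_{\sf X})\le G\quad\text{and}\quad H(Q_{\sf X})+D(Q_{\sf X}\|\pi_{\sf X})\le R+G.
\end{align}
Rewriting the second condition as $R\ge D(Q_{\sf X}\|\pi_{\sf X})+H(Q_{\sf X})-G$ and recalling that $R$ is nonnegative (since $M_n\ge 1$, so $\frac{1}{n}\log M_n\ge 0$), the infimum over feasible $R$ equals
\begin{align}
\inf_{Q_{\sf X}:\,D(Q_{\sf X}\|\pi_{\sf X})\le G}\bigl[D(Q_{\sf X}\|\pi_{\sf X})+H(Q_{\sf X})-G\bigr]^+,
\end{align}
which is the claimed formula.

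The only delicate point is the passage between the weak (non-strict) inequalities in the optimization and the strict inequalities that naturally arise from the large deviations upper/lower bounds. This is handled by the usual continuity argument: the objective $Q_{\sf X}\mapsto D(Q_{\sf X}\|\pi_{\sf X})+H(Q_{\sf X})$ and the constraint $Q_{\sf X}\mapsto D(Q_{\sf X}\|\pi_{\sf X})$ are continuous on the compact simplex $\mathcal{P}(\mathcal{X})$ (since $\mathcal{X}$ is finite and all quantities are bounded on the support of $\pi_{\sf X}$), so approximating a feasible $Q_{\sf X}$ by sequences with strictly smaller entropy-plus-divergence is routine. The one-sided matching of Corollary~\ref{thm_s} then transfers to the one-sided matching of the single-letter expression, yielding equality. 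The main (modest) obstacle is precisely this boundary handling together with ensuring that Cram\'er's rate function agrees with the displayed infimum; both are standard once the identity $\mathbb{E}_{Q_{\sf X}}[\imath_{\sf X}(\sf X)]=H(Q_{\sf X})+D(Q_{\sf X}\|\pi_{\sf X})$ is invoked.
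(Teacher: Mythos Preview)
Your proposal is correct and follows essentially the same route as the paper: apply Corollary~\ref{thm_s}, evaluate the exponent of $\mathbb{P}[\tfrac{1}{n}\imath_{X^n}(X^n)<R+G]$ by large deviations (the paper writes ``large deviation/the method of types'' where you invoke Cram\'er/Sanov with the identity $\mathbb{E}_{Q_{\sf X}}[-\log\pi_{\sf X}]=H(Q_{\sf X})+D(Q_{\sf X}\|\pi_{\sf X})$), and translate the resulting rate-function condition into the two constraints $D(Q_{\sf X}\|\pi_{\sf X})\le G$ and $D(Q_{\sf X}\|\pi_{\sf X})+H(Q_{\sf X})-G\le R$. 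Your added discussion of boundary/continuity issues is more explicit than the paper's, but the argument is otherwise the same.
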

\begin{proof}
By large deviation/the method of types,
\begin{align}
&\limsup_{n\to\infty}\frac{1}{n}\log\frac{1}{\mathbb{P}[\imath_{X^n}
(X^n)<R+G]}
\nonumber\\
&\quad=
\inf_{Q_{\sf X}\colon D(Q_{\sf X}\|{{\pi}}_{\sf X})+H(Q_{\sf X})-G\le R}
D(Q_{\sf X}\|\pi_{\sf X}),
\end{align}
thus $(G,R)$ is $\epsilon$-achievable if and only if there exists a $Q_{\sf X}$ such that
\begin{align}
D(Q_{\sf X}\|\pi_{\sf X})&\le G;
\\
D(Q_{\sf X}\|{{\pi}}_{\sf X})+H(Q_{\sf X})-G&\le R,
\end{align}
and the result follows.
\end{proof}
This result is a special case of channel resolvability for memoryless outputs (Theorem~\ref{thm_conv}).

\section{Channel Resolvability}\label{sec_channel}
In this section, we first derive achievability bounds for channel resolvability using random coding.
For discrete memoryless channels with iid target distributions, we prove a converse bound which, combined with the achievability bounds, yields the exact expression for $G_{\epsilon}(R,{{\pi}}_{\sf X})$.
For the worst-case distributions, we prove a converse bound which does not match the achievability bound, but addresses certain properties of $S(G)$.

Since the excess relative information is a trivial upper bound on $E_{\gamma}$, it suffices to derive achievability (upper-bounds) for the former and converse (lower-bounds) for the latter. In fact, by Proposition~\ref{prop_1}-\ref{pt13_1}) we know that the two metrics are equivalent in large deviation analysis.
\subsection{One-shot Achievability Bound}\label{sec_ach}
We first present the result in a simple special case where the target distribution matches the input distribution according to which the codewords are generated.
\begin{thm}[Softer-covering Lemma]\label{thm3_0}
Fix $Q_{UX}=Q_UQ_{X|U}$. For an arbitrary
codebook $[c_1,\dots,c_M]$, define
\begin{align}
Q_{X[c^M]}:=\frac{1}{M}\sum_{m=1}^M Q_{X|U=c_m}.
\label{e80_1st}
\end{align}
Then for any
$\gamma,\epsilon,\sigma>0$ satisfying $\gamma-1>\epsilon+\sigma$ and $\tau\in\mathbb{R}$,
\begin{align}
\mathbb{E}[\bar{F}_{\gamma}\left(Q_{X[U^M]}\|Q_X\right)]
&\le \mathbb{P}\left[\imath_{U;X}(U;X)>\log M\sigma\right]
\nonumber
\\
&\quad
+\frac{1}{\epsilon}\mathbb{P}[\imath_{U;X}(U;X)>\log M-\tau]
\nonumber\\
&\quad+\frac{\exp(-\tau)}{(\gamma-1-\epsilon-\sigma)^2}
\label{e_ub0}
\end{align}
where $U^M\sim Q_U\times\dots\times Q_U$, $(U,X)\sim Q_U Q_{X|U}$, and the information density
\begin{align}
\imath_{U;X}(u;x)
:=\log\frac{{\rm d}Q_{X|U=u}}{{\rm d}Q_X}(x).
\end{align}
\end{thm}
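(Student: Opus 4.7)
The plan is to first reformulate the left-hand side via symmetry. Because $\bar F_\gamma(Q_{X[U^M]}\|Q_X)$ is the integral of $\mathbf{1}\{\tfrac{dQ_{X[U^M]}}{dQ_X}>\gamma\}$ against $Q_{X[U^M]} = \tfrac{1}{M}\sum_m Q_{X|U=U_m}$, averaging over the codebook and exploiting the symmetry $(U_m)\mapsto(\text{permutation})$ collapses to a single term: if I let $(U_1,X)\sim Q_{UX}$ and $U_2,\ldots,U_M$ be i.i.d.\ $Q_U$ independent of $(U_1,X)$, and set $Z_m := e^{\imath_{U;X}(U_m;X)} = \tfrac{dQ_{X|U=U_m}}{dQ_X}(X)$, then $\mathbb{E}[\bar F_\gamma(Q_{X[U^M]}\|Q_X)] = \mathbb{P}\bigl[\tfrac{1}{M}\sum_{m=1}^M Z_m > \gamma\bigr]$. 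Crucially, conditional on $X$, we have $\mathbb{E}[Z_m\mid X]=1$ for $m\ge 2$ since $U_m\perp X$, while $Z_1$ is biased because $U_1$ was drawn jointly with $X$.

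Next I would peel off $Z_1$ via the event $E_1:=\{Z_1>M\sigma\}$, which has probability $\mathbb{P}[\imath_{U;X}(U;X)>\log(M\sigma)]$ and yields the first term. On $E_1^c$ the contribution $Z_1/M\le\sigma$, so the remaining task is to bound $\mathbb{P}\bigl[\tfrac{1}{M}\sum_{m=2}^M Z_m > \gamma - \sigma\bigr]$. I would then truncate at level $Me^{-\tau}$: write $Z_m = \hat Z_m + \check Z_m$ where $\hat Z_m := Z_m\mathbf{1}\{Z_m\le Me^{-\tau}\}$ and $\check Z_m := Z_m\mathbf{1}\{Z_m>Me^{-\tau}\}$, and split the sum into $\hat S := \tfrac{1}{M}\sum_{m\ge 2}\hat Z_m$ and $T := \tfrac{1}{M}\sum_{m\ge 2}\check Z_m$, using the event $\{T>\epsilon\}$ versus its complement.

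For the tail piece $T$, Markov gives $\mathbb{P}[T>\epsilon]\le \mathbb{E}[T]/\epsilon$, and here the key change-of-measure trick is that $\mathbb{E}[Z_2\mathbf{1}\{Z_2>Me^{-\tau}\}]$, computed under $Q_X\times Q_U$, equals $\mathbb{P}_{(U,X)\sim Q_{UX}}[\imath_{U;X}(U;X)>\log M - \tau]$ because the factor $Z_2$ is exactly the Radon--Nikodym derivative converting the product measure into the joint. With the prefactor $(M-1)/M\le 1$, this delivers the second term. For the bulk $\hat S$, conditional on $X$ we have $\mathbb{E}[\hat S\mid X]\le 1$ and the truncation implies $\mathrm{Var}(\hat Z_2\mid X)\le \mathbb{E}[\hat Z_2^2\mid X]\le Me^{-\tau}\cdot\mathbb{E}[\hat Z_2\mid X]\le Me^{-\tau}$, hence $\mathrm{Var}(\hat S\mid X)\le e^{-\tau}$. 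The hypothesis $\gamma-1>\epsilon+\sigma$ ensures $\gamma-\sigma-\epsilon - \mathbb{E}[\hat S\mid X]>\gamma-1-\sigma-\epsilon>0$, so Chebyshev produces $\mathbb{P}[\hat S>\gamma-\sigma-\epsilon\mid X]\le e^{-\tau}/(\gamma-1-\sigma-\epsilon)^2$, which after taking expectation over $X$ gives the third term.

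The main obstacle is identifying the correct decomposition: the three-way split (large-$Z_1$ exceptional event, Markov on the tail of the remaining $Z_m$'s with truncation level $Me^{-\tau}$, and Chebyshev on the truncated bulk) is exactly tuned so that the change-of-measure identity collapses $\mathbb{E}[Z\,\mathbf{1}\{Z>Me^{-\tau}\}]$ into a clean information-spectrum quantity, while simultaneously bounding the second moment of $\hat Z_2$ by $Me^{-\tau}$. The calibration of $\sigma,\epsilon,\tau$ to the threshold $\gamma$ is what makes all three terms appear with the right constants; everything else is a routine union bound followed by these three standard concentration tools.
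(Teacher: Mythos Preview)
Your proposal is correct and follows essentially the same approach as the paper: the symmetry reduction to $\mathbb{P}\bigl[\tfrac{1}{M}\sum_m Z_m>\gamma\bigr]$, the three-way split at thresholds $\sigma$, $\epsilon$, and $\gamma-\sigma-\epsilon$, Markov plus the change-of-measure identity on the untruncated tail, and Chebyshev on the truncated bulk (using $\mathbb{E}[\hat S\mid X]\le 1$ and $\mathrm{Var}(\hat S\mid X)\le e^{-\tau}$) are exactly the ingredients the paper uses, with the same truncation level $Me^{-\tau}$ corresponding to the paper's ``atypical'' set $\mathcal{A}_\tau$.
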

\begin{rem}
Theorem~\ref{thm3_0} implies (the general asymptotic version of) the soft-covering lemma based on total variation (see \cite{han1993approximation},\cite{hayashi2006general} \cite{cuff2012distributed}). Indeed if $M_n=\exp(nR)$ at blocklength $n$ where $R>\mb{\bar{I}}(\mb{U};\mb{X})$, we can select $\tau_n \leftarrow\frac{n}{2} (R-\mb{\bar{I}}(\mb{U};\mb{X}))$.
Moreover for any $\gamma>1$ we can pick constant $\epsilon,\sigma>0$ such that $\gamma-1>\epsilon+\sigma$ in the theorem, to show that
\begin{align}
\lim_{n\to\infty}\mathbb{E}[\bar{F}_{\gamma}
\left(Q_{X^n[U^{nM_n}]}\|Q_{X^n}\right)]
=0
\end{align}
which, by \eqref{e_hanbound} and by taking $\gamma\downarrow1$, implies that
\begin{align}
\lim_{n\to\infty}\mathbb{E}|Q_{X^n[U^{nM_n}]}-Q_{X^n}|=0.
\end{align}
We refer to Theorem~\ref{thm3_0} as the softer-covering lemma since for a larger $\gamma$ it allows us to use a smaller codebook to cover the output distribution more softly (i.e.~approximate the target distribution under a weaker metric).
\end{rem}
\begin{proof}
Define the ``atypical'' set
\begin{align}
%
\mathcal{A}_{\tau}&:=\{(u,x):\imath_{U;X}(u;x)\le \log M-\tau\}.
\label{e_at}
\end{align}
Now,
let $X^M$ be such that $(U^M,X^M)\sim Q_{UX}\times\dots\times Q_{UX}$,
The joint distribution of $(U^M,\hat{X})$ is specified by letting $\hat{X}\sim Q_{X[c^M]}$ conditioned on $U^M=c^M$.
We perform a change-of measure step using the symmetry of the random codebook:
\begin{align}
\mathbb{E}[\bar{F}_{\gamma}(Q_{X[U^M]}\|Q_X)]
&=\mathbb{P}\left[\frac{{\rm d}Q_{X[U^M]}}{{\rm d}Q_X}(\hat{X})>\gamma\right]
\\
&=
\frac{1}{M}\sum_m\mathbb{P}\left[\frac{{\rm d}Q_{X[U^M]}}{{\rm d}Q_X}(X_m)>\gamma\right]
\label{e89}
\\
&=\mathbb{P}\left[\frac{{\rm d}Q_{X[U^M]}}{{\rm d}Q_X}(X_1)>\gamma\right]
\label{e92}
\end{align}
where \eqref{e89} is because of \eqref{e80_1st}, and \eqref{e92} is because the summands in \eqref{e89} are equal. Note that $X_1$ is correlated with only the first codeword $U_1$.

Next, because of the relation
\begin{align}
\frac{{\rm d}Q_{X[c^M]}}{{\rm d}Q_X}(x)=\frac{1}{M}\sum_m\exp(\imath_{U;X}(c_m,x)),
\label{e_88}
\end{align}
we can upper-bound \eqref{e92} by the union bound as
\begin{align}
&\quad\mathbb{P}[\exp(\imath_{U;X}(U_1;X_1))>M\sigma]
\nonumber\\
&+\mathbb{P}\left[\frac{1}{M}\sum_{m=2}^M \exp(\imath_{U;X}(U_m;X_1))
1_{\mathcal{A}_{\tau}^c}(U_m,X_1)>\epsilon\right]
\nonumber\\
&+\mathbb{P}\left[
\frac{1}{M}\sum_{m=2}^M \exp(\imath_{U;X}(U_m;X_1))
1_{\mathcal{A}_{\tau}}(U_m,X_1)>\gamma-\epsilon-\sigma
\right]
\label{e_ub92}
\end{align}
where we used the fact that $1_{\mathcal{A}_{\tau}}+1_{\mathcal{A}_{\tau}^c}=1$. Notice that the first term of \eqref{e_ub92} may be regarded as the probability of ``atypical'' events and accounts for the first term in \eqref{e_ub0}.
The second term of \eqref{e_ub92} can be upper-bounded with Markov's inequality:
\begin{align}
&\quad\frac{1}{M\epsilon}\sum_{m=2}^M
\mathbb{E}[\exp\left(\imath_{U;X}(U_m;X_1)\right)
1_{\mathcal{A}_{\tau}^c}(U_m,X_1)]
\nonumber\\
&\le
\frac{1}{M\epsilon}\sum_{m=2}^M
\mathbb{E}[1_{\mathcal{A}_{\tau}^c}(U,X)]
\label{e_cm0}
\\
&\le \frac{1}{\epsilon}\,\mathbb{P}[\imath_{U;X}(U;X)>\log M-\tau]
\end{align}
accounting for the second term in \eqref{e_ub0} where \eqref{e_cm0} is a change-of-measure step using the fact that $(U_l,X_1)\sim Q_U\times Q_X$ for $m\ge 2$.

Finally we take care of the last term in \eqref{e_ub92},
again using the independence of $U_m$ and $X_1$ for $m \geq 2$.
Observe that for any $x\in\mathcal{X}$,
\begin{align}
\mu&:=\mathbb{E}\left[\frac{1}{M}\sum_{m=2}^M \exp(\imath_{U;X}(U_m;x))
1_{\mathcal{A}_{\tau}}(U_m,x)\right]
\\
&\le
\frac{1}{M}\sum_{m=2}^M\mathbb{E}\left[\exp(\imath_{U;X}(U_m;x))
\right]
\\
&=\frac{M-1}{M}
\\
&\le 1,
\end{align}
whereas
\begin{align}
&\quad\Var\left(\frac{1}{M}\sum_{m=2}^M \exp(\imath_{U;X}(U_m;x))
1_{\mathcal{A}_{\tau}}(U_m,x)\right)
\nonumber\\
&=
\frac{1}{M^2}\sum_{m=2}^M\Var\left( \exp(\imath_{U;X}(U_m;x))
1_{\mathcal{A}_{\tau}}(U_m,x)\right)
\\
&\le
\frac{1}{M}\Var\left( \exp(\imath_{U;X}(U;x))
1_{\mathcal{A}_{\tau}}(U,x)\right)
\\
&\le
\frac{1}{M}\mathbb{E}\left[ \exp(2\imath_{U;X}(U;x))
1_{\mathcal{A}_{\tau}}(U,x)\right]
\label{e112}
\\
&\le
\exp(-\tau)\mathbb{E}\left[ \exp(\imath_{U;X}(U;x))
\right]
\label{e113}
\\
&=\exp(-\tau),
\end{align}
where
the change of measure step \eqref{e113} uses \eqref{e_at}.
It then follows from Chebyshev's inequality that
{\small
\begin{align}
&\quad\mathbb{P}\left[
\frac{1}{M}\sum_{m=2}^M \exp(\imath_{U;X}(U_m;x))
1_{\mathcal{A}_{\tau}}(U_m,x)>\gamma-\epsilon-\sigma
\right]
\\
&\le\mathbb{P}\left[
\frac{1}{M}\sum_{m=2}^M \exp(\imath_{U;X}(U_m;x))
1_{\mathcal{A}_{\tau}}(U_m,x)-\mu
>\gamma-\epsilon-\sigma-1
\right]
\\
&\le \frac{\exp(-\tau)}{(\gamma-\epsilon-\sigma-1)^2}.
\end{align}
}
\end{proof}
For the asymptotic analysis,
we are interested in the regime where $M$ and $\gamma$ are growing exponentially.
In this case, the right hand side of \eqref{e_ub0} can be regarded as essentially
\begin{align}
\mathbb{P}\left[\imath_{U;X}(U;X)>\log M\gamma\right]
\end{align}
modulo nuisance parameters.
This can be seen from the choice of parameters in Corollary~\ref{cor_asymp}.
Thus the sum rate of $M$ and $\gamma$ has to exceed the sup information rate in order that the approximation error vanishes asymptotically.

Extending Theorem~\ref{thm3_0} to the more general scenario where the target distribution may not have any relation with the input distribution, we have the following result, where we allow $\pi_X$ to be an arbitrary positive measure,
\begin{thm}[Softer-covering Lemma: Unmatched Target Distribution]\label{thm3}
Fix ${{\pi}}_X$ and $Q_{UX}=Q_UQ_{X|U}$. For an arbitrary
codebook $[c_1,\dots,c_M]$,
define $Q_{X[c^M]}$
as in \eqref{e80_1st}.
Then for any
$\gamma,\epsilon,\sigma>0$ satisfying $\gamma>\epsilon+\sigma$, $\tau\in\mathbb{R}$ and $0<\delta<1$,
\begin{align}
&\quad\mathbb{E}[\bar{F}_{\gamma}\left(Q_{X[U^M]}\|{{\pi}}_X\right)]
\nonumber\\
&\le \mathbb{P}[\imath_{Q_X\|{{\pi}}_X}(X)>
\log(\gamma-\sigma-\epsilon)\textrm{ or }\imath_{U;X}(U;X)
\nonumber\\
&\quad\quad+\imath_{Q_X\|{{\pi}}_X}(X)>\log\delta M\sigma]
\nonumber
\\
&\quad
+\frac{\gamma-\epsilon-\sigma}{\epsilon}\mathbb{P}[\imath_{U;X}(U;X)>\log M-\tau]
\nonumber\\
&\quad+\frac{\exp(-\tau)(\gamma-\epsilon-\sigma)^2}{(1-\delta)^2\sigma^2}
\label{e_ub}
\end{align}
where $U^M\sim Q_U\times\dots\times Q_U$ and $(U,X)\sim Q_U Q_{X|U}$.
\end{thm}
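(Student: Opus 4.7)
The plan is to follow the three-part decomposition of Theorem~\ref{thm3_0}---atypical first codeword, Markov bound on the other codewords restricted to the atypical set, Chebyshev bound on the typical piece---but adapted to the fact that the reference measure is now $\pi_X$ rather than $Q_X$. Accordingly, I would work with the weight
\[
W(u,x):=\exp\bigl(\imath_{U;X}(u;x)+\imath_{Q_X\|\pi_X}(x)\bigr)=\frac{{\rm d}Q_{X|U=u}}{{\rm d}\pi_X}(x),
\]
so that $\frac{{\rm d}Q_{X[c^M]}}{{\rm d}\pi_X}(x)=\frac{1}{M}\sum_m W(c_m,x)$. Repeating the symmetrization of \eqref{e89}--\eqref{e92} verbatim reduces the left-hand side of \eqref{e_ub} to $\mathbb{P}\bigl[\frac{1}{M}\sum_{m=1}^M W(U_m,X_1)>\gamma\bigr]$, where $(U_m,X_m)_{m=1}^M$ are drawn i.i.d.\ from $Q_{UX}$.

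Setting $K:=\gamma-\epsilon-\sigma$, I would introduce the ``good event for the first codeword''
\[
A:=\{\imath_{Q_X\|\pi_X}(X_1)\le\log K\}\cap\{\imath_{U;X}(U_1;X_1)+\imath_{Q_X\|\pi_X}(X_1)\le\log M\delta\sigma\},
\]
whose complement yields exactly the first term of \eqref{e_ub}. On $A$ the first codeword satisfies $\frac{1}{M}W(U_1,X_1)\le\delta\sigma$, so the residual probability reduces to bounding $\mathbb{P}\bigl[\frac{1}{M}\sum_{m=2}^M W(U_m,X_1)>\gamma-\delta\sigma,\,A\bigr]$. I then split this sum via $1_{\mathcal{A}_\tau}+1_{\mathcal{A}_\tau^c}=1$ using the atypical set $\mathcal{A}_\tau:=\{\imath_{U;X}\le\log M-\tau\}$ from Theorem~\ref{thm3_0}, allocating budget $\epsilon$ to the $\mathcal{A}_\tau^c$ part and $K+(1-\delta)\sigma$ to the $\mathcal{A}_\tau$ part.

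For the Markov step, the good-event bound $\exp(\imath_{Q_X\|\pi_X}(X_1))\le K$, combined with the change of measure $\exp(\imath_{U;X}(u,x))\,{\rm d}Q_U(u)\,{\rm d}Q_X(x)={\rm d}Q_{UX}(u,x)$, turns the relevant expectation into $K\cdot\mathbb{P}[\imath_{U;X}(U;X)>\log M-\tau]$, producing the second term of \eqref{e_ub}. For the Chebyshev step, conditioning on $X_1=x$ with $\imath_{Q_X\|\pi_X}(x)\le\log K$, the independence of $U_m$ ($m\ge 2$) from $X_1$ together with $\mathbb{E}[\exp(\imath_{U;X}(U;x))]=1$ gives a conditional mean of at most $K$; the cap $\exp(\imath_{U;X})\le Me^{-\tau}$ on $\mathcal{A}_\tau$, used once more together with the $K$-bound on $\exp(\imath_{Q_X\|\pi_X}(x))$, reduces $\mathbb{E}[W(U,x)^2 1_{\mathcal{A}_\tau}(U,x)]$ to $Me^{-\tau}K^2$, so the conditional variance is at most $e^{-\tau}K^2$. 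Chebyshev's inequality at deviation $(1-\delta)\sigma$ then delivers the third term of \eqref{e_ub}.

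The main obstacle is the double thresholding inside $A$: the factor $\exp(\imath_{Q_X\|\pi_X})$ in $W$ is not softened by random coding at all, so without separately truncating $\imath_{Q_X\|\pi_X}\le\log K$ the second-moment computation does not close---one cannot reduce $\mathbb{E}[W^2 1_{\mathcal{A}_\tau}]$ to $Me^{-\tau}K\cdot\mathbb{E}[W]\le Me^{-\tau}K^2$ without invoking \emph{both} the $\imath_{U;X}$ atypicality cut and the $\imath_{Q_X\|\pi_X}$ cap. This is what dictates the asymmetric form of the first term of \eqref{e_ub}: the ``combined'' event alone (involving $\imath_{U;X}+\imath_{Q_X\|\pi_X}$) would not be sufficient.
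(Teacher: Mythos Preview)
Your proposal is correct and follows essentially the same route as the paper. The only cosmetic difference is that the paper, after intersecting with $\{\xi(X_1)\le\gamma_2\}$ (your $K$), divides through by $\xi(X_1)$ so that the remaining Markov and Chebyshev steps are applied to $\exp(\imath_{U;X})$ exactly as in Theorem~\ref{thm3_0}, whereas you carry $W$ directly and insert the bound $\xi(x)\le K$ inside the moment computations; the resulting constants are identical.
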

\begin{proof}[Proof Sketch]
Similarly to the proof of Theorem~\ref{thm3_0}, we first use a symmetry argument and change-of-measure step so that the random variable of the channel output is correlated only with the first codeword, to obtain
\begin{align}
\mathbb{E}[\bar{F}_{\gamma}\left(Q_{X[U^M]}\|{{\pi}}_X\right)]
\le
\mathbb{P}\left[\frac{{\rm d}Q_{X[U^M]}}{{\rm d}\pi_X}(X_1)>\gamma\right].
\end{align}
Then in the next union bound step we have to take care of another ``atypical'' event that $\frac{{\rm d}Q_X}{{\rm d}\pi_X}(X_1)>\gamma_2$, where
\begin{align}
\gamma_2:=\gamma-\epsilon-\sigma.
\end{align}
More precisely, we have
{\small
\begin{align}
&\quad\mathbb{P}\left[\frac{{\rm d}Q_{X[U^M]}}{{\rm d}{{\pi}}_X}(X_1)>\gamma\right]
\nonumber\\
&\le \mathbb{P}[\xi(X_1)>\gamma_2\textrm{ or }\eta(U_1,X_1)>\delta M\sigma]
\nonumber\\
&\quad+\mathbb{P}\left[\frac{1}{M}\sum_{m=2}^M \eta(U_m,X_1)
1_{\mathcal{A}_{\tau}^c}(U_m,X_1)>\epsilon,~\xi(X_1)\le\gamma_2\right]
\nonumber\\
&\quad+\mathbb{P}\left[
\frac{1}{M}\sum_{m=2}^M \eta(U_m,X_1)
1_{\mathcal{A}_{\tau}}(U_m,X_1)>\gamma-\epsilon-\delta \sigma,~
\xi(X_1)\le \gamma_2
\right]
\label{e108}
\end{align}
}
where we have defined
\begin{align}
\eta(u,x)&:=\frac{{\rm d}Q_{X|U=u}}{{\rm d}{{\pi}}_X}(x);
\\
\xi(x)&:=\frac{{\rm d}Q_X}{{\rm d}{{\pi}}_X}(x).
\end{align}
As before the first term of \eqref{e108} may be regarded as the probability of ``atypical'' events and accounts for the first term in \eqref{e_ub}. The second and the third terms of \eqref{e108} can be upper-bounded by
\begin{align}
&\quad\mathbb{P}\left[\frac{1}{M}\sum_{m=2}^M \frac{\eta(U_m,X_1)}
{\xi(X_1)}
1_{\mathcal{A}_{\tau}^c}(U_m,X_1)>\frac{\epsilon}{\gamma_2}\right]
\nonumber
\\
&\le
\mathbb{P}\left[\frac{1}{M}\sum_{m=2}^M
\exp\left(\imath_{U;X}(U_m;X_1)\right)
1_{\mathcal{A}_{\tau}^c}(U_m,X_1)>\frac{\epsilon}{\gamma_2}\right]
\end{align}
and
{\small
\begin{align}
&\quad\mathbb{P}\left[
\frac{1}{M}\sum_{m=2}^M \frac{\eta(U_m,X_1)}{\xi(X_1)}
1_{\mathcal{A}_{\tau}}(U_m,X_1)>\frac{\gamma-\epsilon-\delta \sigma}{\gamma_2}
\right]
\nonumber
\\
&\le
\mathbb{P}\left[
\frac{1}{M}\sum_{m=2}^M \exp(\imath_{U;X}(U_m;X_1))
1_{\mathcal{A}_{\tau}}(U_m,X_1)>1+\frac{(1-\delta)\sigma}{\gamma_2}
\right].
\end{align}
}
The rest of the proof is similar to that of Theorem~\ref{thm3_0} and is omitted.
\end{proof}
For the purpose of asymptotic analysis in the stationary memoryless setting, the right hand side of \eqref{e_ub} can be regarded as essentially
\begin{align}
&\quad\mathbb{P}\left[\imath_{Q_X\|{{\pi}}_X}(X)>\log\gamma\right]
\nonumber\\
&+\mathbb{P}\left[\imath_{U;X}(U;X)+\imath_{Q_X\|{{\pi}}_X}(X)>\log M\gamma\right]
\end{align}
modulo nuisance parameters.

\begin{rem}\label{remweaken}
By setting $\tau\leftarrow+\infty$ and letting $\delta\uparrow 1$, the bound in Theorem~\ref{thm3} can be weakened in the following slightly simpler form:
\begin{align}
\mathbb{E}[\bar{F}_{\gamma}(Q_{X[U^M]}\|{{\pi}}_X)]
&\le \mathbb{P}\left[\frac{{\rm d}Q_X}{{\rm d}{{\pi}}_X}(X)>\gamma-\sigma-\epsilon\right]\nonumber%
\\
&\quad+\mathbb{P}\left[\frac{{\rm d}Q_{X|U}}{{\rm d}{{\pi}}_X}(X|U)\ge M\sigma\right]\nonumber
\\
&\quad+\frac{\gamma-\sigma-\epsilon}{\epsilon}.\label{tt1}
\end{align}
In fact, assuming $\tau=+\infty$ we can simplify the proof of the theorem and strengthen \eqref{tt1} to
\begin{align}
\mathbb{E}[\bar{F}_{\gamma}(Q_{X[U^M]}\|{{\pi}}_X)]
&\le \mathbb{P}\left[\frac{{\rm d}Q_X}{{\rm d}{{\pi}}_X}(X)>\gamma_2\right]\nonumber
\\
&\quad+\mathbb{P}\left[\frac{{\rm d}Q_{X|U}}{{\rm d}{{\pi}}_X}(X|U)> M(\gamma-\epsilon)\right]\nonumber
\\
&\quad+\frac{\gamma_2}{\epsilon}\label{t3}
\end{align}
for any $\gamma_2>0$ and $0<\epsilon<\gamma$.
As we show in Corollary~\ref{cor_asymp}, the weakened bounds \eqref{tt1} and \eqref{t3} are still asymptotically tight provided that the exponent with which the threshold $\gamma$ grows is strictly positive. However, when the exponent is zero (corresponding to the total variation case), we do need $\tau$ in the bound for asymptotic tightness.
\end{rem}
\begin{rem}\label{rem25}
In the case of $Q_X={{\pi}}_X$, we can set
$\gamma_2\leftarrow1$ and $\epsilon\leftarrow\frac{\gamma}{2}$ in \eqref{t3} to obtain
the simplification
\begin{align}
\mathbb{E}[\bar{F}_{\gamma}(Q_{X[U^M]}\|Q_X)]
&\le\mathbb{P}\left[\exp(\imath_{U;X}(U;X))> \log\frac{M\gamma}{2}\right]
+\frac{2}{\gamma}.
\end{align}
\end{rem}
\begin{cor}\label{cor_asymp}
Fix per-letter distributions ${{\pi}}_{\sf X}$ on $\mathcal{X}$ and $Q_{\sf UX}=Q_{\sf U}Q_{\sf X|U}$ on $\mathcal{U}\times\mathcal{X}$,
and $E,R\in (0,\infty)$.
For each $n$,
define $\gamma_n:=\exp(nE)$ and $M_n=\lfloor\exp(nR)\rfloor$;
let $U^{M_n}=(U_1,\dots,U_{M_n})$ have independent coordinates each distributed according to $Q^{\otimes n}_{\sf U}$.
Given any $c^{M_n}=(c_{m})_{m=1}^{M_n}$, where each $c_{m}\in \mathcal{U}^n$, define\footnote{We define
$Q_{\sf X|U}^{\otimes n}$ by
$Q_{\sf X|U}^{\otimes n}(\cdot|u^n):=\prod_{i=1}^nQ_{{\sf X|U}=u_i}$ for any $u^n$.
Also note that in this paper we differentiates per-letter symbols such as $\sf U$ between one-shot/block symbols such as $U$ (so that $U={\sf U}^n$ in this corollary).}
\begin{align}
Q_{X[c^{M_n}]}:=\frac{1}{M_n}\sum_{m=1}^{M_n}
Q^{\otimes n}_{{\sf X}|{\sf U}}(\cdot|c_m).
\end{align}
Then
\begin{align}
\lim_{n\to\infty}\mathbb{E}[E_{\gamma_n}
(Q_{X[U^{M_n}]}||{{\pi}}_{{\sf X}}^{\otimes n})]
&=\lim_{n\to\infty}
\mathbb{E}[\bar{F}_{\gamma_n}(Q_{X[U^{M_n}]}\|{{\pi}}_{\sf X}^{\otimes n})]
\\
&=0
\end{align}
provided that
\begin{align}\label{e_e}
E>D(Q_{\sf X}||{{\pi}}_{\sf X})+[I(Q_{\sf U},Q_{\sf X|U})-R]^+.
\end{align}
\end{cor}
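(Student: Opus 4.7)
The plan is to apply the weakened softer-covering bound \eqref{t3} from Remark~\ref{remweaken} at blocklength $n$, choose its free parameters to scale exponentially in $n$, and invoke the weak law of large numbers (WLLN) on each of the three resulting terms. Since $E_\gamma \le \bar F_\gamma$ by the first part of Proposition~\ref{prop_1}, convergence of $\mathbb{E}[\bar F_{\gamma_n}]$ to zero automatically yields the same for $\mathbb{E}[E_{\gamma_n}]$, so it suffices to handle $\bar F_{\gamma_n}$.

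First, I would rewrite the hypothesis \eqref{e_e} as the equivalent pair of strict inequalities $E > D(Q_{\sf X}\|\pi_{\sf X})$ and $E + R > D(Q_{\sf X}\|\pi_{\sf X}) + I(Q_{\sf U},Q_{\sf X|U})$ (the second absorbs the first when $R<I$ and is implied by the first when $R\ge I$). The first inequality lets me pick $E' \in (D(Q_{\sf X}\|\pi_{\sf X}),\,E)$; the second ensures there is slack in the empirical-mean comparison below.

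Next, I apply \eqref{t3} with the assignments $\gamma \leftarrow \gamma_n$, $M \leftarrow M_n$, $\gamma_2 \leftarrow \exp(nE')$, and $\epsilon \leftarrow \gamma_n/2$, together with $Q_{X^n} = Q_{\sf X}^{\otimes n}$ and $Q_{X^n|U^n} = Q_{\sf X|U}^{\otimes n}$, to obtain
\begin{align*}
\mathbb{E}\bigl[\bar F_{\gamma_n}\bigr]
&\le \mathbb{P}\!\left[\tfrac{1}{n}\imath_{Q_{\sf X}^{\otimes n}\|\pi_{\sf X}^{\otimes n}}(X^n) > E'\right] \\
&\quad+ \mathbb{P}\!\left[\tfrac{1}{n}\log\tfrac{{\rm d}Q_{\sf X|U}^{\otimes n}}{{\rm d}\pi_{\sf X}^{\otimes n}}(X^n|U^n) > \tfrac{1}{n}\log\tfrac{M_n \gamma_n}{2}\right] \\
&\quad+ 2\exp\bigl(-n(E - E')\bigr),
\end{align*}
where $(U^n, X^n) \sim Q_{\sf UX}^{\otimes n}$. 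The third term vanishes because $E' < E$. The first term is the tail probability of an empirical mean of i.i.d.\ random variables with common mean $D(Q_{\sf X}\|\pi_{\sf X})$ exceeding the strictly larger threshold $E'$, hence vanishes by the WLLN. For the second term, the chain-rule identity $\log\tfrac{{\rm d}Q_{\sf X|U}}{{\rm d}\pi_{\sf X}}(x|u) = \imath_{\sf U;X}(u;x) + \imath_{Q_{\sf X}\|\pi_{\sf X}}(x)$ shows that the empirical mean on the left converges in probability to $I(Q_{\sf U},Q_{\sf X|U}) + D(Q_{\sf X}\|\pi_{\sf X})$, while the threshold on the right converges to $R + E$, which by construction strictly exceeds that limit; a second invocation of the WLLN disposes of this term as well.

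No substantial obstacle is expected beyond bookkeeping. The conceptual point is that condition \eqref{e_e} is arranged precisely so that a single choice of $E'$ makes both WLLN arguments succeed simultaneously, while $\epsilon = \gamma_n/2$ is large enough (on the exponential scale) to kill the residual $\gamma_2/\epsilon$ term. One could alternatively use the full Theorem~\ref{thm3}, retaining $\tau$ and $\delta$, but the weakened form \eqref{t3} already suffices because $\gamma_n$ grows exponentially in $n$, as noted in Remark~\ref{remweaken}.
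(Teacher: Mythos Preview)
Your proposal is correct and follows essentially the same approach as the paper: both apply the weakened bound from Remark~\ref{remweaken} (the paper uses \eqref{tt1}, you use the slightly tighter \eqref{t3}), choose the free parameters to scale exponentially via an intermediate exponent $E'$, and then invoke the weak law of large numbers on the two probability terms while the residual term decays as $\exp(-n(E-E'))$. The only cosmetic difference is that the paper picks $E'$ strictly above $D(Q_{\sf X}\|\pi_{\sf X})+[I(Q_{\sf U},Q_{\sf X|U})-R]^+$ so that both WLLN thresholds are governed by $E'$, whereas you pick $E'$ above $D(Q_{\sf X}\|\pi_{\sf X})$ only and let the second threshold be governed by $E$ itself; both choices work.
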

\begin{proof}Choose $E'$ such that
\begin{align}
E>E'>D(Q_{\sf X}||{{\pi}}_{\sf X})+[I(Q_{\sf U},Q_{\sf X|U})-R]^+.
\end{align}
Set $\delta=\frac{1}{2}$, $\epsilon_n=\exp(nE)-\exp(nE')$ and $\sigma_n=\frac{1}{2}(\gamma_n-\epsilon_n)=\frac{1}{2}\exp(nE')$, and apply \eqref{tt1}. Notice that
\begin{align}
\mathbb{E}\left[\log\frac{{\rm d}Q_{X|U}}{{\rm d}{{\pi}}_X}(X|U)\right]=n[I(Q_{\sf U},Q_{\sf X|U})+D(Q_{\sf X}||{{\pi}}_{\sf X})]
\end{align}
where $(X,U)\sim Q_{\sf XU}^{\otimes n}$,
$Q_{X|U}:=Q_{\sf X|U}^{\otimes n}$,
and $\pi_X:=\pi_{\sf X}^{\otimes n}$.
By the law of large numbers, the first and second terms in \eqref{tt1} vanish because
\begin{align}
D(Q_{\sf X}||{{\pi}}_{\sf X})&<E';
\\
I(Q_{\sf U},Q_{\sf X|U})+D(Q_{\sf X}||{{\pi}}_{\sf X})&<E'+R
\end{align}
are satisfied.
\end{proof}
The $Q_{\sf U}$ that minimizes the right hand side of \eqref{e_e} generally does not satisfy $Q_{\sf U}\to Q_{\sf X|U}\to Q_{\sf X}$. This means that in the large deviation analysis, for the best approximation of a target distribution in $E_{\gamma}$, we generally should not generate the codewords according to a distribution that corresponds to the target through the channel. This is a remarkable distinction from approximation in total variation distance, in which case an unmatched input distribution would result in the maximal total variation distance asymptotically. 
However, if we stick to matching input codeword distributions, then a simple and general asymptotic achievability bound can be obtained. Recall that \cite{han1993approximation} defined the \emph{sup-information rate}
\begin{align}
\mb{\bar{I}}(\mb{U};\mb{X}):=\inf\left\{R:\lim_{n\to\infty}
\mathbb{P}\left[\frac{1}{n}\imath_{U^n;X^n}
(U^n;X^n)> R\right]=0\right\}
\end{align}
and the \emph{inf-information rate}
\begin{align}
\mb{\underline{I}}(\mb{U};\mb{X}):=\sup\left\{R:\lim_{n\to\infty}
\mathbb{P}\left[\frac{1}{n}\imath_{U^n;X^n}
(U^n;X^n)< R\right]=0\right\}.
\end{align}
\begin{thm}\label{thm_ma}
For any channel $\mb{W}=(Q_{X^n|U^n})_{n=1}^{\infty}$, sequence of inputs $\mb{U}=(U^n)_{n=1}^{\infty}$, and $G>0$, we have
\begin{align}
S(G,\mb{X})\le \mb{\bar{I}}(\mb{U};\mb{X})-G
\end{align}
where $\mb{X}$ is the output of $\mb{U}$ through the channel $\mb{W}$. As a consequence,
\begin{align}
S(G)\le \sup_{\mb{U}}\mb{\bar{I}}(\mb{U};\mb{X})-G
\label{e110}
\end{align}
\end{thm}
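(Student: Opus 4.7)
The plan is to invoke the matched-target softer-covering lemma (Theorem~\ref{thm3_0}, in the simplified form of Remark~\ref{rem25}) at each blocklength and then convert the resulting expected $\bar{F}_{\gamma_n}$ bound into a deterministic $E_{\gamma_n}$ guarantee via expurgation and Proposition~\ref{prop_1}-\ref{pt13_1}). Concretely, for the given input process $\mb{U}$ and its induced output $\mb{X}$ through $\mb{W}$, we take the target sequence in Definition~\ref{defn14} to be $\pi_{X^n}:=Q_{X^n}$ (the genuine marginal of $U^n$), so the hypothesis $Q_X={{\pi}}_X$ of Remark~\ref{rem25} is in force; this matches Definition~\ref{defn14}, where the target need not arise from the channel but here conveniently does.

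Fix any $R>\mb{\bar I}(\mb{U};\mb{X})-G$ and set $\gamma_n:=\exp(nG)$, $M_n:=\lceil \exp(nR)\rceil$, drawing a random codebook $U^{M_n}$ i.i.d.\ according to $Q_{U^n}$. Remark~\ref{rem25} at blocklength $n$ gives
\begin{align*}
\mathbb{E}\bigl[\bar{F}_{\gamma_n}(Q_{X^n[U^{M_n}]}\|Q_{X^n})\bigr]
&\le \mathbb{P}\!\left[\imath_{U^n;X^n}(U^n;X^n)>\log\tfrac{M_n\gamma_n}{2}\right]+\tfrac{2}{\gamma_n}.
\end{align*}
The second term decays as $2\exp(-nG)$. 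The threshold inside the probability equals $n(R+G)-\log 2+o(1)$, which by the choice $R+G>\mb{\bar I}(\mb{U};\mb{X})$ and the very definition of the sup-information rate makes the first term vanish as $n\to\infty$. Since $E_{\gamma_n}\le \bar{F}_{\gamma_n}$ by Proposition~\ref{prop_1}-\ref{pt13_1}), we conclude $\mathbb{E}[E_{\gamma_n}(Q_{X^n[U^{M_n}]}\|Q_{X^n})]\to 0$, and expurgation extracts a deterministic sequence $(c^{M_n})_n$ with the same vanishing property.

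To meet the uniform-in-$n$ requirement \eqref{e66} in the $\epsilon$-achievability definition, fix any $\epsilon\in(0,1)$: for all $n$ beyond some threshold we already have $E_{\gamma_n}\le\epsilon$ along the expurgated sequence, and for the remaining finitely many blocklengths we replace $c^{M_n}$ with an enlarged (e.g.\ exhaustive) codebook, which leaves $\limsup_n\tfrac{1}{n}\log M_n\le R$ intact. Thus $(G,R,\mb{X})$ is $\epsilon$-achievable for every $\epsilon$, hence achievable, yielding $S(G,\mb{X})\le R$; letting $R\downarrow \mb{\bar I}(\mb{U};\mb{X})-G$ proves the first inequality. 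The bound \eqref{e110} on $S(G)$ is immediate from Definition~\ref{defn_worst}: since the achievability just proved covers \emph{every} input process $\mb U$ with $R>\mb{\bar I}(\mb U;\mb X)-G$, and since $S(G)=\sup_{\mb U} S(G,\mb X)$, we obtain $S(G)\le \sup_{\mb U}\mb{\bar I}(\mb U;\mb X)-G$. The only real technical point is the passage from ``expected error vanishes'' to the $\sup_n$ requirement of \eqref{e66}, handled by the expurgation-plus-padding argument just described; everything else is a direct application of the previously established lemmas.
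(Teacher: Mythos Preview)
Your approach is essentially the one the paper has in mind (Theorem~\ref{thm_ma} is stated without an explicit proof; it is implicitly a direct consequence of the matched softer-covering bound of Remark~\ref{rem25}), and the main line is correct: with $R+G>\mb{\bar I}(\mb U;\mb X)$ the probability term in Remark~\ref{rem25} vanishes by the definition of the sup-information rate, the $2/\gamma_n$ term vanishes since $G>0$, and expurgation together with $E_{\gamma}\le\bar F_{\gamma}$ yields the deterministic codebook. The deduction of \eqref{e110} from $S(G)=\sup_{\mb U}S(G,\mb X)$ is also fine.

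There is one genuine slip in your patch for the finitely many small $n$. Replacing $c^{M_n}$ by an ``exhaustive'' codebook would produce the output $|\mathcal U^n|^{-1}\sum_{u^n}Q_{X^n|U^n=u^n}$, which equals the target $Q_{X^n}$ only when $Q_{U^n}$ is uniform on $\mathcal U^n$; moreover, if $\mathcal U$ is infinite there is no exhaustive codebook at all. The correct and simpler fix is to adjust $\gamma_n$ rather than the codebook: since $E_{\gamma}(P\|Q)\le\bar F_{\gamma}(P\|Q)\to 0$ as $\gamma\to\infty$ whenever $P\ll Q$, for each of the finitely many small $n$ you can keep the expurgated codebook and choose $\gamma_n$ large enough that $E_{\gamma_n}(Q_{X^n[c^{M_n}]}\|Q_{X^n})\le\epsilon$; changing finitely many $\gamma_n$ leaves $\limsup_n\frac{1}{n}\log\gamma_n\le G$ untouched, so \eqref{e65}--\eqref{e66} are satisfied.
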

For channels satisfying the \emph{strong converse property}, the right hand side of \eqref{e110} can be related to the channel capacity because of the relations \cite{han1993approximation}\cite{verdu1994general}
\begin{align}
\sup_{\mb{U}}\mb{\bar{I}}(\mb{U};\mb{X})
=\sup_{\mb{U}}\mb{\underline{I}}(\mb{U};\mb{X})\equiv C(\mb{W}).
\end{align}

We conclude the subsection by remarking that had we used the soft-covering lemma to bound total variation distance and in turn, bounded the excess relative information with total variation distance, we would not have obtained Theorem~\ref{thm_ma}.
Indeed, consider $M_n=\exp(nR)$ and let $V_1,\dots,V_{M_n}$ be i.i.d.~according to $Q_{U^n}$.
Regardless of how fast $\gamma_n$ grows, we cannot conclude from the optimal upper bound \eqref{ub} that
\begin{align}\label{emetric}
\mathbb{E}[\bar{F}_{\gamma_n}(Q_{X^n[V^{M_n}]}
\|
{{\pi}}_{X^n})
]
\end{align}
vanishes unless $\mathbb{E}|Q_{X^n[V^{M_n}]}-{{\pi}}_{X^n}|$ vanishes, which happens when $R>\bar{\mb{I}}(\mb{U};\mb{X})$ by the conventional resolvability theorem. This gives an upper bound $\bar{\mb{I}}(\mb{U};\mb{X})$ which is looser than Theorem~\ref{thm_ma} when $G>0$.

\subsection{Tail Bound of Approximation Error for Random Codebooks}\label{sec_tail}
For applications such as secrecy and channel synthesis, it is sometimes desirable to prove that the approximation error vanishes not only in expectation (e.g.~Theorem~\ref{thm3}), but also with \emph{high probability} (see Footnote~\ref{ft_high}), in the case of a random codebook \cite{cuff2016}\cite{Goldfeld2016}\cite{tahmasbi2016}.
If the probability that the approximation error exceeds an arbitrary positive number vanishes doubly exponentially in the blocklength, then the analyses in these applications carry through because a union bound argument can be applied to exponentially many events.
Previous proofs (e.g.~\cite{cuff2016}) based on carefully applying Chernoff bounds to each $Q_{X[U^M]}(x)-Q_X(x)$ and then taking the union bound over $x$ require finiteness of the alphabets.

Here we adopt a different approach.
Using concentration inequalities we can directly bound the probability that the error $E_{\gamma}(Q_{X[U^M]}\|Q_X)$ deviates from its expectation, without any restrictions on the alphabet and in fact the bound only depends on the number of codewords.
Therefore if the rate is high enough for the approximation error to vanish in expectation (by Theorem~~\ref{thm3}),
we can also conclude that the error vanishes with high probability.
The crux of the matter is thus resolved by the following one-shot result:
\begin{thm}
Fix ${{\pi}}_X$ and $Q_{UX}=Q_UQ_{X|U}$. For an arbitrary
codebook $[c_1,\dots,c_M]$,
define $Q_{X[c^M]}$
as in \eqref{e80_1st}.
Then, for any $r>0$,
{\small
\begin{align}
\mathbb{P}[E_{\gamma}(Q_{X[U^M]}\|\pi_X)
-\mathbb{E}[E_{\gamma}(Q_{X[U^M]}\|\pi_X)]>r]
\le
\exp(-2Mr^2)
\label{e144}
\end{align}
}
where the probability and the expectation are with respect to $U^M\sim Q_X\times Q_X$.
\end{thm}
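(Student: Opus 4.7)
The plan is to apply McDiarmid's bounded differences inequality, so the main task is to show that the functional $f(c_1,\ldots,c_M) := E_{\gamma}(Q_{X[c^M]}\|\pi_X)$ satisfies a bounded difference property with constant $1/M$ in each coordinate. Given this, concentration with exponent $2r^2/\sum_m(1/M)^2 = 2Mr^2$ follows immediately.

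To establish the bounded differences, fix $m\in\{1,\ldots,M\}$ and two codebooks $c^M$ and $\tilde{c}^M$ differing only in the $m$-th coordinate. From \eqref{e80_1st} we have
\begin{align}
Q_{X[c^M]}-Q_{X[\tilde{c}^M]}=\tfrac{1}{M}\bigl(Q_{X|U=c_m}-Q_{X|U=\tilde{c}_m}\bigr),
\end{align}
which is a signed measure of total variation at most $2/M$, so for any event $\mathcal{A}$,
\begin{align}
\bigl|Q_{X[c^M]}(\mathcal{A})-Q_{X[\tilde{c}^M]}(\mathcal{A})\bigr|\le \tfrac{1}{M}.
\end{align}
Using the variational representation in \eqref{e_prop3}, for any $\mathcal{A}$,
\begin{align}
Q_{X[c^M]}(\mathcal{A})-\gamma\pi_X(\mathcal{A})\le \tfrac{1}{M}+Q_{X[\tilde{c}^M]}(\mathcal{A})-\gamma\pi_X(\mathcal{A})\le \tfrac{1}{M}+E_{\gamma}(Q_{X[\tilde{c}^M]}\|\pi_X),
\end{align}
and taking the supremum over $\mathcal{A}$ gives $E_{\gamma}(Q_{X[c^M]}\|\pi_X)-E_{\gamma}(Q_{X[\tilde{c}^M]}\|\pi_X)\le 1/M$; the reverse direction is symmetric.

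With this Lipschitz-in-each-coordinate property in hand, and since $U_1,\ldots,U_M$ are independent under $Q_U^{\otimes M}$, McDiarmid's inequality yields \eqref{e144} directly. The only mildly subtle point is recognizing that we should work with the variational/sup-representation \eqref{e_prop3} rather than with the definition \eqref{e_prop1} involving the relative information, because the former makes the bounded-differences estimate immediate and completely alphabet-free—no manipulation of Radon--Nikodym derivatives is required, which is precisely the advantage over the Chernoff-style pointwise arguments of \cite{cuff2016}. No finiteness or stationarity hypotheses on the alphabet enter at any point.
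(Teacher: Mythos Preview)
Your proposal is correct and follows essentially the same approach as the paper: both establish a $1/M$ bounded-differences property for $c^M\mapsto E_{\gamma}(Q_{X[c^M]}\|\pi_X)$ and then invoke McDiarmid's inequality. The only cosmetic difference is that the paper cites the triangle inequality \eqref{e19} (with $\gamma_1=1$) to obtain the bounded difference, whereas you unfold that inequality by working directly with the variational form \eqref{e_prop3}; since the proof of \eqref{e19} in the paper is exactly this variational argument, the two are equivalent.
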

\begin{proof}
Consider $f\colon c^M\mapsto E_{\gamma}(Q_{X[c^M]}\|\pi_X)$.
By the definition \eqref{e80_1st} and the triangle inequality \eqref{e19}, we have the following uniform bound on the discrete derivative:
\begin{align}
\sup_{c,c'\in\mathcal{X}}|f(c_1^{i-1},c, c_{i+1}^M)-f(c_1^{i-1},c', c_{i+1}^M)|
\le \frac{1}{M},
\quad \forall i,\,c^M.
\end{align}
The result then follows by McDiarmid's inequality (see e.g.~\cite[Theorem~2.2.3]{raginsky2014concentration}).
\end{proof}
\begin{rem}
If we are interested in bounding both the upper and the lower tails then the right side of \eqref{e144} gains a factors of $2$.
Other concentration inequalities may also be applied here; the transportation method gives the same bound in this example.
\end{rem}

\subsection{Converse for Stationary Memoryless Outputs}\label{sec_tens}
In this section we establish a converse of resolvability for stationary memoryless outputs and discrete memoryless channels which matches the achievability bound of Corollary~\ref{cor_asymp} asymptotically.

\begin{thm}[Resolvability for Stationary Memoryless Outputs]\label{thm_conv}
For a DMC $Q_{\sf X|U}$ and a nonnegative finite measure $\pi_{\sf X}$,
\begin{align}
G_{\epsilon}(R,{{\pi}}_{\sf X})= \min_{Q_{\sf U}}
\{D(Q_{\sf X}\|{{\pi}}_{\sf X})+[I(Q_{\sf U},Q_{\sf X|U})-R]^+\}
\label{e_vi5}
\end{align}
where $Q_{\sf U}\to Q_{\sf X|U}\to Q_{\sf X}$,
for any $0<\epsilon<1$.
\end{thm}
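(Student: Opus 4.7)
The proof splits into achievability ($\le$) and converse ($\ge$). Achievability follows directly from Corollary~\ref{cor_asymp}: for every $Q_{\sf U}$ (write $Q_{\sf X} := Q_{\sf U} Q_{\sf X|U}$) and every $E > D(Q_{\sf X}\|\pi_{\sf X}) + [I(Q_{\sf U},Q_{\sf X|U}) - R]^+$, drawing codewords i.i.d.\ from $Q_{\sf U}^{\otimes n}$ with $\gamma_n = \exp(nE)$ and $M_n = \lfloor\exp(nR)\rfloor$ forces $\mathbb{E}[E_{\gamma_n}(Q_{X[U^{M_n}]}\|\pi_{\sf X}^{\otimes n})] \to 0$; the probabilistic method then produces deterministic codebooks with $E_{\gamma_n} \le \epsilon$ for large $n$. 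Infimizing over $Q_{\sf U}$ yields $G_\epsilon(R,\pi_{\sf X}) \le \min_{Q_{\sf U}}\{D(Q_{\sf X}\|\pi_{\sf X}) + [I(Q_{\sf U},Q_{\sf X|U}) - R]^+\}$ for every $\epsilon \in (0,1)$.

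For the converse, suppose $(G,R,\pi_{\sf X})$ is $\epsilon$-achievable via a codebook $c^{M_n}$ with $M_n \le \exp(nR + o(n))$ and $\gamma_n \le \exp(nG + o(n))$. The plan is to exhibit a set $\mathcal{A}_n \subseteq \mathcal{X}^n$ with $Q_{X[c^{M_n}]}(\mathcal{A}_n) \ge 1 - o(1)$ and $\pi_{\sf X}^{\otimes n}(\mathcal{A}_n) \le \exp(-n\min_{Q_{\sf U}}\{\cdots\} + o(n))$, and then to invoke Proposition~\ref{prop3}-\ref{prop3_1}), namely $\pi_{\sf X}^{\otimes n}(\mathcal{A}_n) \ge \gamma_n^{-1}(Q_{X[c^{M_n}]}(\mathcal{A}_n) - \epsilon) \ge (1 - \epsilon - o(1))/\gamma_n$, to conclude $G \ge \min_{Q_{\sf U}}\{\cdots\}$.

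To build $\mathcal{A}_n$, I would partition the codewords by their empirical $\mathcal{U}$-type: $\mathcal{C}(Q_{\sf U}) := \{m : \hat{Q}_{c_m} = Q_{\sf U}\}$, of which at most $(n+1)^{|\mathcal{U}|}$ are non-empty. Take $\mathcal{A}_n := \bigcup_m T_\delta^n(Q_{\sf X|U}\mid c_m)$, the union of conditional strongly-$\delta$-typical output sets, so that $Q_{X[c^{M_n}]}(\mathcal{A}_n) \ge 1 - o_n(1)$ by the conditional typicality lemma. Writing $Q_{\sf X} := Q_{\sf U} Q_{\sf X|U}$ and $\mathcal{A}_{n,Q_{\sf U}} := \bigcup_{m \in \mathcal{C}(Q_{\sf U})} T_\delta^n(Q_{\sf X|U}\mid c_m)$, the cardinality of $\mathcal{A}_{n,Q_{\sf U}}$ admits two bounds --- $|\mathcal{C}(Q_{\sf U})| \exp(n H(Q_{\sf X|U}|Q_{\sf U}) + o(n))$ (union of conditional typical sets) and $\exp(n H(Q_{\sf X}) + o(n))$ (since $\mathcal{A}_{n,Q_{\sf U}} \subseteq T^n_{O(\delta)}(Q_{\sf X})$) --- while each of its elements has $\pi_{\sf X}^{\otimes n}$-mass $\exp(-n(H(Q_{\sf X}) + D(Q_{\sf X}\|\pi_{\sf X})) + o(n))$. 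Multiplying,
\begin{align*}
\pi_{\sf X}^{\otimes n}(\mathcal{A}_{n,Q_{\sf U}}) &\le \exp(-nD(Q_{\sf X}\|\pi_{\sf X}) + o(n)) \\
&\quad \times \min\{\exp(n(R - I(Q_{\sf U},Q_{\sf X|U}))), 1\} \\
&= \exp(-n(D(Q_{\sf X}\|\pi_{\sf X}) + [I(Q_{\sf U},Q_{\sf X|U}) - R]^+) + o(n)),
\end{align*}
and union-bounding over the polynomially many types delivers the required estimate on $\pi_{\sf X}^{\otimes n}(\mathcal{A}_n)$.

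The main obstacle is isolating the clipped exponent $[I - R]^+$ rather than an unclipped $I - R$: this requires the two-sided cardinality estimate, with the codeword-count bound binding when $I > R$ and the output type-class size bound binding when $I \le R$; neither alone suffices. The polynomial slack from unioning over types and the $\delta$-slack from strong typicality collapse into $o(1)$ rate terms after dividing by $n$, so the strong converse holds for every $\epsilon \in (0,1)$.
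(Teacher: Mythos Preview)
Your proposal is correct and follows essentially the same route as the paper: take $\mathcal{A}_n$ to be the union of conditionally typical output sets around the codewords, show $Q_{X[c^{M_n}]}(\mathcal{A}_n)\to 1$, bound $\pi_{\sf X}^{\otimes n}(\mathcal{A}_n)$ via two competing estimates (one using the codebook size, one using the output type-class size) so that the minimum produces the clipped term $[I-R]^+$, and then read off the conclusion from the definition of $E_\gamma$. The only tactical difference is that the paper decomposes $\mathcal{A}_n$ by \emph{output} types $P_{\sf X}$ (intersecting with $T_{P_{\sf X}}$, Lemma~\ref{lem_vi7}), whereas you decompose by the \emph{input} type $Q_{\sf U}$ of the codewords; both work, and the paper's choice has the mild advantage that on a fixed output type class the $\pi_{\sf X}^{\otimes n}$-mass of each sequence is exact rather than $O(\delta)$-approximate.

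One technical point you gloss over deserves care. Your argument simultaneously requires (i) a \emph{uniform} conditional typicality bound $Q^{\otimes n}_{\sf X|U}(T_\delta(\cdot|u^n)|u^n)\ge 1-o_n(1)$ valid for \emph{every} $u^n$, and (ii) a per-sequence probability lower bound giving $|T_\delta(\cdot|u^n)|\le\exp(nH(Q_{\sf X|U}|\widehat{P}_{u^n})+o(n))$. The paper introduces a bespoke ``moderate conditional typicality'' (Definition~\ref{defn_moderate}) exactly because the two standard definitions each fail one of these: robust typicality in the sense of \cite{orlitsky2001}/\cite{el2011network} does not give the uniform bound (i) when some letter count $N(a|u^n)$ is small, while Csisz\'ar--K\"orner strong typicality does not give the clean size bound (ii) when $Q_{\sf X|U}(b|a)=0$ for some $(a,b)$. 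Your invocation of ``the conditional typicality lemma'' needs to specify a definition that delivers both; the paper's Definition~\ref{defn_moderate} together with Lemmas~\ref{lem_delta}--\ref{lem_vi7} is one way to do this.
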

\begin{rem}\label{rem_27}
When resolvability was introduced in \cite{han1993approximation}, the resolvability rate (under total variation distance) was formulated for
outputs of stationary memoryless \emph{inputs}, rather than all the tensor power distributions on the output alphabet, because otherwise there is no guarantee that the output process can be approximated under total variation distance even with an arbitrarily large codebook. Here we can extend the scope because all stationary memoryless distributions on the output alphabet (satisfying the mild condition of being absolutely continuous with respect to some output) can be approximated under $E_{\gamma}$ as long as $\gamma$ is sufficiently large.
\end{rem}
The achievability part of Theorem~\ref{thm_conv} is already shown in Corollary~\ref{cor_asymp}.
For the converse, we need a notion of conditional typicality specially tailored for our problem which differs from the definitions of conditional typicality in \cite{csiszar1981information} or \cite{orlitsky2001} (see also \cite{el2011network}).
This can be viewed as an intermediate of the those two definitions.
\begin{defn}[Moderate Conditional Typicality]\label{defn_moderate}
The $\delta$-typical set of $u^n\in\mathcal{U}^n$ with respect to the discrete memoryless channel with per-letter conditional distribution $Q_{\sf X|U}$ is defined as
\begin{align}
&\quad T_{[Q_{\sf X|U}]\delta}^n(u^n)
:=
\nonumber\\
&\{x^n:\forall a,b,~|\widehat{P}_{u^nx^n}(a,b)-Q_{\sf X|U}(b|a)\widehat{P}_{u^n}(a)|\le \delta Q_{\sf X|U}(b|a)\}
\label{e_vi6}
\end{align}
where $\widehat{P}_{u^nx^n}$ denotes the empirical distribution of $(u^n,x^n)$.
\end{defn}
\begin{rem}
In addition to its broad interest, Definition~\ref{defn_moderate} plays an important role in obtaining the uniform bound in Lemma 35, as well as in Lemma 36.
This definition of conditional typicality is of broad interest because of Lemma~\ref{lem_delta} and Lemma~\ref{lem_vi6} ahead, and in particular the uniform bound in Lemma~\ref{lem_delta}.
Note that the
definition in \cite{csiszar1981information}
corresponds to replacing the term $\delta Q_{\sf X|U}(b|a)$ in \eqref{e_vi6} with $\delta$, in which case we cannot bound the probability of a sequence in the typical set as in Lemma~\ref{lem_vi6}.
The ``robust typicality" definition of \cite{orlitsky2001} (see also \cite{el2011network})
corresponds to replacing this term with $\delta Q_{\sf X|U}(b|a)\widehat{P}_{u^n}(a)$, which does not give the uniform lower bound on the probability of conditional typical set as in Lemma~\ref{lem_delta}.
\end{rem}
\begin{lem}\label{lem_delta}
For fixed $\delta>0$ and $Q_{\sf X|U}$, there exists a sequence $(\gamma_n)$ such that $\lim_{n\to\infty}\gamma_n=0$ and
\begin{align}
Q_{\sf X|U}^{\otimes n}(T_{[Q_{
\sf X|U}]\delta}^n(u^n)|u^n)\ge 1-\gamma_n
\end{align}
for all $u^n\in\mathcal{U}^n$.
\end{lem}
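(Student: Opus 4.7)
The plan is to condition on $u^n$, exploit the conditional independence of the coordinates of $x^n$ grouped by symbol, and apply a standard concentration inequality in a way that produces a bound depending only on $Q_{\sf X|U}$ and $\delta$, not on $u^n$. The crucial feature of Definition~\ref{defn_moderate} that I want to leverage is that the tolerance $\delta Q_{\sf X|U}(b|a)$ does \emph{not} carry a factor of $\widehat P_{u^n}(a)$; this is precisely what makes the uniform control possible.

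First, I would fix $u^n \in \mathcal U^n$ and set $n_a := n\widehat{P}_{u^n}(a)$ for each $a \in \mathcal U$. Conditionally on $u^n$, the random variable $n\widehat{P}_{u^n x^n}(a,b) = \sum_{i\,:\,u_i = a}\mathbf 1\{x_i = b\}$ is a sum of $n_a$ independent Bernoulli trials with success probability $p_{a,b} := Q_{\sf X|U}(b|a)$. The typicality condition in \eqref{e_vi6} is then
\[
\bigl| n\widehat{P}_{u^n x^n}(a,b) - n_a p_{a,b}\bigr| \;\le\; n\delta\, p_{a,b}.
\]
Two degenerate cases are automatic: if $n_a = 0$ then both sides involving $(a,\cdot)$ vanish, and if $p_{a,b}=0$ the sum is identically zero while the right-hand side is also zero.

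For the remaining pairs $(a,b)$ with $p_{a,b} > 0$, I would apply Hoeffding's inequality to the sum of $n_a$ $[0,1]$-valued independent variables with threshold $t = n\delta p_{a,b}$, obtaining
\[
\mathbb P\bigl[\,|n\widehat P_{u^n x^n}(a,b) - n_a p_{a,b}| > n\delta p_{a,b}\,\bigm| u^n\bigr] \;\le\; 2\exp\!\Bigl(-\tfrac{2 n^2\delta^2 p_{a,b}^2}{n_a}\Bigr) \;\le\; 2\exp(-2n\delta^2 p_{a,b}^2),
\]
where the last inequality uses $n_a \le n$. Note well that the $n$ in the numerator of $t^2$ cancels only \emph{one} factor of $n_a$, so the final bound depends on $n$ and the fixed quantity $p_{a,b}$, never on $n_a$ itself — this is exactly where the choice of tolerance in \eqref{e_vi6} pays off.

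Finally, I would take a union bound over the finite set $\{(a,b) \in \mathcal U \times \mathcal X : p_{a,b} > 0\}$, producing
\[
Q_{\sf X|U}^{\otimes n}\bigl(T^n_{[Q_{\sf X|U}]\delta}(u^n)^c \bigm| u^n\bigr) \;\le\; \sum_{(a,b)\,:\,p_{a,b} > 0} 2\exp(-2n\delta^2 p_{a,b}^2) \;\le\; 2|\mathcal U||\mathcal X|\exp(-2n\delta^2 p_{\min}^2),
\]
with $p_{\min} := \min\{p_{a,b} : p_{a,b} > 0\} > 0$. Since the right-hand side is independent of $u^n$ and decays exponentially, setting $\gamma_n$ equal to this expression completes the proof. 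There is no real obstacle here; the only subtle point — and the reason this lemma does not hold verbatim for the ``robust typicality'' definition of \cite{orlitsky2001} — is the cancellation that yields a bound in $p_{a,b}$ alone rather than in $p_{a,b}\widehat P_{u^n}(a)$, which would degenerate whenever $\widehat P_{u^n}(a)$ is small.
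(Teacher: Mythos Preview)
Your proof is correct and follows essentially the same structure as the paper's: fix $u^n$, bound the violation probability for each pair $(a,b)$ by a concentration inequality, note that the bound is independent of $n_a$ because the tolerance in \eqref{e_vi6} scales with $n$ rather than with $n_a$, and take a union bound. The only difference is that the paper uses Chebyshev's inequality (yielding $\gamma_n = \tfrac{|\mathcal U||\mathcal X|}{n\delta^2}(\tfrac{1}{q}-1)$ with $q=p_{\min}$), whereas you use Hoeffding's inequality and obtain an exponentially decaying $\gamma_n$; since the lemma only asks for $\gamma_n\to 0$, either suffices, though your bound is sharper.
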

\begin{proof}
We show that the statement holds with
\begin{align}
\gamma_n=\frac{|\mathcal{U}\|\mathcal{X}|}
{n\delta^2}\left(\frac{1}{q}-1\right),
\end{align}
where
\begin{align}
q:=\min_{(a,b):Q_{\sf X|U}(b|a)\neq 0} Q_{\sf X|U}(b|a).
\end{align}
The \emph{number of occurrences} $N(a,b|u^n,X^n)$ of $(a,b)\in\mathcal{U}\times\mathcal{X}$ in $(u^n,X^n)$, where $X^n\sim \prod_{i=1}^n Q_{{\sf X|U}=u_i}$, is binomial with mean $N(a|u^n)Q_{\sf X|U}(b|a)$ and variance $N(a|u^n)Q_{\sf X|U}(b|a)(1-Q_{\sf X|U}(b|a))$. If $Q_{\sf X|U}(b|a)=0$ the condition that defines the set in \eqref{e_vi6} is automatically true. Otherwise, by Chebyshev's inequality we have for each $(a,b)$,
\begin{align}
&\quad\mathbb{P}[|N(a,b|u^n,X^n)-N(a|u^n)Q_{\sf X|U}(b|a)|>n\delta Q_{\sf X|U}(b|a)]
\nonumber
\\
&\le \frac{N(a|u^n)Q_{\sf X|U}(b|a)(1-Q_{\sf X|U}(b|a))}{n^2\delta^2Q_{\sf X|U}^2(b|a)}
\\
&\le \frac{1}{n\delta^2}\left(\frac{1}{q}-1\right)
\end{align}
and the claim follows by taking the union bound.
\end{proof}
\begin{lem}\label{lem_vi6}
For each $u^n$, and $x^n\in T_{[Q_{\sf X|U}]\delta}^n(u^n)$, we have the bound
\begin{align}
Q_{\sf X|U}^{\otimes n}(x^n|u^n)
\ge \exp(-n[H(Q_{\sf X|U}|\widehat{P}_{u^n})+\delta|\mathcal{U}|\log|\mathcal{X}|]).
\end{align}
\end{lem}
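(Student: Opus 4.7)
The plan is a direct calculation starting from the product form
\begin{align*}
-\log Q_{\sf X|U}^{\otimes n}(x^n|u^n) = n\sum_{a,b} \widehat{P}_{u^nx^n}(a,b)\log\frac{1}{Q_{\sf X|U}(b|a)},
\end{align*}
and then replacing the empirical mass $\widehat{P}_{u^nx^n}(a,b)$ by its ``ideal'' value $Q_{\sf X|U}(b|a)\widehat{P}_{u^n}(a)$ with the help of moderate conditional typicality.

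First I would dispose of the pairs $(a,b)$ with $Q_{\sf X|U}(b|a)=0$: the defining inequality of $T_{[Q_{\sf X|U}]\delta}^n(u^n)$, namely $|\widehat{P}_{u^nx^n}(a,b)-Q_{\sf X|U}(b|a)\widehat{P}_{u^n}(a)|\le\delta Q_{\sf X|U}(b|a)$, forces $\widehat{P}_{u^nx^n}(a,b)=0$ for such pairs, so $Q_{\sf X|U}^{\otimes n}(x^n|u^n)>0$ and these pairs drop out of the sum with the convention $0\log\infty=0$. On the remaining pairs, the same inequality gives the one-sided bound
\begin{align*}
\widehat{P}_{u^nx^n}(a,b)\le Q_{\sf X|U}(b|a)\bigl[\widehat{P}_{u^n}(a)+\delta\bigr].
\end{align*}
Since $\log\frac{1}{Q_{\sf X|U}(b|a)}\ge 0$, substituting this bound and splitting the sum yields
\begin{align*}
-\log Q_{\sf X|U}^{\otimes n}(x^n|u^n)\le n\,H(Q_{\sf X|U}|\widehat{P}_{u^n})+n\delta\sum_{a\in\mathcal{U}} H(Q_{\sf X|U}(\cdot|a)).
\end{align*}
Each per-letter conditional entropy is at most $\log|\mathcal{X}|$, so the correction term is bounded by $n\delta|\mathcal{U}|\log|\mathcal{X}|$, and exponentiating gives the claim.

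There is no real obstacle here beyond the bookkeeping above. The reason the proof goes through so cleanly is the design of Definition~\ref{defn_moderate}: the tolerance in the typicality constraint scales with $Q_{\sf X|U}(b|a)$, which is exactly what allows the correction term to collapse into a sum of per-letter conditional entropies (hence the mild $\log|\mathcal{X}|$ bound) rather than something of order $\log\frac{1}{q}$ with $q$ the minimum positive conditional probability, which would arise from a $\delta$-only tolerance in the style of \cite{csiszar1981information}.
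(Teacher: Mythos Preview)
Your proof is correct and follows essentially the same route as the paper: write $-\tfrac{1}{n}\log Q_{\sf X|U}^{\otimes n}(x^n|u^n)=\sum_{a,b}\widehat{P}_{u^nx^n}(a,b)\log\frac{1}{Q_{\sf X|U}(b|a)}$, use the typicality constraint to bound $\widehat{P}_{u^nx^n}(a,b)\le Q_{\sf X|U}(b|a)[\widehat{P}_{u^n}(a)+\delta]$, and collapse the correction term into $\delta\sum_a H(Q_{{\sf X|U}=a})\le\delta|\mathcal{U}|\log|\mathcal{X}|$. Your explicit handling of the $Q_{\sf X|U}(b|a)=0$ pairs is a bit more careful than the paper's presentation, but the argument is the same.
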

\begin{proof}
Since
\begin{align}
Q_{\sf X|U}^{\otimes n}(x^n|u^n)=\prod_{a\in\mathcal{U},b\in\mathcal{X}}
Q_{\sf X|U}(b|a)^{N(a,b|u^n,x^n)},
\end{align}
we have
\begin{align}
&\quad\frac{1}{n}\log\frac{1}{Q_{\sf X|U}^{\otimes n}(x^n|u^n)}
\nonumber\\
&=\sum_{a,b}\widehat{P}_{u^nx^n}(a,b)\log\frac{1}{Q_{\sf X|U}(b|a)}
\\
&\le \sum_{a,b}[Q_{\sf X|U}(b|a)\widehat{P}_{u^n}(a)+\delta Q_{\sf X|U}(b|a)]\log\frac{1}{Q_{\sf X|U}(b|a)}
\\
&=H(Q_{\sf X|U}|\widehat{P}_{u^n})+\delta \sum_a H(Q_{{\sf X|U}=a})
\\
&\le H(Q_{\sf X|U}|\widehat{P}_{u^n})+\delta |\mathcal{U}|\log|\mathcal{X}|.
\end{align}
\end{proof}
\begin{lem}\label{lem_vi7}
For any type $P_{\sf X}$ and sequence $u^n$,
\begin{align}
|T_{[Q_{\sf X|U}]\delta}^n(u^n)\cap T_{P_{\sf X}}|
\le \exp(n[H_{[P_{\sf X}]\delta}+\delta |\mathcal{U}|\log|\mathcal{X}|])
\end{align}
where we have defined
\begin{align}
H_{[P_{\sf X}]\delta}:=\max_{Q_{\sf U}:|Q_{\sf X}-P_{\sf X}|\le \delta|\mathcal{U}|}H(Q_{\sf X|U}|Q_{\sf U})
\label{e_vi19}
\end{align}
where $Q_{\sf U}\to Q_{\sf X|U}\to Q_{\sf X}$,
and the maximum in \eqref{e_vi19} is understood as $-\infty$ if the set $\{Q_{\sf U}:|Q_{\sf X}-P_{\sf X}|\le \delta|\mathcal{U}|\}$ is empty.
\end{lem}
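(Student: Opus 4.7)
The plan is to use the standard inverse-probability argument. Since $\sum_{x^n}Q_{\sf X|U}^{\otimes n}(x^n|u^n)\le 1$, restricting the sum to the set of interest and invoking Lemma~\ref{lem_vi6} for a uniform lower bound on each summand will immediately produce an upper bound on the cardinality of that set. The whole question is then whether the exponent appearing in Lemma~\ref{lem_vi6}, which depends on $\widehat{P}_{u^n}$, can be replaced by the $\widehat{P}_{u^n}$-free quantity $H_{[P_{\sf X}]\delta}$.

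More concretely, I would first dispatch the case $T_{[Q_{\sf X|U}]\delta}^n(u^n)\cap T_{P_{\sf X}}=\emptyset$, in which the cardinality is zero and the inequality is trivial (the convention $-\infty$ does not even enter). Assuming the intersection is nonempty, Lemma~\ref{lem_vi6} gives $Q_{\sf X|U}^{\otimes n}(x^n|u^n)\ge \exp(-n[H(Q_{\sf X|U}|\widehat{P}_{u^n})+\delta|\mathcal{U}|\log|\mathcal{X}|])$ for every $x^n$ in the intersection, so
\[
|T_{[Q_{\sf X|U}]\delta}^n(u^n)\cap T_{P_{\sf X}}|\le \exp(n[H(Q_{\sf X|U}|\widehat{P}_{u^n})+\delta|\mathcal{U}|\log|\mathcal{X}|]).
\]
It remains to prove $H(Q_{\sf X|U}|\widehat{P}_{u^n})\le H_{[P_{\sf X}]\delta}$, that is, to verify that the choice $Q_{\sf U}:=\widehat{P}_{u^n}$ is feasible in the maximization \eqref{e_vi19}.

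Feasibility is where the specific form of moderate conditional typicality pays off. Let $Q_{\sf X}$ denote the output marginal induced by $\widehat{P}_{u^n}$ through $Q_{\sf X|U}$. Fix any $x^n$ in the nonempty intersection; summing the inequality defining \eqref{e_vi6} over $a\in\mathcal{U}$ (using the triangle inequality) yields $|P_{\sf X}(b)-Q_{\sf X}(b)|\le \delta\sum_a Q_{\sf X|U}(b|a)$ for every $b\in\mathcal{X}$, and summing once more over $b$ produces $|P_{\sf X}-Q_{\sf X}|\le \delta\sum_{a,b}Q_{\sf X|U}(b|a)=\delta|\mathcal{U}|$. Hence $\widehat{P}_{u^n}$ lies in the feasible set of \eqref{e_vi19}, which completes the proof. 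The only real obstacle is this last verification: it is also the reason the right-hand side of \eqref{e_vi6} is scaled by $Q_{\sf X|U}(b|a)$ rather than by a plain $\delta$, since only with this scaling does the $a$-marginalized slack collapse to the constant $\delta|\mathcal{U}|$ demanded by \eqref{e_vi19}.
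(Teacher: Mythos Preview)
Your proof is correct and follows essentially the same approach as the paper: both use the inverse-probability bound via Lemma~\ref{lem_vi6} to get $|T_{[Q_{\sf X|U}]\delta}^n(u^n)\cap T_{P_{\sf X}}|\le \exp(n[H(Q_{\sf X|U}|\widehat{P}_{u^n})+\delta|\mathcal{U}|\log|\mathcal{X}|])$, and then verify feasibility of $\widehat{P}_{u^n}$ in \eqref{e_vi19} by summing the defining inequality of $T_{[Q_{\sf X|U}]\delta}^n(u^n)$ over $a$ and $b$ to obtain $|P_{\sf X}-Q_{\sf X}|\le\delta|\mathcal{U}|$. The only cosmetic difference is that the paper phrases the feasibility step as a contrapositive (if $|P_{\sf X}-Q_{\sf X|U}\circ\widehat{P}_{u^n}|>\delta|\mathcal{U}|$ then the intersection is empty), whereas you argue it directly from nonemptiness.
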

\begin{proof}
For any $u^n$, we have the upper bound
\begin{align}
|T_{[Q_{\sf X|U}]\delta}^n(u^n)\cap T_{P_{\sf X}}|
&\le |T_{[Q_{\sf X|U}]\delta}^n(u^n)|
\\
&\le \left(\min_{x^n\in T_{[Q_{\sf X|U}]\delta}^n(u^n)}Q_{\sf X|U}^{\otimes n}(x^n|u^n)\right)^{-1}
\\
&\le \exp(n[H(Q_{\sf X|U}|\widehat{P}_{u^n})+\delta|\mathcal{U}|\log|\mathcal{X}|])
\label{e_vi22}
\end{align}
where we used Lemma~\ref{lem_vi6} in \eqref{e_vi22}.
Moreover, if $u^n$ satisfies $|P_{\sf X}-Q_{\sf X|U}\circ \widehat{P}_{u^n}|>\delta |\mathcal{U}|$ where $Q_{\sf X|U}\circ \widehat{P}_{u^n}:=\int Q_{{\sf X|U}=a}{\rm d}\widehat{P}_{u^n}(a)$,
then $T_{[Q_{\sf X|U}]\delta}^n(u^n)\cap T_{P_{\sf X}}$ is empty, because
\begin{align}
|\widehat{P}_{x^n}-Q_{\sf X|U}\circ \widehat{P}_{u^n}|
&=\sum_{a,b}|\widehat{P}_{u^nx^n}(a,b)-Q_{\sf X|U}(b|a)\widehat{P}_{u^n}(a)|
\\
&\le\sum_{a,b}\delta Q_{\sf X|U}(b|a)
\\
&=\delta|\mathcal{U}|
\end{align}
implies that any $x^n$ in $T_{[Q_{\sf X|U}]\delta}^n(u^n)$ does not have the type $P_{\sf X}$. Therefore the desired result follows by taking the maximum of \eqref{e_vi22} over type $Q_{\sf U}$ satisfying $|Q_{\sf X}-P_{\sf X}|\le \delta|\mathcal{U}|$.
\end{proof}

\begin{proof}[Proof of Converse of Theorem~\ref{thm_conv}]
Fix a codebook $(c_1,\dots,c_M)$ and type $P_{\sf X}$.
Define
\begin{align}
\mathcal{A}_n:=\bigcup_{m=1}^M T_{[Q_{\sf X|U}]\delta}^n(c_m).
\end{align}
Then
\begin{align}
&\quad {{\pi}}_{\sf X}^{\otimes n}(\mathcal{A}_n\cap T_{P_{\sf X}})
\nonumber\\
&=\sum_{x^n\in \mathcal{A}_n\cap T_{P_{\sf X}}}
{{\pi}}_{\sf X}^{\otimes n}(x^n)
\\
&= \exp(-n[H(P_{\sf X})+D(P_{\sf X}\|{{\pi}}_{\sf X})])\cdot |\mathcal{A}_n\cap T_{P_{\sf X}}|
\\
&\le \exp(-n[H(P_{\sf X})+D(P_{\sf X}\|{{\pi}}_{\sf X})])\cdot \sum_{m=1}^M|T_{[Q_{\sf X|U}]\delta}^n(c_m)\cap T_{P_{\sf X}}|
\\
&\le \exp(-n[H(P_{\sf X})+D(P_{\sf X}\|{{\pi}}_{\sf X})])
\nonumber\\
&\quad\cdot
M\exp(n[H_{[P_{\sf X}]\delta}+\delta|\mathcal{U}|\log|\mathcal{X}|])
\label{e_vi28}
\\
&=\exp(-n[D(P_{\sf X}\|{{\pi}}_{\sf X})+H(P_{\sf X})
-H_{[P_{\sf X}]\delta}-R-\delta|\mathcal{U}|\log|\mathcal{X}|]).
\label{e_vi29}
\end{align}
where \eqref{e_vi28}
is from Lemma~\ref{lem_vi7}. Whence \eqref{e_vi29} and the trivial bound
\begin{align}
{{\pi}}_{\sf X}^{\otimes n}(\mathcal{A}_n\cap T_{P_{\sf X}})
&\le
{{\pi}}_{\sf X}^{\otimes n}(T_{P_{\sf X}})
\\
&\le \exp(-nD(P_{\sf X}\|{{\pi}}_{\sf X}))
\\
&\le \exp(-n[D(P_{\sf X}\|{{\pi}}_{\sf X})-\delta|\mathcal{U}|\log|\mathcal{X}|])
\end{align}
yield the bound
\begin{align}
{{\pi}}_{\sf X}^{\otimes n}(\mathcal{A}_n\cap T_{P_{\sf X}})
\le
\exp(-n[f(\delta,P_{\sf X})-\delta|\mathcal{U}|\log|\mathcal{X}|]),\label{e_vi32}
\end{align}
where we have defined the function
\begin{align}
f(\delta,P_{\sf X})
&:=D(P_{\sf X}\|{{\pi}}_{\sf X})+[H(P_{\sf X})
-H_{[P_{\sf X}]\delta}-R]^+\label{e_f}
\end{align}
for $\delta>0$ and $P_{\sf X}\ll {{\pi}}_{\sf X}$.
Define\footnote{The reason why we can write minimum in \eqref{e_g} is explained in Remark~\ref{rem_min}.}
\begin{align}
g(\delta)&:=\min_{P_{\sf X}}f(\delta,P_{\sf X}),\label{e_g}
\end{align}
Then
\begin{align}
{{\pi}}_{\sf X}^{\otimes n}(\mathcal{A}_n)
&=\sum_{P_{\sf X}}{{\pi}}_{\sf X}^{\otimes n}(\mathcal{A}_n\cap T_{P_{\sf X}})
\\
&\le \sum_{P_{\sf X}}
\exp(-n[g(\delta)-\delta|\mathcal{U}|\log|\mathcal{X}|])
\label{e_vi36}
\\
&\le (n+1)^{|\mathcal{X}|}\exp(-n[g(\delta)-\delta|\mathcal{U}|\log|\mathcal{X}|])
\label{e_vi37}
\end{align}
where the summation is over all type $P_{\sf X}$ absolutely continuous with respect to ${{\pi}}_{\sf X}$, and \eqref{e_vi36} is from \eqref{e_vi32}. Then for any real number $G<g(\delta)-\delta|\mathcal{U}|\log|\mathcal{X}|$ we have
\begin{align}
&\quad E_{\exp(nG)}(P_{X^n[c^M]}\|{{\pi}}_{\sf X|U}^{\otimes n})
\nonumber\\
&\ge
P_{X^n[c^M]}(\mathcal{A}_n)-\exp(nG){{\pi}}_{\sf X|U}^{\otimes n}(\mathcal{A}_n)
\\
&\ge \frac{1}{M}\sum_{m=1}^M Q_{\sf X|U}^{\otimes n}(T_{[Q_{\sf X|U}]\delta}^n(c_m)|c_m)-\exp(nG){{\pi}}_{\sf X|U}^{\otimes n}(\mathcal{A}_n)
\label{e_183}
\\
&\ge 1-\gamma_n-\exp(nG){{\pi}}_{\sf X|U}^{\otimes n}(\mathcal{A}_n)\label{e_vi40}
\\
&\to 1,\quad n\to\infty.\label{e_vi41}
\end{align}
where
\begin{itemize}
\item \eqref{e_183} uses $T_{[Q_{\sf X|U}]\delta}^n(c_m)\subseteq\mathcal{A}_n$, and we used the notation of the tensor power for the conditional law $Q_{\sf X|U}^{\otimes n}(\cdot|u^n):=\prod_{i=1}^n Q_{{\sf X|U}=u_i}$.
\item \eqref{e_vi40} is from Lemma~\ref{lem_delta},
\item \eqref{e_vi41} uses \eqref{e_vi37}.
\end{itemize}
Since $\delta>0$ was arbitrary, we thus conclude
\begin{align}
G_{\epsilon}(R,{{\pi}}_{\sf X})
&\ge \sup_{\delta>0}\{g(\delta)-\delta|\mathcal{U}|\log|\mathcal{X}|\}
\\
&\ge \liminf_{\delta\to 0}g(\delta)
\\
&\ge g(0) \label{e187}
\end{align}
where \eqref{e187} is from Lemma~\ref{lem_cont}.
Since $g(0)$ is the right side of \eqref{e_vi5}, the converse bound is established.
\end{proof}
\begin{lem}\label{lem_cont}
The functions $f$ and $g$ defined in \eqref{e_f} and \eqref{e_g} are both lower semicontinuous.
\end{lem}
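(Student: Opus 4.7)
The plan is to work in two stages: first establish joint lower semicontinuity of $f(\delta,P_{\sf X})$ as a function of the pair $(\delta,P_{\sf X})\in[0,\infty)\times\{P_{\sf X}:P_{\sf X}\ll\pi_{\sf X}\}$, and then deduce lower semicontinuity of $g(\delta)=\min_{P_{\sf X}}f(\delta,P_{\sf X})$ by a standard partial-minimization argument, using compactness of the set of probability distributions on the finite alphabet $\mathcal{X}$.

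The crux is the upper semicontinuity of $(\delta,P_{\sf X})\mapsto H_{[P_{\sf X}]\delta}$, because it enters $f$ with a minus sign inside the hinge $[\,\cdot\,]^+$. Let
\begin{equation*}
C(\delta,P_{\sf X}):=\{Q_{\sf U}:|Q_{\sf X|U}\circ Q_{\sf U}-P_{\sf X}|\le\delta|\mathcal{U}|\},
\end{equation*}
where $Q_{\sf X|U}\circ Q_{\sf U}:=\sum_a Q_{{\sf X|U}=a}\,Q_{\sf U}(a)$. This is a closed subset of the compact set of probability distributions on $\mathcal{U}$, and $Q_{\sf U}\mapsto H(Q_{\sf X|U}|Q_{\sf U})$ is continuous, so the maximum in \eqref{e_vi19} is attained whenever $C\ne\emptyset$. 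Given $(\delta_n,P_{\sf X}^{(n)})\to(\delta,P_{\sf X})$, pick a maximizer $Q_{\sf U}^{(n)}\in C(\delta_n,P_{\sf X}^{(n)})$ (discarding indices where $C$ is empty, which contribute $-\infty$). Extract a convergent subsequence $Q_{\sf U}^{(n_k)}\to Q_{\sf U}^*$. Continuity of the pushforward and of total variation distance yields $Q_{\sf U}^*\in C(\delta,P_{\sf X})$, and continuity of conditional entropy gives
\begin{equation*}
\lim_k H(Q_{\sf X|U}|Q_{\sf U}^{(n_k)})=H(Q_{\sf X|U}|Q_{\sf U}^*)\le H_{[P_{\sf X}]\delta},
\end{equation*}
establishing the required upper semicontinuity. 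A corollary of the same compactness argument is that if $C(\delta,P_{\sf X})=\emptyset$ then $C(\delta_n,P_{\sf X}^{(n)})$ is also empty for all large $n$, so the case $H_{[P_{\sf X}]\delta}=-\infty$ is handled automatically.

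With this in hand, $f$ is the sum of $D(P_{\sf X}\|\pi_{\sf X})$, which is lower semicontinuous in $P_{\sf X}$, and of the nondecreasing continuous hinge map $[\,\cdot\,]^+$ applied to $H(P_{\sf X})-H_{[P_{\sf X}]\delta}-R$, which is lower semicontinuous in $(\delta,P_{\sf X})$ since $H(P_{\sf X})$ is continuous on the simplex and $-H_{[P_{\sf X}]\delta}$ is lower semicontinuous by the previous paragraph. Hence $f$ is lower semicontinuous. For $g$, fix $\delta_n\to\delta$ and let $P_{\sf X}^{(n)}$ attain the minimum in \eqref{e_g} (attained by lower semicontinuity of $f(\delta_n,\cdot)$ on the compact simplex of distributions on $\mathcal{X}$). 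Extract a convergent subsequence $P_{\sf X}^{(n_k)}\to P_{\sf X}^*$; joint lower semicontinuity of $f$ gives
\begin{equation*}
\liminf_k g(\delta_{n_k})=\liminf_k f(\delta_{n_k},P_{\sf X}^{(n_k)})\ge f(\delta,P_{\sf X}^*)\ge g(\delta),
\end{equation*}
which is the desired lower semicontinuity of $g$.

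The main obstacle is the upper semicontinuity of $H_{[P_{\sf X}]\delta}$, where one must rule out pathological behavior at the boundary of the constraint $|Q_{\sf X|U}\circ Q_{\sf U}-P_{\sf X}|\le\delta|\mathcal{U}|$; compactness of the simplex on $\mathcal{U}$ and continuity of the constraint map defuse this, after which the remaining steps are routine manipulations with semicontinuous functions and partial minimization over compact sets.
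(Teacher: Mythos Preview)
Your proof is correct and follows essentially the same approach as the paper: both rely on the fact that partial minimization (or maximization) of a lower semicontinuous function over a compact set yields a lower semicontinuous function in the remaining variables. The paper packages this more abstractly by introducing a single auxiliary function $\chi(\delta,P_{\sf X},Q_{\sf U})$ equal to $D(P_{\sf X}\|\pi_{\sf X})+[H(P_{\sf X})-H(Q_{\sf X|U}|Q_{\sf U})-R]^+$ on the closed constraint set and $+\infty$ off it, then invokes the general lemma twice; you instead unpack the argument by hand, first establishing upper semicontinuity of $H_{[P_{\sf X}]\delta}$ via subsequences and then repeating the subsequence argument for $g$.
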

\begin{rem}\label{rem_min}
We can write minimum instead of infimum in \eqref{e_g} and hence \eqref{e_vi5} because of the lower semicontinuity of $f$.
\end{rem}
\begin{proof}
Consider a lower semicontinuous function $\chi$ where
$\chi(\delta,P_{\sf X}, Q_{\sf U})$ equals
\begin{align}
D(P_{\sf X}\|\pi_{\sf X})+
[H(P_{\sf X})-H(Q_{\sf X|U}|Q_{\sf U})-R]^+
\end{align}
if $|Q_{\sf X}-P_{\sf X}|\le\delta|\mathcal{U}|$ and $+\infty$ otherwise.
Then $f(\delta,P_{\sf X})=\min_{Q_{\sf X}}\chi(\delta,P_{\sf X}, Q_{\sf U})$ is lower semicontinuous, as it is the pointwise infimum of a lower semicontinuous functions over a compact set (see for example the proof in \cite[Lemma~9]{liu_topo}).
The lower semicontinuity of $g$ follows for the same reason.
\end{proof}
The function $G(R,{{\pi}}_{\sf X})$ in \eqref{e_vi5} satisfies some nice properties. Below we write it as $G(R,{{\pi}}_{\sf X},Q_{\sf X|U})$ to emphasize its dependence on $Q_{\sf X|U}$, and assume that $\mathcal{X}$ and $\mathcal{U}$ can be arbitrary.
\begin{prop}
\begin{enumerate}
  \item The function being minimized in \eqref{e_vi5}, denoted as $F(Q_{\sf U},R,{{\pi}}_{\sf X},Q_{\sf X|U})$, is convex in $Q_{\sf U}$.
  \item Additivity: for any $R>0$, ${{\pi}}_{{\sf X}_i}$ and $Q_{{\sf X}_i|{\sf U}_i}$ ($i=1,2$),
      \begin{align}
      &\quad G(R,{{\pi}}_{{\sf X}_1}{{\pi}}_{{\sf X}_2},Q_{{\sf X}_1|{\sf U}_1} Q_{{\sf X}_2|{\sf U}_2})
      \nonumber\\
      &=
    \min_{R_1,R_2:R_1+R_2\le R}\{G(R_1,{{\pi}}_{{\sf X}_1},Q_{{\sf X}_1|{\sf U}_1})
    \nonumber\\
    &\quad+G(R_2,{{\pi}}_{{\sf X}_2},Q_{{\sf X}_2|{\sf U}_2})\},
      \end{align}
      where we have abbreviated ${\pi}_{{\sf X}_1}\times{\pi}_{{\sf X}_2}$ and $Q_{{\sf X}_1|{\sf U}_1}\times Q_{{\sf X}_2|{\sf U}_2}$
      as
      ${{\pi}}_{{\sf X}_1}{{\pi}}_{{\sf X}_2}$
      and $Q_{{\sf X}_1|{\sf U}_1} Q_{{\sf X}_2|{\sf U}_2}$.
  \item $G(R,{{\pi}}_{\sf X},Q_{\sf X|U})$ is continuous in $R$.
  \item $G(R,{{\pi}}_{\sf X},Q_{\sf X|U})$ is convex in $R$.
\end{enumerate}
\end{prop}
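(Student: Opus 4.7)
Parts 1) and 4) both rest on the chain-rule identity
\begin{align}
D(Q_{\sf X}\|\pi_{\sf X}) + I(Q_{\sf U}, Q_{\sf X|U}) = \sum_u Q_{\sf U}(u)\, D(Q_{{\sf X|U}=u}\|\pi_{\sf X}),
\end{align}
which makes $D+I$ linear in $Q_{\sf U}$. Using $[I-R]^+ = \max(0, I-R)$, the objective decomposes as
\begin{align}
F(Q_{\sf U}, R, \pi_{\sf X}, Q_{\sf X|U}) = \max\bigl(D(Q_{\sf X}\|\pi_{\sf X}),\; D(Q_{\sf X}\|\pi_{\sf X}) + I(Q_{\sf U}, Q_{\sf X|U}) - R\bigr).
\end{align}
Here the first argument is convex in $Q_{\sf U}$ (convexity of relative entropy pre-composed with the affine map $Q_{\sf U}\mapsto Q_{\sf U} Q_{\sf X|U}$) and independent of $R$, while the second is jointly affine in $(Q_{\sf U},R)$; hence $F$ is jointly convex. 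This yields 1) directly and, via the fact that partial minimization preserves convexity, also 4). Part 3) then follows: $G(R,\cdot,\cdot)$ is convex and finite on $[0,\infty)$, hence continuous on $(0,\infty)$, and continuity at $R=0$ is forced by monotonicity together with the lower semi-continuity of convex functions.

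The $(\le)$ direction of the additivity 2) is obtained by taking the product $Q_{{\sf U}_1}^\star\times Q_{{\sf U}_2}^\star$ of single-letter minimizers as the input with any $R_1+R_2\le R$; the induced output factors, so $D$ and $I$ are additive across components, and subadditivity $[a+b]^+\le [a]^++[b]^+$ closes the bound. The $(\ge)$ direction is the main obstacle, because an arbitrary joint input $Q_{{\sf U}_1{\sf U}_2}$ may produce a non-product output $Q_{{\sf X}_1{\sf X}_2}$. The plan is to exploit the two identities
\begin{align}
D(Q_{{\sf X}_1{\sf X}_2}\|\pi_{{\sf X}_1}\pi_{{\sf X}_2}) &= D_1 + D_2 + I({\sf X}_1;{\sf X}_2),\\
I({\sf U}_1{\sf U}_2; {\sf X}_1{\sf X}_2) &= I_1 + I_2 - I({\sf X}_1;{\sf X}_2),
\end{align}
valid for a product channel, with $D_i:=D(Q_{{\sf X}_i}\|\pi_{{\sf X}_i})$ and $I_i:=I(Q_{{\sf U}_i},Q_{{\sf X}_i|{\sf U}_i})$ at the marginals. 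The cross-term $I({\sf X}_1;{\sf X}_2)$ cancels inside $D+I$, so that the ``linear part'' of the product objective splits additively into single-letter contributions.

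A short case analysis on the sign of $I_1 + I_2 - I({\sf X}_1;{\sf X}_2) - R$ then yields an explicit valid split $R_1+R_2\le R$: if the difference is non-negative, take $R_1\in[\max(0,R-I_2),\min(I_1,R)]$ (non-empty since $R\le I_1+I_2$) and $R_2=R-R_1$, so each $F(Q_{{\sf U}_i},R_i,\pi_{{\sf X}_i},Q_{{\sf X}_i|{\sf U}_i}) = D_i+I_i-R_i$ and the sum matches $F(Q_{{\sf U}_1{\sf U}_2},R,\pi_{{\sf X}_1}\pi_{{\sf X}_2},Q_{{\sf X}_1|{\sf U}_1}Q_{{\sf X}_2|{\sf U}_2})$ exactly; otherwise take $R_1=\min(I_1,R)$, $R_2=R-R_1$, and bound $F_1+F_2\le D_1+D_2+[I_1+I_2-R]^+\le D_1+D_2+I({\sf X}_1;{\sf X}_2)$, again at most the product value. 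In either case $G(R_1,\pi_{{\sf X}_1},Q_{{\sf X}_1|{\sf U}_1})+G(R_2,\pi_{{\sf X}_2},Q_{{\sf X}_2|{\sf U}_2})\le F_1+F_2$, and taking the infimum over $Q_{{\sf U}_1{\sf U}_2}$ completes the $(\ge)$ bound. The key technical insight is the cancellation of $I({\sf X}_1;{\sf X}_2)$: without it one would have to invoke convex-analytic duality (e.g.\ a Lagrangian representation $G(R)=\max_{\lambda\in[0,1]}\{J(\lambda)-\lambda R\}$ with $J(\lambda):=\inf_{Q_{\sf U}}\{D+\lambda I\}$) to reduce the non-product output to a single-letter quantity.
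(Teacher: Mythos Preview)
Your argument is essentially correct and tracks the paper's proof closely. For part~1 you use the same decomposition $F=\max(D,\,D+I-R)$ with $D$ convex and $D+I-R$ affine in $Q_{\sf U}$; for part~2 the paper likewise uses the cancellation identity $D_{12}+I_{12}=D_1+D_2+I_1+I_2$ (it first shows that replacing the optimal joint $Q_{{\sf U}_1{\sf U}_2}$ by the product of its marginals does not increase $F$, then splits $R$ for the product input---your direct case split on the sign of $I_{12}-R$ is the same computation, reorganized). For part~4 the paper writes out exactly your ``partial minimization of a jointly convex function'' in coordinates, so the arguments coincide.

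The one point to flag is your treatment of continuity at $R=0$ in part~3. A finite convex function on $[0,\infty)$ is automatically continuous on the open interval $(0,\infty)$, but \emph{not} necessarily lower semicontinuous at the boundary point $0$: e.g.\ $g(0)=1$, $g(R)=0$ for $R>0$ is convex and non-increasing, yet discontinuous at $0$. So ``monotonicity together with lower semicontinuity of convex functions'' does not by itself close the gap. The paper instead proves $|G(R)-G(R')|\le |R-R'|$ directly: pick $Q_{\sf U}$ optimal for $R$ and use
\[
[I(Q_{\sf U},Q_{\sf X|U})-R']^+-[I(Q_{\sf U},Q_{\sf X|U})-R]^+ \le [R-R']^+,
\]
which gives $1$-Lipschitz continuity on all of $[0,\infty)$. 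Equivalently, within your framework one line suffices: since $[I-R]^+\ge I-R$, one has $G(R)\ge \min_{Q_{\sf U}}\{D+I\}-R=G(0)-R$, and combined with monotonicity $G(R)\le G(0)$ this yields continuity at $0$.
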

\begin{proof}
\begin{enumerate}
  \item The function of interest is the maximum of the two functions $D(Q_{\sf X}\|{{\pi}}_{\sf X})$ and
      \begin{align}
      D(Q_{\sf X}\|{{\pi}}_{\sf X})+I(Q_{\sf U},Q_{\sf X|U})-R
      =\mathbb{E}\left[\imath_{Q_{\sf X|U}\|{{\pi}}_{\sf X}}(X|U)\right]-R
      \end{align}
      where $(U,X)\sim Q_{\sf UX}$, $Q_{\sf U}\to Q_{\sf X|U}\to Q_{\sf X}$, and the conditional relative information
      \begin{align}
      \imath_{Q_{\sf X|U}\|\pi_{\sf X}}(x|u):=\log
      \frac{{\rm d}Q_{{\sf X|U}=u}}{{\rm d}\pi_{\sf X}}(x),\quad \forall u,x.
      \end{align}
      The former is convex and the latter is linear in $Q_{\sf U}$.
      \item The $\le$ direction is immediate from the single-letter formula~\eqref{e_vi5} and the inequality
          \begin{align}
          [a]^++[b]^+\ge [a+b]^+
          \label{e196}
          \end{align}
          for any $a,b\in\mathbb{R}$. For the $\ge$ direction, suppose $Q_{{\sf U}_1{\sf U}_2}$ achieves the minimum in the single-letter formula of $G(R,{{\pi}}_{{\sf X}_1}{{\pi}}_{{\sf X}_2},Q_{{\sf X}_1|{\sf U}_1} Q_{{\sf X}_2|{\sf U}_2})$. Observe that
          \begin{align}
          & F(Q_{{\sf U}^2},R,{{\pi}}_{{\sf X}_1}{{\pi}}_{{\sf X}_2},Q_{{\sf X}_1|{\sf U}_1}Q_{{\sf X}_2|{\sf U}_2})
          \nonumber\\
          &\quad-F(Q_{{\sf U}_1}Q_{{\sf U}_2},R,{{\pi}}_{{\sf X}_1}{{\pi}}_{{\sf X}_2},Q_{{\sf X}_1|{\sf U}_1}Q_{{\sf X}_2|{\sf U}_2})
          \nonumber
          \\
          &=
          I({\sf X}_1;{\sf X}_2)
          +[I(Q_{{\sf U}_1{\sf U}_2},Q_{{\sf X}_1|{\sf U}_1}Q_{{\sf X}_2|{\sf U}_2})-R]^+
          \nonumber\\
          &\quad-\left[\sum_{i=1}^2I(Q_{{\sf U}_i},Q_{{\sf X}_i|{\sf U}_i})-R\right]^+
          \label{e201}
          \\
          &\ge [I({\sf X}_1;{\sf X}_2)+I(Q_{{\sf U}_1{\sf U}_2},Q_{{\sf X}_1|{\sf U}_1}Q_{{\sf X}_2|{\sf U}_2})-R]^+
          \nonumber\\
          &\quad -\left[\sum_{i=1}^2I(Q_{{\sf U}_i},Q_{{\sf X}_i|{\sf U}_i})-R\right]^+
          \label{e_194}
          \\
          &=0
          \end{align}
          where \eqref{e201} uses $D(Q_{{\sf X}_1{\sf X}_2}\|\pi_{{\sf X}_1}\pi_{{\sf X}_2})-D(Q_{{\sf X}_1}\|\pi_{{\sf X}_1})-D(Q_{{\sf X}_2}\|\pi_{{\sf X}_2})=I({\sf X}_1;{\sf X}_2)$,
          and
          \eqref{e_194}
          follows from \eqref{e196}. Therefore
          \begin{align}
          &\quad G(R,{{\pi}}_{{\sf X}_1}{{\pi}}_{{\sf X}_2},Q_{{\sf X}_1|{\sf U}_1}Q_{{\sf X}_2|{\sf U}_2})
          \nonumber\\
          &=
          F(Q_{{\sf U}_1}Q_{{\sf U}_2},R,{{\pi}}_{{\sf X}_1}{{\pi}}_{{\sf X}_2},Q_{{\sf X}_1|{\sf U}_1}Q_{{\sf X}_2|{\sf U}_2}).
          \label{e197}
          \end{align}
            But clearly there exists $R_1$ and $R_2$ summing to $R$ such that
          \begin{align}
          &\quad \textrm{R.H.S.~of~}\eqref{e197}
          \nonumber\\
          &=F(Q_{{\sf U}_1}, R_1,{{\pi}}_{{\sf X}_1},Q_{{\sf X}_1|{\sf U}_1})
          +F(Q_{{\sf U}_2}, R_2,{{\pi}}_{{\sf X}_2},Q_{{\sf X}_2|{\sf U}_2})
          \\
          &\ge F(R_1,{{\pi}}_{{\sf X}_1},Q_{{\sf X}_1|{\sf U}_1})
          +F(R_2,{{\pi}}_{{\sf X}_2},Q_{{\sf X}_2|{\sf U}_2})
          \end{align}
          and the result follows.

   \item Fix any two numbers $0\ge R'<R$. Choose $Q_{\sf U}$ such that
       \begin{align}
       G(R,{{\pi}}_{\sf X},Q_{\sf X|U})=F(Q_{\sf U},R,{{\pi}}_{\sf X},Q_{\sf X|U}).
       \end{align}
       Then
       \begin{align}
       0&\le G(R',{{\pi}}_{\sf X},Q_{\sf X|U})-
       G(R,{{\pi}}_{\sf X},Q_{\sf X|U})
       \label{e_208}
       \\
       &\le F(Q_{\sf U},R',{{\pi}}_{\sf X},Q_{\sf X|U})
       -F(Q_{\sf U},R,{{\pi}}_{\sf X},Q_{\sf X|U})
       \\
       &=[I(Q_{\sf U},Q_{\sf X|U})-R']^+
       -[I(Q_{\sf U},Q_{\sf X|U})-R]^+
       \\
       &\le [R-R']^+\label{e204}
       \end{align}
       where \eqref{e_208} follows because $G(\cdot,\pi_{\sf X},Q_{\sf X|U})$ is non-increasing,
       and
       \eqref{e204} uses \eqref{e196} again. Thus $G(R,{{\pi}}_{\sf X},Q_{\sf X|U})$ is actually $1$-Lipschitz continuous in $R$.

    \item Fix $R_1,R_2\ge0$, $\alpha\in[0,1]$, and let $Q_{{\sf U}_i}$ maximize $F(\cdot,R_i,{{\pi}}_{\sf X},Q_{\sf X|U})$ for $i=1,2$. Define
        \begin{align}
        R_{\alpha}&:=(1-\alpha)R_0+\alpha R_1;
        \\
        Q_{{\sf U}_{\alpha}}&:=(1-\alpha)Q_{{\sf U}_0}+\alpha Q_{{\sf U}_1}.
        \end{align}
        In both $I(Q_{{\sf U}_{\alpha}},Q_{\sf X|U})>R_{\alpha}$ and $I(Q_{{\sf U}_{\alpha}},Q_{\sf X|U})\le R_{\alpha}$ cases one can explicitly calculate that
        \begin{align}
        F(Q_{{\sf U}_{\alpha}},R_{\alpha},{{\pi}}_{\sf X},Q_{\sf X|U})
        &\le (1-\alpha)F(Q_{{\sf U}_0},R_0,{{\pi}}_{\sf X},Q_{\sf X|U})
        \nonumber\\
        &+\alpha F(Q_{{\sf U}_1},R_1,{{\pi}}_{\sf X},Q_{\sf X|U})
        \end{align}
        and the convexity follows.
\end{enumerate}
\end{proof}

\subsection{Converse for Worst-case  Resolvability}\label{sec_worst}
The resolvability for the worst-case input distribution and for stationary memoryless outputs have very different flavors.
First, let us remark that the resolvability for the worst distribution on the \emph{output} alphabet (i.e.~not necessarily induced by an input distribution) is usually a degenerate unexciting problem. For any DMC $Q_{\sf X|U}$ having an output symbol $x\in\mathcal{X}$ such that the one point distribution on $x$ is not induced by any input (it may still be true that the one point distribution is absolutely continuous with respect to the output distribution corresponding to some input), the probability $Q_{X^n}(x^n)$ vanishes for any sequence of input distributions $\{Q_{U^n}\}_{n=1}^{\infty}$, where $Q_{U^n}\to Q_{\sf X|U}^{\otimes n}\to Q_{X^n}$.
Thus if we pick the output distribution ${{\pi}}_{X^n}$ as the one point distribution on $x^n$, then the approximation error
$E_{\exp(nG)}(Q_{X^n[c^{M_n}]}\|{{\pi}}_{X^n})\uparrow1$ as $n\to\infty$, no matter how large $G$ and $M_n$ are and what $c^{M_n}$ we pick.

Returning to the resolvability for the worst-case output distribution as formulated in Definition~\ref{defn_worst}, we have shown the achievability bound $S(G)\le \sup_{\mb{U}}\mb{\bar{I}}(\mb{U};\mb{X})-G$ in Theorem~\ref{thm_ma}. Is this bound tight in general? Before delving into the converse,
it is instructive to consider the following example.
\begin{ex}\label{ex25}
Consider a DMC with $\mathcal{U}=\{0,{\tt e},1\}$, $\mathcal{X}=\{0,1\}$ and
\begin{align}
Q_{\sf X|U}(x|u)=\left\{
\begin{array}{cc}
  1-\delta & u\in\{0,1\},x=u \\
  \delta & u\in\{0,1\},x\neq u \\
  \frac{1}{2} & u={\tt e}
\end{array}
\right.
\end{align}
where $0<\delta<\frac{1}{2}$. Then from the formula of resolvability under total variation distance \cite{han1993approximation}, we have
\begin{align}\label{e217}
S(0)=C(Q_{\sf X|U})=1-h(\delta)
\end{align}
where $h(\cdot)$ is the binary entropy function,
while Theorem~\ref{thm_ma} yields,
\begin{align}\label{e218}
S(G)\le[1-h(\delta)-G]^+
\end{align}
for $G>0$.
\end{ex}
Example~\ref{ex25} with $\delta=0$ reduces to Example~1 in \cite{han1993approximation}. It is argued in \cite{han1993approximation} that the worst-case input distribution requiring the maximal asymptotic randomness \eqref{e217} is the equiprobable distribution on the set of all sequences having the type $(1/2,0,1/2)$ (which is the capacity-achieving single-letter distribution).
Naively and from symmetry considerations, one might expect that this is also the worst-case distribution for approximation in $E_{\exp(nG)}$ metric when $G>0$ and the channel parameter $\delta>0$.
However, this cannot be farther from the truth.
Consider a deterministic input sequence $({\tt e},\dots,{\tt e})$,
denote by $\hat{X}^n$ the corresponding output sequence, and denote by $\hat{T}$ the number of $1$'s in the output.
Moreover, when the input to the channel is equiprobable on the set of all sequences having the type $(1/2,0,1/2)$,
denote by $X^n$ the corresponding output sequence, and denote by $T$ be the number of $1$'s.
Then by the central limit theorem, $\frac{\hat{T}-n/2}{\sqrt{n}}$ and $\frac{T-n/2}{\sqrt{n}}$ both converge weakly to some Gaussian random variables $\hat{N}$ and $N$.
By carefully bounding the probability of binomial distributions using Stirling's formula, one can show that for any $\gamma>0$, $\lim_{n\to\infty}E_{\gamma}(Q_{\hat{T}}\|Q_T)
=E_{\gamma}(Q_{\hat{N}}\|Q_{N})$. Thus
\begin{align}
E_{\exp(nG)}(Q_{\hat{T}}\|Q_T)\to0,\quad n\to\infty
\end{align}
for any $G>0$ because $\exp(nG)\to\infty$. For both inputs, the output conditioned on the number of $1$'s is the equiprobable distribution on a set of sequences with the same type, so by Proposition~\ref{prop3}-\ref{pt3_4}), $E_{\exp(nG)}(Q_{\hat{X}^n}\|Q_{X^n})\to0$ as well. This seems to suggest that the value of $S(G)$ has a jump as $G$ changes from $0$ to a positive number (see \eqref{e217}). So is $S(G)$ discontinuous at $G=0$ in Example~\ref{ex25}?

In the remainder of this section, we answer this question in the negative by developing a general converse bound, implying that the worst-case distribution is \emph{not} induced by the set of all input sequences of the same type $(1/2,0,1/2)$.
The basic idea is to use achievability results for the error exponent of identification (ID) channels. The converse bound will not match the achievability bound in Theorem~\ref{thm_ma}, and the exact formula for $S(G)$ seems to be out of reach at this point even for a DMC.

As the first step of the converse, we observe how the achievability of ID coding implies a packing lemma, which is a sharpening of an argument in \cite{han1993approximation} to the large deviation analysis. Recall that an $(N,\nu,\lambda)$-ID code \cite{ahlswede1989identification} consists of distributions $(Q_{U_i^n})_{i=1}^N$ and decoding regions $(\mathcal{D}_i)_{i=1}^N$ such that the two type of errors satisfy
\begin{align}
\max_{1\le i\le N}Q_{X_i^n}(\mathcal{D}_i^c)&\le\mu
\\
\max_{i\neq j}Q_{X_j^n}(\mathcal{D}_i)&\le\lambda
\end{align}
In the asymptotic setting, the performance of ID code is quantified as follows.
\begin{defn}\label{def_id}
The triple $(R,G_1,G_2)$ is said to be \emph{achievable} if there exists $(N_n,\nu_n,\lambda_n)$-ID code such that
\begin{align}
\liminf_{n\to\infty}\frac{1}{n}\log\log N_n&\ge R,
\\
\liminf_{n\to\infty}\frac{1}{n}\log\frac{1}{\mu_n}&\ge G_1,
\\
\liminf_{n\to\infty}\frac{1}{n}\log\frac{1}{\lambda_n}&\ge G_2.
\end{align}
\end{defn}

\begin{lem}[Packing Lemma]\label{lem_pack}
If $(R,G_1,G_2)$ is achievable for ID code, then for each $n$ there exists $N_n$ distributions $(Q_{U_i^n})_{i=1}^{N_n}$ such that
\begin{align}
\liminf_{n\to\infty}\frac{1}{n}\log\log N_n&\ge R,\label{e194}
\\
\liminf_{n\to\infty}\frac{1}{n}\log\frac{1}{1-\frac{1}{2}\min_{i\neq j}|Q_{X_i^n}-Q_{X_j^n}|}&\ge\min\{G_1,G_2\}.
\end{align}
\end{lem}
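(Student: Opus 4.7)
The plan is to simply reuse the input distributions coming with the ID code, and derive the total variation lower bound via the variational characterization $\frac{1}{2}|P-Q| = \sup_{\mathcal{A}}[P(\mathcal{A}) - Q(\mathcal{A})]$ applied to the decoding regions themselves. The ID code's two error conditions are precisely what it takes to ensure that for each ordered pair $(i,j)$ with $i\neq j$, the set $\mathcal{D}_i$ is assigned high probability under $Q_{X_i^n}$ and low probability under $Q_{X_j^n}$, which is exactly the structure needed to witness a large total variation gap.

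First, by Definition~\ref{def_id}, pick for every $n$ an $(N_n, \nu_n, \lambda_n)$-ID code $\{(Q_{U_i^n}, \mathcal{D}_i)\}_{i=1}^{N_n}$ with
\[
\tfrac{1}{n}\log\log N_n \to_{\liminf} R, \quad \tfrac{1}{n}\log\tfrac{1}{\nu_n} \to_{\liminf} G_1, \quad \tfrac{1}{n}\log\tfrac{1}{\lambda_n} \to_{\liminf} G_2,
\]
and propose the distributions $(Q_{U_i^n})_{i=1}^{N_n}$ as the claimed packing. The cardinality bound \eqref{e194} is then immediate.

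Next, for any $i \neq j$, use $\mathcal{D}_i$ as a test set in the variational characterization of total variation distance:
\[
\tfrac{1}{2}\left|Q_{X_i^n}-Q_{X_j^n}\right| \;\ge\; Q_{X_i^n}(\mathcal{D}_i) - Q_{X_j^n}(\mathcal{D}_i) \;\ge\; (1-\nu_n) - \lambda_n,
\]
where the first inequality is the variational formula and the second uses the two ID-code error bounds. Minimizing over $i \neq j$ and rearranging,
\[
1 - \tfrac{1}{2}\min_{i\neq j}\left|Q_{X_i^n}-Q_{X_j^n}\right| \;\le\; \nu_n + \lambda_n \;\le\; 2\max\{\nu_n, \lambda_n\}.
\]

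Finally, take $\frac{1}{n}\log\frac{1}{\cdot}$ on both sides and pass to $\liminf$. Using $\frac{1}{n}\log\frac{1}{\max\{\nu_n,\lambda_n\}} = \min\{\frac{1}{n}\log\frac{1}{\nu_n}, \frac{1}{n}\log\frac{1}{\lambda_n}\}$, and that the additive $-\frac{1}{n}\log 2$ term vanishes, one obtains
\[
\liminf_{n\to\infty} \tfrac{1}{n}\log\tfrac{1}{1-\tfrac{1}{2}\min_{i\neq j}|Q_{X_i^n}-Q_{X_j^n}|} \;\ge\; \min\{G_1,G_2\}.
\]
There is essentially no obstacle here; the only conceptual step is recognizing that the decoding regions $\mathcal{D}_i$ are automatically near-optimal witnesses in the variational formula for $|Q_{X_i^n}-Q_{X_j^n}|$, turning the combined type-I/type-II error guarantee into a pairwise total variation lower bound of the form $1 - \nu_n - \lambda_n$. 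The rest is bookkeeping of exponents.
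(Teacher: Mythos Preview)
The proposal is correct and takes essentially the same approach as the paper: both reuse the ID code's input distributions and lower-bound $\tfrac{1}{2}|Q_{X_i^n}-Q_{X_j^n}|$ by $Q_{X_i^n}(\mathcal{D}_i)-Q_{X_j^n}(\mathcal{D}_i)\ge 1-\nu_n-\lambda_n$ via the variational formula. Your write-up is in fact more detailed than the paper's, which stops at this inequality and simply asserts ``and the result follows''; your bookkeeping of the exponents (bounding $\nu_n+\lambda_n\le 2\max\{\nu_n,\lambda_n\}$ and passing to the $\liminf$) is a correct fleshing-out of that step.
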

\begin{proof}
Pick the $(N_n,\nu_n,\lambda_n)$-ID code as in Definition~\ref{def_id}.
Then for any $i\neq j$,
\begin{align}
\frac{1}{2}|Q_{X_i^n}-Q_{X_j^n}|&\ge Q_{X_i^n}(\mathcal{D}_i)-Q_{X_j^n}(\mathcal{D}_i)
\\
&\ge 1-\mu_n-\lambda_n
\end{align}
and the result follows.
\end{proof}
For a fixed DMC $Q_{\sf X|U}$, define the function
\begin{align}
e(R):=\max_{Q_{\sf U}:I(Q_{\sf U},Q_{\sf X|U})\ge R}E_{\sf sp}(R,Q_{\sf U},Q_{\sf X|U})
\label{e198}
\end{align}
where
\begin{align}
E_{\sf sp}(R,Q_{\sf U},Q_{\sf X|U})
:=\min_{P_{\sf X|U}:I(Q_{\sf U},P_{\sf X|U})\le R}D(P_{\sf X|U}\|Q_{\sf X|U}|Q_{\sf U})
\end{align}
is the well known \emph{sphere packing exponent function} \cite{csiszar1981information}.
Then $e(R)$ is a nonnegative, non-increasing function of $R$ with $e(C(Q_{\sf X|U}))=0$, so that there is a unique solution $G^*(R)$ to the equation
\begin{align}\label{e199}
e(R+2G)=G
\end{align}
provided that $R<C(Q_{\sf X|U})$.

We also need an achievability result of ID coding error exponents. The exponent is known for the first type of error \cite{ahlswede1989identification}, but not for the second type of error.
\begin{lem}\label{lem38}\cite{ahlswede1989identification}
If $R,G\ge0$, and a stationary memoryless channel $Q_{\sf X|U}$ and a probability measure $Q_{\sf U}$ satisfy $I(Q_{\sf U},Q_{\sf X|U})\ge R+2G$, the triple
\begin{align}
(R,\min_{P_{\sf X|U}:I(Q_{\sf U},P_{\sf X|U})\le R+2G}D(P_{\sf X|U}\|Q_{\sf X|U}|Q_{\sf U}),G)
\end{align}
is achievable for identification coding.
\end{lem}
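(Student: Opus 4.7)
The plan is to adapt the Ahlswede--Dueck identification coding construction \cite{ahlswede1989identification}, combining a good transmission code with a layer of random subsets of codewords.

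First, since the hypothesis $I(Q_{\sf U},Q_{\sf X|U})\ge R+2G$ puts the rate $R_T:=R+2G$ strictly below mutual information, I would invoke a transmission-coding achievability result: there exists a block code of length $n$ with $M_n\approx\exp(nR_T)$ codewords drawn from a type close to $Q_{\sf U}^{\otimes n}$ whose maximal decoding error probability decays at the sphere-packing exponent $E_{\sf sp}(R_T,Q_{\sf U},Q_{\sf X|U})$, which is exactly the first-type exponent claimed in the lemma.

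Next, I would superimpose an identification layer. For each ID message $i\in\{1,\dots,N_n\}$ associate a subset $\mathcal{S}_i\subseteq\{1,\dots,M_n\}$ of size $K_n\approx\exp(n(R+G))$. To send message $i$, pick a codeword uniformly from $\mathcal{S}_i$ and transmit; to test message $i$, run the transmission decoder and accept iff the decoded index lies in $\mathcal{S}_i$. The first-type error is then bounded above by the transmission decoding error. The second-type error for the ordered pair $(i,j)$ is at most $|\mathcal{S}_i\cap\mathcal{S}_j|/K_n$ plus a transmission term; under a uniformly random independent choice of subsets the mean intersection is $K_n^2/M_n$, so the expected second-type error is $K_n/M_n\approx\exp(-nG)$, matching the claimed $G_2=G$. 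A counting check gives $\log\log\binom{M_n}{K_n}\ge\log K_n\approx n(R+G)\ge nR$, so $N_n$ with $\log\log N_n\ge nR$ is indeed accommodated.

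The main obstacle is derandomization: we must produce a single family $\{\mathcal{S}_i\}_{i=1}^{N_n}$ whose pairwise overlaps are small simultaneously over doubly-exponentially many pairs. The key tool is the exponential concentration of the hypergeometric overlap $|\mathcal{S}_i\cap\mathcal{S}_j|$ around $K_n^2/M_n$, so a Chernoff bound makes the bad-overlap probability for a single pair doubly-exponentially small. One then either takes a union bound over all pairs, or, more carefully, argues by expurgation: discard any ID message whose subset overlaps badly with too many others, losing at most a constant fraction of messages. The extra $G$ of slack between the transmission rate $R+2G$ and the subset-size rate $R+G$ is exactly the margin that absorbs these union-bound/expurgation losses while preserving a second-type exponent of $G$.
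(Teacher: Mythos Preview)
The paper does not prove this lemma; it simply cites \cite{ahlswede1989identification}. Your two-layer outline (transmission code plus random subsets) is indeed the Ahlswede--Dueck construction, and your treatment of the rate count and the second-type exponent via intersection concentration and expurgation is the standard route.

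There is, however, a real gap in your first-type analysis. You assert that a transmission code at rate $R_T=R+2G$ has maximal error probability decaying with exponent $E_{\sf sp}(R_T,Q_{\sf U},Q_{\sf X|U})$. That is not true: $E_{\sf sp}$ is a \emph{converse} bound on the reliability function; the achievable exponent for constant-composition random codes with a unique (ML or MMI) decoder is the random-coding exponent $E_r(R_T,Q_{\sf U})$, which is strictly smaller than $E_{\sf sp}$ below the critical rate. With your ``decode then check membership in $\mathcal{S}_i$'' rule, the first-type exponent is therefore only $E_r(R+2G,Q_{\sf U})$ in general, not the $E_{\sf sp}(R+2G,Q_{\sf U},Q_{\sf X|U})$ claimed in the lemma.

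The fix exploits the crucial freedom in identification that transmission does not have: the decision regions $\mathcal{D}_i$ may overlap. Take constant-composition codewords of type $Q_{\sf U}$, and for each codeword $c$ set $\mathcal{A}_c:=\{x^n:\ I(\widehat{P}_{c,x^n})>R+2G\}$ via the empirical mutual information, then put $\mathcal{D}_i:=\bigcup_{c\in\mathcal{S}_i}\mathcal{A}_c$. The first-type error is now bounded by the single-codeword miss probability $\Pr[X^n\notin\mathcal{A}_c\mid U^n=c]$, whose Sanov exponent is precisely
\[
\min_{P_{\sf X|U}:\,I(Q_{\sf U},P_{\sf X|U})\le R+2G} D(P_{\sf X|U}\|Q_{\sf X|U}\mid Q_{\sf U})=E_{\sf sp}(R+2G,Q_{\sf U},Q_{\sf X|U}),
\]
because there is no competition from other codewords. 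For the second type, a union bound over $c\in\mathcal{S}_i$ gives $|\mathcal{S}_i|\cdot\exp(-n(R+2G)+o(n))$, so choosing $|\mathcal{S}_i|\approx\exp(n(R+G))$ still yields exponent $G$; the derandomization step is then as you describe. The point is that the sphere-packing exponent appears here as a Sanov large-deviations exponent for a \emph{single} input--output pair, not as a transmission reliability result.
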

\begin{thm}\label{thm28}
Let $Q_{\sf X|U}$ be a stationary memoryless channel whose input alphabet $\mathcal{U}$ is finite. For any $R<C(Q_{\sf X|U})$ and $0<\epsilon<\frac{1}{2}$,
\begin{align}
S_{\epsilon}(G^*(R))&\ge R.
\label{e202}
\end{align}
\end{thm}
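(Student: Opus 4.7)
The plan is to proceed by contradiction, combining the identification coding achievability in Lemma~\ref{lem38} with the packing Lemma~\ref{lem_pack} and a pigeonhole argument on codebook multisets. The key observation is that if a single codebook simultaneously resolved two well-separated target output distributions, the triangle inequality \eqref{e20} for $E_\gamma$ would be violated.

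Assume for contradiction $S_\epsilon(G^{*}(R))<R$ with $\epsilon<\tfrac{1}{2}$. By Proposition~\ref{prop16}, there exist $R'<R$, $\eta>0$, and sequences $\gamma_n\le\exp(n(G^{*}(R)+\eta))$ and $M_n\le\exp(n(R'+\eta))$ such that \emph{for every} sequence of input distributions $P_{U^n}$ some codebook $c^{M_n}\in(\mathcal{U}^n)^{M_n}$ achieves $E_{\gamma_n}(Q_{X^n[c^{M_n}]}\|Q_{X^n})\le\epsilon$ (uniformity is crucial because the same $(\gamma_n,M_n)$ must later be applied to all of the ID code's inputs at once). Next, using $R<C(Q_{\sf X|U})$ together with the strict monotonicity of the sphere-packing exponent on its sub-capacity support, I would pick $R_0\in(R'+\eta,R)$ and $G_1:=G^{*}(R_0)$ with $G_1>G^{*}(R)+2\eta$ (shrinking $\eta$ if necessary), together with $Q_{\sf U}^{*}$ attaining the maximum in $e(R_0+2G_1)=G_1$. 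Lemma~\ref{lem38} then yields an ID code of rate $R_0$ with both error-type exponents at least $G_1$.

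The packing Lemma~\ref{lem_pack} then produces $N_n$ input distributions $(Q_{U_i^n})_{i=1}^{N_n}$ with $\liminf_n\tfrac{1}{n}\log\log N_n\ge R_0$ and
\begin{equation*}
\tfrac{1}{2}\min_{i\neq j}\bigl|Q_{X_i^n}-Q_{X_j^n}\bigr|\;\ge\;1-\exp(-n(G_1-\eta))
\end{equation*}
for all large $n$. For each $i$ pick, via the uniform resolvability scheme, a codebook $c_i^{M_n}$ with $E_{\gamma_n}(Q_{X^n[c_i^{M_n}]}\|Q_{X_i^n})\le\epsilon$. The number of codebooks (multisets of size $M_n$ over $\mathcal{U}^n$) is at most $|\mathcal{U}|^{nM_n}$, whose double logarithm divided by $n$ tends to $R'+\eta<R_0$. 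Hence for $n$ large $N_n$ exceeds the number of available codebooks, and by pigeonhole some $i\neq j$ share a codebook, $c_i^{M_n}=c_j^{M_n}$. Setting $\tilde Q:=Q_{X^n[c_i^{M_n}]}=Q_{X^n[c_j^{M_n}]}$ and applying \eqref{e20} of Proposition~\ref{prop3} with $(P,Q,S,\gamma)=(\tilde Q,Q_{X_i^n},Q_{X_j^n},\gamma_n)$ yields
\begin{equation*}
2\epsilon\;\ge\;E_{\gamma_n}(\tilde Q\|Q_{X_i^n})+E_{\gamma_n}(\tilde Q\|Q_{X_j^n})\;\ge\;\tfrac{\gamma_n}{2}\bigl|Q_{X_i^n}-Q_{X_j^n}\bigr|+1-\gamma_n,
\end{equation*}
which rearranges via the separation bound to $1-2\epsilon\le\gamma_n\exp(-n(G_1-\eta))\le\exp(-n(G_1-G^{*}(R)-2\eta))$; the right side vanishes as $n\to\infty$ by the choice of $\eta$, contradicting $\epsilon<\tfrac{1}{2}$.

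The most delicate step is the strict inequality $G_1>G^{*}(R)$ with $R_0<R$: the argument depends on the sphere-packing exponent $e$ being strictly decreasing on the sub-capacity regime $\{R'<C(Q_{\sf X|U})\}$, which propagates through the implicit relation $e(\rho+2G^{*}(\rho))=G^{*}(\rho)$ to strict monotonicity of $G^{*}$. Verifying this where the maximizing $Q_{\sf U}^{*}$ might vary with $\rho$ is the main technical nuisance. A second, smaller, point is the pigeonhole accounting: it is essential to bound the number of distinct output \emph{mixtures} $Q_{X^n[c^{M_n}]}$ (equivalently, codebook multisets) rather than ordered codebooks, but the crude upper bound $|\mathcal{U}|^{nM_n}$ already suffices at the doubly-exponential scale.
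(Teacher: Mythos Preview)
Your proof is correct and follows essentially the same route as the paper: ID-coding achievability (Lemma~\ref{lem38}) plus the packing Lemma~\ref{lem_pack} produce doubly-exponentially many well-separated outputs, a codebook count combined with the triangle inequality \eqref{e20} forces the contradiction (the paper phrases this step as ``all approximating outputs are distinct'' and additionally invokes \eqref{e19}, whereas your pigeonhole gives literally equal codebooks so \eqref{e20} alone suffices---a minor simplification), and both arguments close the gap to $G^{*}(R)$ via the strict monotonicity of $G^{*}$ below capacity. Your parameter management and your reduction of the monotonicity of $G^{*}$ to the strict decrease of $E_{\sf sp}$ on its positive range match the paper's treatment.
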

\begin{rem}\label{rem42}
Our proof based on identification coding only yields a ``strong converse'' in the range $\epsilon\in(0,\frac{1}{2})$, rather than a full strong converse for $\epsilon \in (0,1)$.
\end{rem}
\begin{proof}
Let $G_*:=G^*(R)$ and $Q_{\sf U}$ be a distribution achieving the maximum in the definition of $e(R+2G_*)$ (see \eqref{e198}). Then $I(Q_{\sf U},Q_{\sf X|U})\ge R+2G_*$ and
\begin{align}
G_*=e(R+2G_*)=\min_{P_{\sf X|U}:I(Q_{\sf U},P_{\sf X|U})\le R+2G_*}D(P_{\sf X|U}\|Q_{\sf X|U}|Q_{\sf U})
\end{align}
hold, so that $(R,G_*,G_*)$ is achievable for identification coding by Lemma~\ref{lem38}. Therefore by Lemma~\ref{lem_pack}, there exist $(Q_{U_i^n})_{i=1}^{N_n}$ such that \eqref{e194} holds and
\begin{align}
\liminf_{n\to\infty}\frac{1}{n}\log\frac{1}{1-\frac{1}{2}\min_{i\neq j}|Q_{X_i^n}-Q_{X_j^n}|}&\ge G_*.
\end{align}

Observe that $G_*>0$. Indeed, if $G_*=0$, we have $e(R)=0$ from \eqref{e199}; but $R<C(Q_{\sf X|U})$ implies the existence of a $Q_{\sf U}$ such that $I(Q_{\sf U},Q_{\sf X|U})> R$ for which
$\min_{P_{\sf X|U}:I(Q_{\sf U},P_{\sf X|U})\le R}D(P_{\sf X|U}\|Q_{\sf X|U}|Q_{\sf U})>0$, a contradiction.
Now fix $0<G''<G'<G_*$. By Proposition~\ref{prop16}, there exist $\left(\tilde{Q}_{U_i^n}\right)_{i=1}^{N_n}$ and $(\gamma_n)_{n=1}^{\infty}$ such that
\begin{align}
\limsup_{n\to\infty}\frac{1}{n}\log\gamma_n&\le G'';
\\
\limsup_{n\to\infty}\frac{1}{n}\log\tilde{M}_n&\le S(G'');
\end{align}
and for each $n$ and each $1\le i\le N_n$,
\begin{align}
&\textrm{$\tilde{Q}_{U_i^n}$ is $\tilde{M}_n$-type;}
\\
&E_{\gamma_n}(\tilde{Q}_{X_i^n}\|Q_{X_i^n})
\le\epsilon.
\label{e207}
\end{align}

Next, we show that distributions in $\left(\tilde{Q}_{U_i^n}\right)_{i=1}^{N_n}$ are distinctive for large $n$. Observe that
\begin{align}
&\quad\frac{\gamma_n}{2}|Q_{X_i^n}-Q_{X_j^n}|
+1-\gamma_n
\nonumber\\
&\le
E_{\gamma_n}(\tilde{Q}_{X_i^n}\|Q_{X_i^n})
+E_{\gamma_n}(\tilde{Q}_{X_i^n}\|Q_{X_j^n})
\label{e208}
\\
&\le E_{\gamma_n}(\tilde{Q}_{X_i^n}\|Q_{X_i^n})
+E_{\gamma_n}(\tilde{Q}_{X_j^n}\|Q_{X_j^n})
+\frac{1}{2}|\tilde{Q}_{X_i^n}-\tilde{Q}_{X_j^n}|
\label{e209}
\end{align}
where \eqref{e208} and \eqref{e209} are from \eqref{e20} and \eqref{e19}, respectively. However the sum of the first two terms in \eqref{e209} is bounded by $2\epsilon$ because of \eqref{e207}, and
\begin{align}
1-\frac{1}{2}\min_{i\neq j}|Q_{X_i^n}-Q_{X_j^n}| \le \exp(-nG')
\end{align}
for $n$ large enough. Therefore \eqref{e209} implies that for large $n$
\begin{align}
\frac{1}{2}\min_{i\neq j}|\tilde{Q}_{X_i^n}-\tilde{Q}_{X_j^n}|\ge
1-\gamma_n\exp(-nG')-2\epsilon>0
\end{align}
and so $\tilde{Q}_{X_i^n}\neq \tilde{Q}_{X_j^n}$ unless $i=j$.
But the number of distinctive $\tilde{M}_n$-type distributions is at most $\left(|\mathcal{U}|^n\right)^{\tilde{M}_n}$, which should upper-bound $N_n$ for large $n$, hence
\begin{align}
R&\le
\liminf_{n\to\infty}\frac{1}{n}\log\log N_n
\\
&\le\limsup_{n\to\infty}\frac{1}{n}
(\log \tilde{M}_n+\log\log|\mathcal{U}|+\log n)
\\
&\le S_{\epsilon}(G'').
\end{align}
Since $G''$ can be arbitrarily close to $G_*(R)$, we conclude that for any $R$ such that $G^*(R)>0$,
\begin{align}\label{e215}
R\le \lim_{g\uparrow G^*(R)}
S_{\epsilon}(g).
\end{align}
Finally we finish the proof using the monotonicity. The $R=0$ case is trivial so we assume $R>0$ below. From \eqref{e215},
\begin{align}
R=\lim_{r\uparrow R} r
\le
\lim_{r\uparrow R} \lim_{g\uparrow G^*(r)}
S_{\epsilon}(g).
\label{e_217}
\end{align}
Note that we have shown $e(\cdot)$ is positive in a neighborhood of $R$; this function must be \emph{strictly} decreasing on this interval because as argued in the proof of \cite[Corollary~2.5.4]{csiszar1981information},
the sphere packing exponent function $E_{\sf sp}(R,Q_{\sf U},Q_{\sf X|U})$ is \emph{strictly} decreasing in $R$ in any interval where it is finite and positive. Then $G^*(\cdot)$ is also \emph{strictly} decreasing on this neighborhood of $R$. Thus for any $r<R$ we have $G^*(r)>G^*(R)$ and so
\begin{align}
\lim_{g\uparrow G^*(r)}S_{\epsilon}(g)\le S_{\epsilon}(G^*(R))
\label{e218}
\end{align}
since $S_{\epsilon}(\cdot)$ is non-increasing.
Taking $r\uparrow R$ on both sides of \eqref{e218} and using \eqref{e_217} gives the desired bound \eqref{e202}.
\end{proof}
We do not expect the bound in Theorem~\ref{thm28} to be tight in general.
Indeed, in the case of identity channels, we can take the $X^n$ in Corollary~\ref{thm_s} to be equiprobable on the set of sequences whose empirical distribution is equiprobable on $\mathcal{U}$ if $n$ is a multiple of $|\mathcal{U}|$, and let $X^n$ have an arbitrary distribution otherwise.
Then Corollary~\ref{thm_s} gives
\begin{align}
S(G)=[\log|\mathcal{U}|-G]^+,
\quad\forall G\in (0,\log|\mathcal{U}|)
\label{e249}
\end{align}
so that the achievability bound in Theorem~\ref{thm_ma} is tight.
However, the converse bound in Theorem~\ref{thm28} is not tight.
To see this, consider an $R\in [0,\log|\mathcal{U}|)$ and let $G:=G^*(R)>0$. Note that since $e(\cdot)$ defined in \eqref{e198} is non-increasing,
\begin{align}
e(R+2G)=G\ge e(\log|\mathcal{U}|)
\end{align}
implies that
\begin{align}
R\le \log|\mathcal{U}|-2G.
\label{e251}
\end{align}
Therefore the bound $S_{\epsilon}(G)\ge R$ given by Theorem~\ref{thm28} is not tight,
in view of \eqref{e249} and \eqref{e251}.

Theorem~\ref{thm28} indicates that $S(G)$ is continuous at $G=0$ for any DMC. Moreover, the following observation, which is immediate from the operational definition of identification coding, is sometimes useful for computations:
\begin{prop}\label{prop_subset}
Fix a DMC $Q_{\sf X|U}$ and let $Q_{\sf X|U'}$ be the DMC obtained by restricting the input to a subset $\mathcal{U}'\subseteq \mathcal{U}$. Then the outer bound of Theorem~\ref{thm28} for $Q_{\sf X|U}$ is contained in the outer bound for $Q_{\sf X|U'}$.
\end{prop}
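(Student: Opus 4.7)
The plan is to argue directly from the operational definition of identification coding, as the proposition itself suggests. The crucial monotonicity is: any $(N,\mu,\lambda)$-ID code for $Q_{\sf X|U'}$ is automatically an $(N,\mu,\lambda)$-ID code for $Q_{\sf X|U}$. Indeed, its input distributions $(Q_{U_i^n})_{i=1}^N$ are supported on $(\mathcal{U}')^n\subseteq\mathcal{U}^n$, the tensorized channel laws $Q_{\sf X|U'}^{\otimes n}(\cdot|u^n)$ and $Q_{\sf X|U}^{\otimes n}(\cdot|u^n)$ agree whenever $u^n\in(\mathcal{U}')^n$, and the same decoding sets $\mathcal{D}_i\subseteq\mathcal{X}^n$ realize identical type-I and type-II error probabilities. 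Hence every achievable triple $(R,G_1,G_2)$ in the sense of Definition~\ref{def_id} for $Q_{\sf X|U'}$ is also achievable for $Q_{\sf X|U}$.

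Next, I would inspect the proof of Theorem~\ref{thm28} and note that the channel enters only through the supply of ID codes fed to Lemma~\ref{lem38} and the packing Lemma~\ref{lem_pack}. Running that machinery with ID codes inherited from $Q_{\sf X|U'}$ delivers the Theorem~\ref{thm28} converse curve for $Q_{\sf X|U'}$ as a valid converse on $S_\epsilon$ for $Q_{\sf X|U}$. At the single-letter level, this is the statement that the supremum defining $e(R)$ in \eqref{e198} is taken over a larger set of input distributions $Q_{\sf U}$ for $Q_{\sf X|U}$ than for $Q_{\sf X|U'}$, and $E_{\sf sp}(R,Q_{\sf U},Q_{\sf X|U'})=E_{\sf sp}(R,Q_{\sf U},Q_{\sf X|U})$ whenever $\supp(Q_{\sf U})\subseteq\mathcal{U}'$; so $e_{Q_{\sf X|U}}\ge e_{Q_{\sf X|U'}}$, and feeding this into the fixed-point equation \eqref{e199} yields $G^*_{Q_{\sf X|U}}(R)\ge G^*_{Q_{\sf X|U'}}(R)$ for every $R<C(Q_{\sf X|U'})$.

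Since both $G^*$ curves are non-increasing and $G^*_{Q_{\sf X|U}}$ pointwise dominates $G^*_{Q_{\sf X|U'}}$, their (decreasing) inverse functions $\phi_Q(G):=(G^*_Q)^{-1}(G)$ satisfy $\phi_{Q_{\sf X|U}}\ge\phi_{Q_{\sf X|U'}}$, so the outer-bound region $\{(G,R):R\ge\phi_Q(G)\}$ for $Q_{\sf X|U}$ is contained in that for $Q_{\sf X|U'}$. I foresee no real obstacle here: the entire content is the alphabet-monotonicity of ID-coding achievability, and the only book-keeping is verifying the equality of $E_{\sf sp}$ on the shared input support and confirming the correct direction when inverting the non-increasing $G^*$.
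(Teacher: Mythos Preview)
Your proposal is correct and takes essentially the same approach as the paper, which simply states that the proposition is ``immediate from the operational definition of identification coding'' without further detail. You spell out precisely what this means---that any $(N,\mu,\lambda)$-ID code for $Q_{\sf X|U'}$ is also one for $Q_{\sf X|U}$---and then add the equivalent single-letter verification via $e_{Q_{\sf X|U}}\ge e_{Q_{\sf X|U'}}$ and the resulting monotonicity of $G^*$, which is exactly how the proposition is used in the subsequent corollary.
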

Now pick $\mathcal{U}'=\{0,1\}$ in Example~\ref{ex25}, so that $Q_{\sf X|U'}$ is a binary symmetric channel with crossover probability $\delta$. We have:
\begin{prop}\label{cor30}
The channel in Example~\ref{ex25} satisfies
\begin{align}
S(d(a^*\|\delta))\ge R
\label{e225}
\end{align}
for any $0<R<1-h(\delta)$, where $d(\cdot\|\cdot)$ is the binary divergence function and $a^*$ is the solution to the following equation in the range $a\in[\delta,\frac{1}{2}]$.
\begin{align}
R=1-h(a)
-d(a\|\delta).
\label{e226}
\end{align}
\end{prop}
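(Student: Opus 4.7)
\smallskip\noindent\textbf{Proof proposal.} The plan is a two-step reduction: use Proposition~\ref{prop_subset} to transfer the converse for the original three-symbol channel to the BSC$(\delta)$ obtained by restricting the input alphabet to $\mathcal{U}' = \{0,1\}$, and then evaluate the bound of Theorem~\ref{thm28} for a BSC. Since Theorem~\ref{thm28} gives $S_{\epsilon}(G^{*}(R)) \ge R$ for every $\epsilon \in (0,\tfrac12)$, taking the supremum over $\epsilon$ will yield $S(G^{*}(R)) \ge R$ in the end, so the entire task is to identify $G^{*}(R)$ for BSC$(\delta)$ explicitly.

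The first main step is to compute $e(R)$ for BSC$(\delta)$, as defined in \eqref{e198}. I would argue by the $0\leftrightarrow 1$ symmetry of the channel that the outer maximum is attained at the equiprobable input $Q_{\sf U}$ (so that $I(Q_{\sf U},Q_{\sf X|U}) = 1-h(\delta) = C$, safely $\ge R$), and that the inner sphere-packing minimum is attained at a symmetric test channel $P_{\sf X|U}$ with crossover $a \in [\delta, \tfrac12]$. Under these reductions the mutual-information constraint becomes $1-h(a)\le R+2G$ and the objective becomes $d(a\|\delta)$; since $d(\cdot\|\delta)$ is convex and decreasing on $[\delta,\tfrac12]$, the minimum is achieved on the boundary of the constraint, giving
\begin{equation*}
e(R) = d(a_R\|\delta), \qquad h(a_R)=1-R, \qquad a_R\in[\delta,\tfrac12].
\end{equation*}

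The second main step is to solve the defining equation $e(R+2G)=G$ for $G = G^{*}(R)$. Writing $G = d(a^{*}\|\delta)$ and eliminating $a^{*}$ through $h(a^{*}) = 1 - R - 2G$, one obtains precisely the transcendental equation \eqref{e226} characterizing $a^{*}\in[\delta,\tfrac12]$, and hence $G^{*}(R) = d(a^{*}\|\delta)$. Applying Theorem~\ref{thm28} to the BSC and invoking Proposition~\ref{prop_subset} to lift the converse back to the original channel of Example~\ref{ex25} delivers \eqref{e225}.

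The only genuinely delicate part is justifying the symmetry reduction in the computation of $e(R)$: one needs to check that averaging a generic $(Q_{\sf U},P_{\sf X|U})$ with its image under the involution $0\leftrightarrow 1$ simultaneously preserves the relevant mutual-information constraints (by linearity in the joint distribution, modulo standard concavity of $I$ in the input), does not decrease the outer objective, and does not increase the inner objective (by convexity of $D(\cdot\|Q_{\sf X|U}|Q_{\sf U})$). The remaining steps are direct invocations of Theorem~\ref{thm28} and Proposition~\ref{prop_subset}, together with the monotonicity $S(G)\ge S_{\epsilon}(G)$ for all $\epsilon>0$ used to upgrade from $\epsilon\in(0,\tfrac12)$ to the desired unconditional lower bound on $S(\cdot)$.
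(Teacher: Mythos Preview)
Your approach and the paper's are essentially the same: restrict to the BSC$(\delta)$ via Proposition~\ref{prop_subset}, evaluate the sphere-packing exponent at the equiprobable input, and invoke Theorem~\ref{thm28}. The one substantive difference is that you aim to compute $G^{*}(R)$ for the BSC \emph{exactly}, which forces you into the symmetry argument you flag as ``delicate.'' The paper avoids this entirely: it only uses the trivial lower bound
\[
e(R)\;\ge\; E_{\sf sp}(R,Q^{*}_{\sf U'},Q_{\sf X|U'})=\min_{a:\,1-h(a)\le R} d(a\|\delta),
\]
obtained by simply plugging the equiprobable input into the outer maximum in \eqref{e198}. This yields a $G_0$ solving \eqref{e229} with $G_0\le G^{*}(R)$, and since $S_{\epsilon}(\cdot)$ is non-increasing one gets $S_{\epsilon}(d(a^{*}\|\delta))=S_{\epsilon}(G_0)\ge S_{\epsilon}(G^{*}(R))\ge R$ directly. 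So the symmetry reduction for both the outer maximizer and the inner minimizer is unnecessary; an inequality in the right direction already suffices for the claimed converse.

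A minor arithmetic point: when you substitute $G=d(a^{*}\|\delta)$ into $h(a^{*})=1-R-2G$, you get $R=1-h(a^{*})-2d(a^{*}\|\delta)$, not the equation with a single $d(a\|\delta)$ as written in \eqref{e226}. The paper's derivation from \eqref{e230} produces the same expression with the factor $2$, so this is a discrepancy in the statement itself rather than in either proof.
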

The bound in Proposition~\ref{cor30} and the achievability bound Theorem~\ref{thm_ma} are illustrated in Figure~\ref{fig_exp}.
\begin{figure}
  \centering
  \includegraphics[width=3.5in]{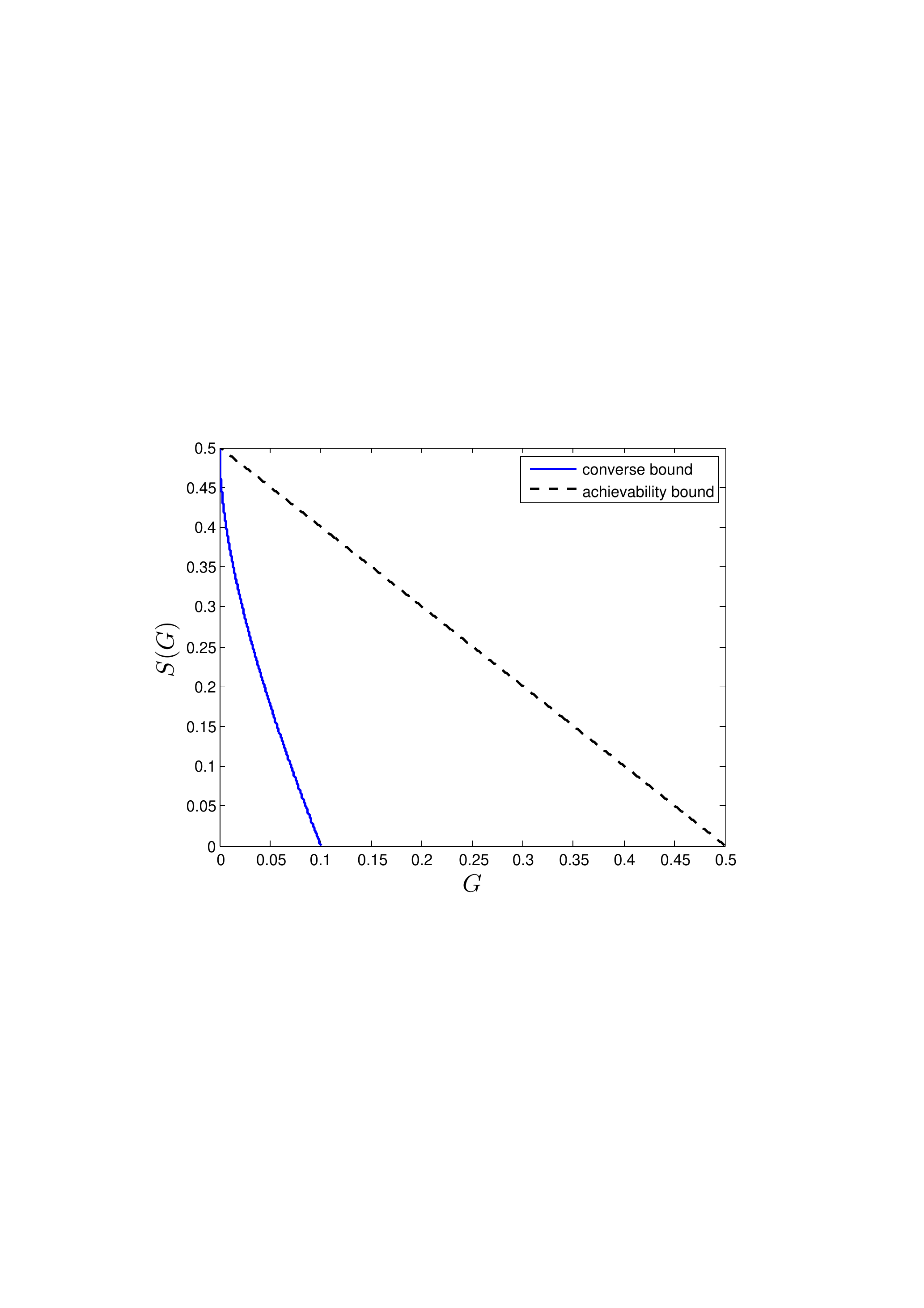}\\
  \caption{Achievability bound in Theorem~\ref{thm_ma} and converse bound in Proposition~\ref{cor30} for the channel in Example~\ref{ex25} with $\delta=0.11$.}\label{fig_exp}
\end{figure}
\begin{proof}
Clearly
\begin{align}
e(R)&\ge E_{\sf sp}(R,Q^*_{\sf U'},Q_{\sf X|U'})
\\
&=\min_{a:1-h(a)\le R}d(a\|\delta)
\end{align}
where $Q^*_{\sf U'}$ is the equiprobable distribution. Thus the $G$ satisfying
\begin{align}
\min_{a:1-h(a)\le R+2G}d(a\|\delta)=G,\label{e229}
\end{align}
denoted as $G_0$, satisfies $G_0<G^*(R)$. Since $R<1-h(\delta)$, it is clear that $G_0>0$ and the minimum in \eqref{e229} is achieved when
\begin{align}
1-h(a)=R+2G.\label{e230}
\end{align}
Substituting $G=d(a\|\delta)$ into \eqref{e230} we obtain the equation (for $a$) in \eqref{e226}. Then $G_0=d(a^*\|\delta)$, Theorem~\ref{thm28} and Proposition~\ref{prop_subset} imply \eqref{e225}.
\end{proof}

\section{Resolvability under Other Distance Measures}\label{sec_univ}
So far we have seen the tradeoffs between $G$ and $R$ for approximating either a fixed or the worst-case output distribution in $E_{\gamma}$. In this section we argue that most of these tradeoffs are insensitive to the distance metric.

\subsection{Excess Relative Information}
First, observe that the bounds relating $E_{\gamma}$ and the excess relative information \eqref{e47} immediately imply that our asymptotic results on $E_{\gamma}$-resolvability (Corollaries~\ref{thm_s}, \ref{thm_s1}, \ref{cor_asymp} and Theorems~\ref{thm_ma}, \ref{thm_conv}, \ref{thm28}) continue to hold if $E_{\gamma_n}(Q_{X^n[c^M]}\|
{{\pi}}_{X^n})$ in Definition~\ref{defn14} is replaced by $\bar{F}_{\gamma_n}(Q_{X^n[c^M]}\|
{{\pi}}_{X^n})$.

\subsection{Relative Entropy}
Next we upper-bound the relative entropy between the output distribution and the target distribution, which essentially relies on the inequality~\eqref{e_prop20}.
\begin{thm}
Fix ${{\pi}}_X$ and $Q_{UX}=Q_UQ_{X|U}$. Let $U^M:=(U_1,\dots,U_M)$ be i.i.d.~according to $Q_U$. Define for any $(c_1,\dots,c_M)\in \mathcal{U}^M$,
\begin{align}
Q_{X[c^M]}:=\frac{1}{M}\sum_{m=1}^M Q_{X|U=c_m}.
\end{align}
Then for any
$\tau_0\ge0$, $\alpha\in\mathbb{R}$ and $\beta>1$,\footnote{For a real-valued random variable $A$, we write $\mathbb{E}[A]^+$ as an abbreviation for $\mathbb{E}\left[[A]^+\right]$.}
\begin{align}
&\quad\mathbb{E}[D(Q_{X[U^M]}\|{{\pi}}_X)]
\nonumber\\
&\le
\tau_0+\mathbb{E}\left[\imath_{Q_{X|U}\|{{\pi}}_X}(X|U)
-\tau_0-\log \frac{M}{2}\right]^+
\nonumber\\
&\quad+\beta\mathbb{E}\left[\imath_{Q_X\|{{\pi}}_X}(X)
-\tau_0+\alpha\right]^+
+\frac{2\beta}{\beta-1}\exp(-\alpha)
\label{e250}
\end{align}
where $(U,X)\sim Q_U Q_{X|U}$.
\end{thm}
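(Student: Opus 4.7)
The plan is to connect the relative entropy to the excess relative information via the integral identity in Proposition~\ref{prop_1}(3) and then invoke the weakened softer-covering bound \eqref{t3} from Remark~\ref{remweaken} with parameters chosen so that the three terms of \eqref{t3}, after integration over the threshold, line up exactly with the last three terms of \eqref{e250}. The first term $\tau_0$ will simply absorb the ``small $\gamma$'' part of the integral using the trivial bound $\bar{F}_{\gamma}\le 1$.

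Concretely, applying \eqref{e_prop20} conditionally on the random codebook $U^M$ and then taking expectations (by Fubini),
\begin{align}
\mathbb{E}[D(Q_{X[U^M]}\|{{\pi}}_X)]
\le \int_0^{\infty}\mathbb{E}[\bar{F}_{e^\tau}(Q_{X[U^M]}\|{{\pi}}_X)]\,{\rm d}\tau.
\end{align}
Split the integral as $\int_0^{\tau_0}+\int_{\tau_0}^{\infty}$; since $\bar{F}_{e^\tau}\le 1$, the first piece contributes at most $\tau_0$, giving the first term of \eqref{e250}.

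For the tail piece, apply \eqref{t3} with $\gamma=e^\tau$ and the (crucial) choices
\begin{align}
\epsilon=\frac{e^\tau}{2},\qquad
\gamma_2=\exp\!\left(\tau_0-\alpha+\frac{\tau-\tau_0}{\beta}\right),
\end{align}
which clearly satisfy $0<\epsilon<\gamma$ and $\gamma_2>0$. Then the second term of \eqref{t3} is $\mathbb{P}[\imath_{Q_{X|U}\|{{\pi}}_X}(X|U)>\tau+\log(M/2)]$, whose integral over $\tau\in[\tau_0,\infty)$ is $\mathbb{E}[\imath_{Q_{X|U}\|{{\pi}}_X}(X|U)-\tau_0-\log(M/2)]^+$. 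The first term is $\mathbb{P}[\imath_{Q_X\|{{\pi}}_X}(X)>\tau_0-\alpha+(\tau-\tau_0)/\beta]$; the change of variable $s=\tau_0-\alpha+(\tau-\tau_0)/\beta$ (so ${\rm d}\tau=\beta\,{\rm d}s$) turns its integral into $\beta\int_{\tau_0-\alpha}^{\infty}\mathbb{P}[\imath_{Q_X\|{{\pi}}_X}(X)>s]\,{\rm d}s=\beta\mathbb{E}[\imath_{Q_X\|{{\pi}}_X}(X)-\tau_0+\alpha]^+$. Finally, a short algebra shows $\gamma_2/\epsilon=2e^{-\alpha}\exp(-(\tau-\tau_0)(\beta-1)/\beta)$, whose integral over $[\tau_0,\infty)$ is exactly $\frac{2\beta}{\beta-1}e^{-\alpha}$.

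The only nontrivial step is finding this parameter choice: one needs $\gamma_2(\tau)$ and $\epsilon(\tau)$ that make all three terms of \eqref{t3} simultaneously integrable and match the target expressions. The main constraint is that $\gamma_2/\epsilon$ must be integrable over $[\tau_0,\infty)$, which rules out both constant $\gamma_2$ (first term blows up upon integration) and the natural scaling $\gamma_2\propto e^\tau$ (third term blows up). This forces $\gamma_2\propto e^{\tau/\beta}$, with $\beta>1$ ensuring exponential decay of $\gamma_2/\epsilon$; the Jacobian $\beta$ from the change of variables is precisely what produces the factor $\beta$ in front of $\mathbb{E}[\imath_{Q_X\|{{\pi}}_X}(X)-\tau_0+\alpha]^+$ in \eqref{e250}.
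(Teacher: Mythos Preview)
Your proof is correct and follows essentially the same route as the paper: apply \eqref{e_prop20}, split the integral at $\tau_0$, and on $[\tau_0,\infty)$ invoke \eqref{t3} with exactly the parameter choices $\gamma=e^{\tau}$, $\epsilon=\gamma/2$, $\gamma_2=\exp(\tau_0-\alpha+(\tau-\tau_0)/\beta)$. The only cosmetic difference is that the paper first displays the pointwise bound on $\bar{F}_{e^{\tau}}$ resulting from \eqref{t3} and then integrates, whereas you integrate term by term directly.
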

\begin{proof}
Setting $\epsilon\leftarrow \frac{\gamma}{2}$, $\gamma\leftarrow\exp(\tau)$ and $\gamma_2\leftarrow\exp\left(\tau_0-\alpha+\frac{1}{\beta}(\tau-\tau_0)\right)$
in \eqref{t3}, we obtain
\begin{align}
&\quad\mathbb{P}\left[\imath_{Q_{X[U^M]}\|{{\pi}}_X}(\hat{X})>\tau\right]
\nonumber\\
&\le \mathbb{P}\left[\imath_{Q_X\|{{\pi}}_X}(X)>\tau_0-\alpha+\frac{1}{\beta}(\tau-\tau_0)\right]\nonumber
\\
&\quad+\mathbb{P}\left[\imath_{Q_{X|U}\|{{\pi}}_X}(X|U)> \log \frac{M}{2}+\tau\right]\nonumber
\\
&\quad+2\exp\left(\tau_0-\alpha-\tau+\frac{1}{\beta}(\tau-\tau_0)\right)
\end{align}
where $\hat{X}\sim Q_{X[U^M]}$ conditioned on $U^M=c^M$.
Integrating both sides with respect to $\tau$ and using \eqref{e_prop20}, we find
\begin{align}
&\quad\mathbb{E}[D(Q_{X[U^M]}\|{{\pi}}_X)]
\nonumber\\
&\le
\int_0^{\infty}\mathbb{P}
\left[\imath_{Q_{X[U^M]}\|{{\pi}}_X}(\hat{X})>\tau\right]
{\rm d}\tau
\\
&\le \tau_0+ \int_{\tau_0}^{\infty}\mathbb{P}
\left[\imath_{Q_{X[U^M]}\|{{\pi}}_X}(\hat{X})>\tau\right]
{\rm d}\tau
\\
&\le
\textrm{R.H.S.~of \eqref{e250}}.
\end{align}
\end{proof}
Recall that a sequence of nonnegative random variables converging to zero in probability is uniformly integrable if and only if the sequence also converges to zero in expectation. This implies that given a sequence of real-valued random variables $A_n$,
\begin{align}
\mathbb{E}[A_n]^+=o(n)
\end{align}
provided that $\lim_{n\to\infty}\mathbb{P}[\frac{1}{n}A_n>\epsilon]=0$ for any $\epsilon>0$ (i.e.~the limsup in probability \cite{han1993approximation} of $\frac{1}{n}A_n$ does not exceed $0$) and that $\frac{1}{n}A_n$ is uniformly integrable. Therefore the $\mathbb{E}[\cdot]^+$ terms in \eqref{e250} can be easily analyzed in the asymptotic setting by setting $A_n$ to be translates of the relative information functions. If we change the definitions of achievable triple/pair by replacing \eqref{e65} and \eqref{e66} with
\begin{align}
\limsup_{n\to\infty}\frac{1}{n}D(Q_{X^n[c^{M_n}]}\|
{{\pi}}_{X^n})\le G
\label{e258}
\end{align}
then we see the achievability parts of Corollary~\ref{thm_s}, \ref{thm_s1} and Theorem~\ref{thm_ma}, \ref{thm_conv} continue to hold. In particular, the relative entropy counterpart of Theorem~\ref{thm_conv} implies that the bound in \cite[Theorem~12]{han1993approximation} is not asymptotically tight. The intuition behind this fact been explained in Section~\ref{sec_ach}, following Corollary~\ref{cor_asymp}.

Using the \emph{strong} converses of $E_{\gamma}$-resolvability and the upper-bound on relative entropy \eqref{e_prop_21}, we immediately obtain converses of resolvability in relative entropy. That is, if the definition of achievable triple/pair is changed by replacing \eqref{e65} and \eqref{e66} with \eqref{e258}, the the converse parts of Corollary~\ref{thm_s}, \ref{thm_s1} and Theorem~\ref{thm_conv} continue to hold.

Unfortunately, we only have a ``$\frac{1}{2}$-converse'' instead of a strong converse for the worst-case $E_{\gamma}$-resolvability (see Remark~\ref{rem42}). Therefore we don't have a nice counterpart of Theorem~\ref{thm28}: there is a loss of factor $2$ when \eqref{e_prop_21} is applied.

\subsection{Smooth R\'{e}nyi Entropy}
Most of the asymptotic resolvability results also hold for smooth R\'{e}nyi divergences automatically: suppose we change the definitions of achievable triple/pair by replacing \eqref{e65} and \eqref{e66} with
\begin{align}
\limsup_{n\to\infty}\frac{1}{n}D_{\alpha}^{+\epsilon}(Q_{X^n[c^{M_n}]}\|
{{\pi}}_{X^n})\le G
\end{align}
where $\epsilon\in(0,1)$ and $\alpha$ are fixed. Then the achievability parts of Corollary~\ref{thm_s}, \ref{thm_s1} and Theorem~\ref{thm_ma}, \ref{thm_conv} continue to hold for $\alpha\in[0,1)$, which is immediate from the bound \eqref{e_prop23} and the achievability results for $E_{\gamma}$. The converse parts of Corollary~\ref{thm_s}, \ref{thm_s1} and Theorem~\ref{thm_conv}, \ref{thm28} also continue to hold for $\alpha\in[0,1)$, because by Proposition~\ref{prop_1}-\ref{pt_mono}) we only need to consider $\alpha=0$, then \eqref{e_53} and the $(1-\epsilon)$-converses for $E_{\gamma}$ imply the desired result.

If \eqref{e65} and \eqref{e66} in the definitions of achievable triple/pair are changed to
\begin{align}
\limsup_{n\to\infty}\frac{1}{n}D_{\alpha}^{-\epsilon}(Q_{X^n[c^{M_n}]}\|
{{\pi}}_{X^n})\le G
\end{align}
where $\epsilon\in(0,1)$ and $\alpha$ are fixed, then the achievability parts of Corollary~\ref{thm_s}, \ref{thm_s1} and Theorem~\ref{thm_ma}, \ref{thm_conv} can still be established for $\alpha\in(1,\infty]$ because of the monotonicity (Proposition~\ref{prop_1}-\ref{pt_mono})) and the fact that $E_{\gamma}$ directly corresponds to $D^{-\epsilon}_{\infty}$ (Proposition~\ref{prop_1}-\ref{pt_13_6})).
The converse parts of Corollary~\ref{thm_s}, \ref{thm_s1} and Theorem~\ref{thm_conv} also continue to hold for $\alpha\in(1,\infty]$, in view of  \eqref{e_prop23_1} and the $\epsilon$-converses for $E_{\gamma}$; Theorem~\ref{thm28} continues to hold for $\epsilon\in (0,\frac{1}{2})$, by the ``$\frac{1}{2}$-converse'' (Remark~\ref{rem42}) for $E_{\gamma}$.

\section{Application to Lossy Source Coding}\label{seclikelihood}
The simplest application of the new resolvability result (in particular, the softer-covering lemma) is to derive a one-shot achievability bound for lossy source coding,
which is most fitting in the regime of low rate and exponentially decreasing success probability. The method is applicable to general sources. In the special case of i.i.d.~sources, it recovers the ``success exponent'' in lossy source coding originally derived by the method of types \cite{csiszar1981information} for discrete memoryless sources.
The achievability bound in \cite{song} can be viewed as the $\gamma=1$ special case, which is capable of recovering the rate-distortion function, but cannot recover the exact rate-distortion-exponent tradeoff.
\begin{thm}\label{thm_source}
Consider a source with distribution $\pi_X$ and a distortion function $d(\cdot,\cdot)$ on $\mathcal{U}\times\mathcal{X}$. For any joint distribution $Q_UQ_{X|U}$, $\gamma\ge1$, $d>0$ and integer $M$, there exists a random transformation $\pi_{U|X}$ (stochastic encoder) whose output takes at most $M$ values, and
\begin{align}
\mathbb{P}[d(\bar{U},\bar{X})\le d]\ge\frac{1}{\gamma}\left(\mathbb{P}[d(U,X)\le d]-
\mathbb{E}[E_{\gamma}(Q_{X[U^M]}\|\pi_X)]\right)\label{e21}
\end{align}
where $(\bar{U},\bar{X})\sim \pi_{U|X}\pi_X$, $(U,X)\sim Q_{UX}$, and
$U^M\sim Q_U^{\otimes M}$.
\end{thm}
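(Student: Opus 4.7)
\medskip

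\noindent\textbf{Proof plan.} The approach is a standard random-coding / likelihood-encoder construction combined with the $E_{\gamma}$ change-of-measure from Proposition~\ref{prop3}-\ref{prop3_1}). Draw a codebook $U^M=(U_1,\dots,U_M)$ i.i.d.~from $Q_{\sf U}$, and define the likelihood encoder
\begin{align}
\pi_{J|X,U^M}(m|x,c^M):=\frac{Q_{X|U}(x|c_m)}{\sum_{k=1}^M Q_{X|U}(x|c_k)},
\end{align}
whose output index $J$ is mapped to the codeword $c_J$ (so the encoder output takes at most $M$ values). I would analyze this scheme in two ``worlds'': the \emph{real} world, in which $X\sim\pi_X$, and a fictitious world in which $X\sim Q_{X[c^M]}$ while the encoder is unchanged. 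The main goal is to lower-bound the real-world success probability $\mathbb{P}[d(c_J,X)\le d]$ by the fictitious-world one, using $E_{\gamma}$-resolvability to quantify the error incurred by the switch $Q_{X[c^M]}\to\pi_X$.

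First I would compute the fictitious-world joint law: a direct calculation shows
\begin{align}
Q_{X[c^M]}(x)\,\pi_{J|X,U^M}(m|x,c^M)=\frac{1}{M}Q_{X|U}(x|c_m),
\end{align}
so under the fictitious scenario $J$ is uniform on $\{1,\dots,M\}$ and $X\mid J=m\sim Q_{X|U}(\cdot|c_m)$. Hence
\begin{align}
\mathbb{E}\bigl[\mathbb{P}_{\text{fict}}[d(c_J,X)\le d\mid U^M]\bigr]=\mathbb{P}[d(U,X)\le d],
\end{align}
where $(U,X)\sim Q_UQ_{X|U}$.

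Next I would apply the change-of-measure bound Proposition~\ref{prop3}-\ref{prop3_1}) to the \emph{joint} laws of $(X,J)$ in the two worlds, conditionally on the realized codebook. Since both joint distributions share the same conditional $\pi_{J|X,U^M}$, a one-line Radon--Nikodym calculation (equivalently, Proposition~\ref{prop3}-\ref{pt3_4})) yields
\begin{align}
E_\gamma\bigl(Q_{X[c^M]}\pi_{J|X,U^M}\,\bigl\|\,\pi_X\,\pi_{J|X,U^M}\bigr)=E_\gamma(Q_{X[c^M]}\|\pi_X),
\end{align}
so Proposition~\ref{prop3}-\ref{prop3_1}) applied to the event $\mathcal{E}:=\{(x,m):d(c_m,x)\le d\}$ gives, conditionally on $U^M=c^M$,
\begin{align}
\mathbb{P}_{\text{real}}[d(c_J,X)\le d\mid c^M]\ge\tfrac{1}{\gamma}\!\left(\mathbb{P}_{\text{fict}}[d(c_J,X)\le d\mid c^M]-E_\gamma(Q_{X[c^M]}\|\pi_X)\right).
\end{align}
Finally I would take expectation over the random codebook $U^M\sim Q_U^{\otimes M}$, plug in the fictitious-world identity from the previous paragraph, and invoke the probabilistic method: some deterministic realization $c^{*M}$ achieves at least the expected bound, and the associated likelihood encoder is the desired $\pi_{U|X}$. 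The only subtle step is the preservation-of-$E_\gamma$ under a common conditional, and the only routine verification is that the fictitious joint law marginalizes as claimed; everything else is an application of $E_\gamma$'s change-of-measure property and linearity of expectation over the codebook.
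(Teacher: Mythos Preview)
Your proposal is correct and follows essentially the same approach as the paper: the likelihood encoder construction, the change-of-measure via Proposition~\ref{prop3}-\ref{prop3_1}) applied to the event $\{d(\cdot,\cdot)\le d\}$, the reduction of $E_{\gamma}$ on the joint to $E_{\gamma}$ on the $X$-marginal (the paper invokes the equality case of Proposition~\ref{prop3}-\ref{prop3_3}) rather than your Radon--Nikodym/Proposition~\ref{prop3}-\ref{pt3_4}) argument, but these are equivalent here), averaging over the random codebook to identify the fictitious success probability with $\mathbb{P}[d(U,X)\le d]$, and finally selecting a good deterministic codebook.
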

\begin{proof}
Given a codebook $(c_1,\dots,c_M)\in\mathcal{U}$,
let $P_U$ be the equiprobable distribution on $(c_1,\dots,c_M)$ and set
\begin{align}
P_{UX}:=Q_{X|U}P_U.
\end{align}
The \emph{likelihood encoder} is then defined as a random transformation
\begin{align}
{{\pi}}_{U|X}:=P_{U|X}
\end{align}
so that the joint distribution of the codeword selected and the source realization $X$ is
\begin{align}
{{\pi}}_{UX}={{\pi}}_XP_{U|X}
\end{align}
From Proposition~\ref{prop3}-\ref{prop3_1}) and Proposition~\ref{prop3}-\ref{prop3_3}) we obtain
\begin{align}
\gamma\pi_{UX} (d(\cdot,\cdot)\le d)
&\ge P_{UX} (d(\cdot,\cdot)\le d)
-E_{\gamma}(P_{XU}||{{\pi}}_{XU})\label{e280}
\\
&= P_{UX} (d(\cdot,\cdot)\le d)
-E_{\gamma}(P_X||{{\pi}}_X)
\label{e272}
\end{align}
where $(\hat{U},\hat{X})\sim P_{UX}$.
Note that $P_{UX}$ and $\pi_{U|X}$ depend on the codebook $c^M$.
Now consider a random codebook $c^M\leftarrow U^M$.
Taking the expectation on both sides of \eqref{e272} with respect to $U^M$, we have
\begin{align}
&\quad\gamma\mathbb{E}[\pi_{UX} (d(\cdot,\cdot)\le d)]
\nonumber\\
&\ge
\mathbb{E}[P_{UX} (d(\cdot,\cdot)\le d)]-\mathbb{E}
[E_{\gamma}(Q_{X[U^M]}\|\pi_X)]
\\
&= \mathbb{P}[d(U,X)\le d]
-\mathbb{E}
[E_{\gamma}(Q_{X[U^M]}\|\pi_X)]\label{e_2s}
\end{align}
where in \eqref{e_2s} we used the fact that $\mathbb{E}[ P_{UX}]=Q_{UX}$.
Finally we can choose one codebook (corresponding to one $\pi_{U|X}$) such that $\pi_{UX} (d(\cdot,\cdot)\le d)$ is at least its expectation.
\end{proof}

\begin{rem}
In the i.i.d.~setting,
let $R({{\pi}}_{\sf X},d)$ be the rate-distortion function when the source has per-letter distribution ${{\pi}}_{\sf X}$. The distortion function for the block is derived from the per-letter distortion by
\begin{align}
d^n(u^n,x^n):=\frac{1}{n}\sum_{i=1}^n d(u_i,x_i).
\end{align}
Let $(\bar{\sf X}^n,\bar{\sf U}^n)$ be the source-reconstruction pair distributed according to ${{\pi}}_{{\sf X}^n{\sf U}^n}$.
If $0\le R<R({{\pi}}_{\sf X},d)$, the maximal probability that the distortion does not exceed $d$ converges to zero with the exponent
\begin{align}
\lim_{n\to\infty}\frac{1}{n}\log\frac{1}{\mathbb{P}[d^n(\bar{\sf U}^n,\bar{\sf X}^n)\le d]}=G(R,d)
\end{align}
where
\begin{align}\label{e_sexp}
G(R,d):=\min_Q[D(Q||P)+[R(Q,d)-R]^+].
\end{align}
A weaker achievability result than \eqref{e_sexp} was proved in \cite[p168]{omura1975lower}, whereas the final form \eqref{e_sexp} is given in \cite[p158, Ex6]{csiszar1981information} based on method of types. Here we can easily prove the achievability part of \eqref{e_sexp} using Theorem~\ref{thm_source} and Corollary~\ref{cor_asymp} by setting $Q_{\sf X}$ to be the minimizer of \eqref{e_sexp} and $Q_{\sf U|X}$ to be such that
\begin{align}
\mathbb{E}[d({\sf U,X})]&\le d,
\\
I(Q_{\sf U},Q_{\sf X|U})&\le R.
\end{align}
Then $\gamma_n=\exp(nE)$ with
\begin{align}\label{esuccess}
E>D(Q_{\sf X}||{\pi}_{\sf X})+[I(Q_{\sf U},Q_{\sf X|U})-R]^+,
\end{align}
ensures that
\begin{align}
\mathbb{P}[d^n(\bar{\sf U}^n,\bar{\sf X}^n)\le d]\ge\frac{1}{2}\exp(-nE)
\end{align}
for $n$ large enough, by the law of large numbers.
\end{rem}
\begin{rem}
Since the $E_{\gamma}$ metric reduces to total variation distance when $\gamma=1$, Theorem~\ref{thm_source} generalizes the likelihood source encoder based on the standard soft-covering/resolvability lemma \cite{song}. In \cite{song}, the error exponent for the likelihood source encoder at rates \emph{above} the rate-distortion function is analyzed using the exponential decay of total variation distance in the approximation of output statistics, and the exponent does not match the optimal exponent found in \cite{csiszar1981information}. It is also possible to upper-bound the success exponent of the total variation distance-based likelihood encoder at rates \emph{below} the rate-distortion function by analyzing the exponential convergence to $2$ of total variation distance in the approximation of output statistics; however that does not yield the optimal exponent \eqref{e_sexp} either.
This application illustrates one of the nice features of the $E_{\gamma}$-resolvability
method: it converts a large deviation analysis into a law of large numbers analysis, that is, we only care about whether $E_{\gamma}$ converges to $0$, but not the speed, even when dealing with error exponent problems.
\end{rem}

\section{Application to One-shot Mutual Covering Lemmas}\label{sec_mutual}
Another application of the softer-covering lemma is a one-shot generalization of the \emph{mutual covering lemma} in network information theory \cite{el1981proof}. The asymptotic mutual covering lemma says, fixing a (per-letter) joint distribution $P_{\sf UV}$, if enough ${\sf U}^n$-sequences and ${\sf V}^n$-sequences are independently generated according to $P_{{\sf U}^n}$ and $P_{{\sf V}^n}$ respectively, then with high probability we will be able to find one pair jointly typical with respect to $P_{\sf UV}$. In the one-shot version, the ``typical set'' is replaced with an arbitrarily high probability (under the given joint distribution) set.

The one-shot mutual covering lemma can be used to prove a one-shot version of Marton's inner bound for the broadcast channel with a common message\footnote{More precisely, we are referring to the three auxiliary random variables version due to Liang and Kramer \cite[Theorem~5]{liang2007rate} (see also \cite[Theorem~8.4]{el2011network}), which is equivalent to an inner bound obtained by Gelfand and Pinsker \cite{gel1980capacity} upon optimization (see \cite{liang2011equivalence} or \cite[Remark~8.6]{el2011network}).}
without time-sharing, filling a gap in the proof in \cite{verdu2012non} based on the basic covering lemma where time-sharing is necessary. More discussions about the background and the derivation of the one-shot Marton's inner bound can be found in our conference paper \cite{jingbo2015marton}.
For general discussions on single-shot covering lemmas, see \cite{verdu2012non}\cite{verdu2015}.
To avoid using time-sharing in the single-shot setting, \cite{yassaee2013technique} pursued a different approach to derive a single-shot Marton's inner bound. Moreover, a version of one-shot mutual covering lemma can be distilled from their approach \cite{yassaee2015marton}. We compare their approach and ours at the end of the section.

We proceed to provide a simple derivation of a mutual covering using the softer-covering lemma.
\begin{lem}\label{lem7}
Fix $P_{UV}$ and let
\begin{align}
P_{U^MV^L}:=\underbrace{P_U\times \dots\times P_U}_{M}\times \underbrace{P_V\times\dots\times P_V}_{L}.
\end{align}
Then
\begin{align}
&\quad
\mathbb{P}\left[\bigcap_{m=1,l=1}^{M,L}\{(U_m,V_l)\notin \mathcal{F}\}\right]
\nonumber
\\
&
\le
\mathbb{P}[(U,V)\notin\mathcal{F}]
+\mathbb{P}[\imath_{U;V}(U;V)\ge\log ML-\tau]
\nonumber
\\
&\quad+\frac{\exp(\tau)}{\max\{M,L\}}+e^{-\frac{1}{2}\exp(\tau)}.
\label{e_39}
\end{align}
for all $\tau>0$ and event $\mathcal{F}$.
\end{lem}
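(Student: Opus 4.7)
The plan is to combine the softer-covering lemma (Theorem~\ref{thm3_0}, in the matched-target form stated in Remark~\ref{rem25}) with the change-of-measure property of $E_\gamma$ given by Proposition~\ref{prop3}-\ref{prop3_1}). Since the right-hand side of the claim is symmetric in $M$ and $L$, I may assume without loss of generality that $L\ge M$, so that $\max\{M,L\}=L$.

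The first reduction is a peeling to the typical set: define $\mathcal{F}_0:=\mathcal{F}\cap\{(u,v):\imath_{U;V}(u;v)<\log ML-\tau\}$. Because $\mathcal{F}_0\subseteq\mathcal{F}$, shrinking the covering set only increases the non-covering probability on the left-hand side, and $\mathbb{P}[(U,V)\notin\mathcal{F}_0]\le \mathbb{P}[(U,V)\notin\mathcal{F}]+\mathbb{P}[\imath_{U;V}(U;V)\ge\log ML-\tau]$, which absorbs the first two terms of the target bound. On $\mathcal{F}_0$ one has the pointwise Radon--Nikodym bound $\mathrm{d}P_{V|U=u}/\mathrm{d}P_V\le ML/\exp(\tau)$, which is precisely what makes the softer-covering step bite.

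Conditioning on $U^M$, and using the independence of $V^L$ from $U^M$, the non-covering probability equals $(1-P_V(\mathcal{A}(U^M)))^L$ where $\mathcal{A}(U^M):=\bigcup_m\{v:(U_m,v)\in\mathcal{F}_0\}$. Choose $\gamma:=2L/\exp(\tau)$ and apply Remark~\ref{rem25} to the matched pair $Q_{V[U^M]}:=M^{-1}\sum_m P_{V|U=U_m}$ and $P_V$, which yields
\begin{align*}
\mathbb{E}\bigl[E_\gamma(Q_{V[U^M]}\|P_V)\bigr]\le \mathbb{P}[\imath_{U;V}(U;V)\ge\log ML-\tau]+\exp(\tau)/L,
\end{align*}
supplying the third term of the claim. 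Proposition~\ref{prop3}-\ref{prop3_1}) applied to $\mathcal{A}(U^M)$, together with the trivial lower bound $Q_{V[U^M]}(\mathcal{A}(U^M))\ge \pi(U^M):=M^{-1}\sum_m P_{V|U=U_m}(\{v:(U_m,v)\in\mathcal{F}_0\})$ (whose unconditional mean is $P_{UV}(\mathcal{F}_0)$), gives $P_V(\mathcal{A}(U^M))\ge \gamma^{-1}(\pi(U^M)-E_\gamma(Q_{V[U^M]}\|P_V))^+$. The elementary $(1-x)^L\le\exp(-Lx)$ reduces the task to bounding the unconditional expectation of $\exp(-\tfrac12\exp(\tau)(\pi-E_\gamma)^+)$.

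This last expectation is the main obstacle and must be controlled by the residual $\mathbb{P}[(U,V)\notin\mathcal{F}_0]+\exp(-\tfrac12\exp(\tau))$. The plan is a two-level threshold argument: Markov's inequality on $E_\gamma$, combined with the softer-covering bound above, confines us to $\{E_\gamma\le\epsilon\}$ at cost $\lesssim \exp(\tau)/(\epsilon L)$, which can be folded into the $\exp(\tau)/L$ budget by a suitable choice of $\epsilon$. On $\{E_\gamma\le\epsilon\}$ the exponent in the above display is at least $\tfrac12\exp(\tau)(\pi-\epsilon)^+$, and a further split on $\pi$ uses $\mathbb{E}[\pi]=P_{UV}(\mathcal{F}_0)$ via a Markov-type inequality on $1-\pi$ to absorb the small-$\pi$ contribution into $\mathbb{P}[(U,V)\notin\mathcal{F}]$. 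The delicate calibration of $\epsilon$ and of the secondary threshold on $\pi$---so that the residual exponential factor on the ``good'' event is exactly $\exp(-\tfrac12\exp(\tau))$ while the two leakage terms match the remaining error terms---is the crux of the proof, and relies critically on the specific choice $\gamma=2L/\exp(\tau)$ that simultaneously balances the softer-covering error $\exp(\tau)/L$ and the target exponent $\tfrac12\exp(\tau)$.
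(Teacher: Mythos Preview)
Your overall architecture matches the paper's: assume $L\ge M$, use the change-of-measure inequality for $E_\gamma$ to lower-bound $P_V(\mathcal{A}(U^M))$ in terms of $\pi(U^M)-E_\gamma(Q_{V[U^M]}\|P_V)$, and control $\mathbb{E}[E_\gamma]$ via the softer-covering lemma (Remark~\ref{rem25}) with $\gamma=2L\exp(-\tau)$. The divergence---and the gap---is in how you pass from $(1-P_V(\mathcal{A}))^L$ to a linear quantity.

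You bound $(1-x)^L\le e^{-Lx}$ and are then left with $\mathbb{E}\bigl[\exp\bigl(-\tfrac12\exp(\tau)(\pi-E_\gamma)^+\bigr)\bigr]$, which you propose to handle by a two-level Markov threshold argument. This cannot give the stated bound. The Markov step on $E_\gamma$ costs $\mathbb{E}[E_\gamma]/\epsilon$, and the softer-covering bound on $\mathbb{E}[E_\gamma]$ contains the term $\mathbb{P}[\imath_{U;V}\ge\log ML-\tau]$, not just $\exp(\tau)/L$; this extra term does not disappear and cannot be ``folded into the $\exp(\tau)/L$ budget'' unless $\epsilon\ge 1$. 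Similarly the Markov step on $1-\pi$ forces a $1/\delta$ loss on $P_{UV}(\mathcal{F}^c)$. Any choice of $(\epsilon,\delta)$ either blows up the constants or degrades the exponent in $e^{-\frac12\exp(\tau)}$; the calibration you describe does not exist. The preliminary peeling to $\mathcal{F}_0$ is also unnecessary and muddies the bookkeeping, since the $\mathbb{P}[\imath\ge\cdot]$ term then appears twice.

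The paper avoids all of this with a single elementary inequality: for $p\in[0,1]$ and $\alpha>0$,
\[
\Bigl(1-\tfrac{p\alpha}{L}\Bigr)^{L}\le 1-p+e^{-\alpha},
\]
applied with $p=[\pi(U^M)-E_\gamma]^+\in[0,1]$ and $\alpha=L/\gamma=\tfrac12\exp(\tau)$. (Equivalently: $e^{-\alpha p}\le 1-p+e^{-\alpha}$ by convexity on $[0,1]$, which is all you need after your $(1-x)^L\le e^{-Lx}$ step.) This \emph{linearizes} the conditional bound, giving immediately
\[
\mathbb{P}[\text{non-covering}\mid U^M]\le 1-\pi(U^M)+E_\gamma(Q_{V[U^M]}\|P_V)+e^{-\frac12\exp(\tau)},
\]
and the expectation is then trivial: $\mathbb{E}[1-\pi]=P_{UV}(\mathcal{F}^c)$ and $\mathbb{E}[E_\gamma]\le\mathbb{P}[\imath_{U;V}\ge\log ML-\tau]+\exp(\tau)/L$. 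No peeling, no thresholds, exact constants.
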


\begin{proof}
Assume without loss of generality that $L\ge M$. For any $u\in\mathcal{U}$, define
\begin{align}
\mathcal{F}_{u}&:=\{v:(u,v)\in\mathcal{F}\},
\end{align}
and for any $u^M\in\mathcal{U}^M$, define
\begin{align}
\mathcal{A}_{u^M}&:=\bigcup_{m=1}^M \mathcal{F}_{u_m}.
\end{align}
Now fix a $U$-codebook $c^M$ and observe that
\begin{align}
\gamma P_V(\mathcal{A}_{c^M})
&\ge P_{V[c^M]}(\mathcal{A}_{c^M})-E_{\gamma}(P_{V[c^M]}\|P_V)
\label{e292}
\\
&\ge
\frac{1}{M}\sum_{m=1}^MP_{V|U=c_m}(\mathcal{F}_{c_m})-E_{\gamma}(P_{V[c^M]}\|P_V)
\label{e293}
\end{align}
where we recall that
\begin{align}
P_{V[c^M]}:=\frac{1}{M}\sum_{m=1}^M P_{V|U=c_m}.
\end{align}
\eqref{e292} is from the definition of $E_{\gamma}$
and \eqref{e293} is because $\mathcal{F}_{c_m}\subseteq  \bigcup_{m=1}^M\mathcal{F}_{c_m} =\mathcal{A}_{c^M}$.
Denote by $\Gamma(c_1,\dots,c_M)$ the right side of \eqref{e293}, which is trivially
upper-bounded by $1$.
Next, we show that
\begin{align}
\mathbb{P}\left[\left.\bigcap_{m=1,l=1}^{M,L}\{(U_m,V_l)\notin\mathcal{F}\}
\right|U^M=c^M
\right]
&\le 1-\Gamma(c^M)+e^{-\frac{L}{\gamma}}\label{e287}
\end{align}
which is trivial when $\Gamma(c^M)<0$. In the case of
$\Gamma(c^M)\in[0,1]$,
\begin{align}
\mathbb{P}\left[\left.\bigcap_{m=1,l=1}^{M,L}\{(U_m,V_l)\notin\mathcal{F}\}
\right|U^M=c^M
\right]
&=\left[1-P_V(\mathcal{A}_{c^M})\right]^L
\label{e_293}\\
&\le\left[1-\frac{\Gamma(c^M)\frac{L}{\gamma}}{L}\right]^L
\label{e_294}
\\
&\le 1-\Gamma(c^M)+e^{-\frac{L}{\gamma}}\label{e_295}
\end{align}
where \eqref{e_293} is from the definition of $\mathcal{A}_{c^M}$, and \eqref{e_294} is from \eqref{e293}. The last step \eqref{e_295} uses the basic inequality
\begin{align}\label{e_15}
\left(1-\frac{p\alpha}{M}\right)^M\le 1-p+e^{-\alpha}
\end{align}
for $M,\alpha>0$ and $0\le p\le 1$, which has been useful in the proofs of the basic covering lemma (see \cite{cover2012elements}\cite{verdu2012non}\cite{verdubook}).
Integrating both sides of \eqref{e287} over $c^M$ with respect to $P_U\times\dots\times P_U$,
\begin{align}
&\mathbb{P}\left[\bigcap_{m=1,l=1}^{M,L}\{(U_m,V_l)\notin\mathcal{F}\}\right]
\nonumber\\
&\le P_{UV}(\mathcal{F}^c)+
\mathbb{E}[E_{\gamma}(P_{V[U^M]}\|P_V)]
+e^{-\frac{L}{\gamma}},\label{e_2}
\end{align}
where we have used the fact that
$\mathbb{E}[P_{V|U}(\mathcal{F}_{U_m}|U_m)]=P_{UV}(\mathcal{F})$
for each $m$.
Applying the ``softer-covering lemma'' as in Remark~\ref{rem25}, the middle term on the right hand side of \eqref{e_2} is upper-bounded by
\begin{align}
\mathbb{P}\left[\imath_{V;U}(V;U)
\ge\log\frac{M\gamma}{2}\right]+\frac{2}{\gamma}
\end{align}
and the result follows by $\gamma\leftarrow 2L\exp(-\tau)$.
\end{proof}
\begin{rem}
From the above derivation we see that for the proof of the basic covering lemma ($M=1$ case) we will need the ``softest-covering lemma'' (the case of one codeword) rather than the soft-covering lemma (case of $\gamma=1$ and $L>1$ codewords). However, it is still possible to prove the basic covering lemma using the soft-covering lemma using a different argument; see the discussion in \cite{yassaee2015marton}, which is essentially based on the idea in \cite{cuff2012distributed}.
\end{rem}
Lemma~\ref{lem51} below is a strengthened version of the one-shot-mutual covering lemma,
which improves Lemma~\ref{lem7} in terms of the error exponent.
The proof of Lemma~\ref{lem51} essentially combines the proof the achievability part of resolvability and the proof Lemma~\ref{lem7}, and the improvement results from not treating the two steps separately.
The proof is not as conceptually simple as Lemma~\ref{lem7} since the complexities are no longer buried under the softer-covering lemma.
\begin{lem}\label{lem51}
Under the same assumptions as Lemma~\ref{lem7},
\begin{align}
&\quad
\mathbb{P}\left[\bigcap_{m=1,l=1}^{M,L}\{(U_m,V_l)\notin \mathcal{F}\}\right]
\nonumber
\\
&\le
\mathbb{P}\left[(U,V)\notin\mathcal{F}
\textrm{ or }
\exp(\imath_{U;V}(U;V))>ML\exp(-\gamma)-\delta\right]
\nonumber
\\
&\quad
+\frac{\min\{M,L\}-1}{\delta}
+e^{-\exp(\gamma)}.\label{e301}
\end{align}
for all $\delta,\gamma>0$ and event $\mathcal{F}$.
\end{lem}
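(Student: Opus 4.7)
The plan is to combine the conditioning-and-$(1-x)^L$ device from the proof of Lemma~\ref{lem7} with the direct change-of-measure and Markov estimates underlying the softer-covering lemma (Theorem~\ref{thm3}). Invoking softer-covering as a black box in Lemma~\ref{lem7} loses a factor; performing both steps jointly recovers it.

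Without loss of generality assume $M\le L$ (the opposite case is symmetric in $U,V$). Set
\[
\mathcal{T}:=\mathcal{F}\cap\bigl\{(u,v):\exp(\imath_{U;V}(u;v))\le ML\exp(-\gamma)-\delta\bigr\},
\]
so that $P_{UV}(\mathcal{T}^c)$ is exactly the first term on the right-hand side of \eqref{e301}, and note $\bigcap_{m,l}\{(U_m,V_l)\notin\mathcal{F}\}\subseteq\bigcap_{m,l}\{(U_m,V_l)\notin\mathcal{T}\}$. Conditioning on $U^M=c^M$ and writing $\mathcal{A}^{\mathcal{T}}_{c^M}:=\bigcup_m\{v:(c_m,v)\in\mathcal{T}\}$, the i.i.d.\ $V_l$'s give
\[
\mathbb{P}\!\left[\bigcap_{m,l}(U_m,V_l)\notin\mathcal{T}\mid U^M=c^M\right]=(1-P_V(\mathcal{A}^{\mathcal{T}}_{c^M}))^L\le e^{-L P_V(\mathcal{A}^{\mathcal{T}}_{c^M})}.
\]
Splitting on whether $L\,P_V(\mathcal{A}^{\mathcal{T}}_{U^M})\ge\exp(\gamma)$ immediately produces the $e^{-\exp(\gamma)}$ term plus the residual probability $\mathbb{P}[L\,P_V(\mathcal{A}^{\mathcal{T}}_{U^M})<\exp(\gamma)]$, which I would bound by $\frac{M-1}{\delta}$.

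For the residual I would use the Bonferroni bound
\[
\mathbf{1}_{v\in\mathcal{A}^{\mathcal{T}}_{c^M}}\ge\sum_m\mathbf{1}_{\mathcal{T}}(c_m,v)-\sum_{m<m'}\mathbf{1}_{\mathcal{T}}(c_m,v)\mathbf{1}_{\mathcal{T}}(c_{m'},v)
\]
and integrate against $P_V$. Since on $\mathcal{T}$ one has $\exp(-\imath_{U;V})\ge(ML\exp(-\gamma)-\delta)^{-1}$, the change of measure $dP_V=\exp(-\imath_{U;V})\,dP_{V|U}$ makes $L\sum_m P_V(\mathcal{T}_{U_m})$ a sum of $M$ i.i.d.\ nonnegative terms whose expectation is at least $\exp(\gamma)\,P_{UV}(\mathcal{T})/(1-\delta/(ML\exp(-\gamma)))$, so by the law of large numbers this sum concentrates above $\exp(\gamma)$ except on an event absorbable into $P_{UV}(\mathcal{T}^c)$. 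Applying a two-fold change of measure to the pairwise-overlap sum and invoking Markov's inequality, the explicit $-\delta$ slack in $\mathcal{T}$'s definition becomes the Markov denominator, delivering the $\frac{M-1}{\delta}$ contribution.

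The delicate step, and the main obstacle, is calibrating the pairwise-overlap estimate so that the Markov ratio yields $(M-1)/\delta$ rather than a weaker $M(M-1)/\delta$ or a larger denominator; this requires exploiting the explicit additive slack $-\delta$ in the definition of $\mathcal{T}$ (as opposed to the weaker constraint $\exp(\imath_{U;V})\le ML\exp(-\gamma)$) and a careful two-fold change of measure on the overlap integrals. Everything else is routine bookkeeping with linearity of expectation and absorbing lower-order terms into $P_{UV}(\mathcal{T}^c)$.
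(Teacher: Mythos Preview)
Your overall strategy is sound and matches the paper's hint that the proof ``combines the proof of the achievability part of resolvability and the proof of Lemma~\ref{lem7}'': restrict to $\mathcal{T}$, condition on $U^M$, exploit the i.i.d.\ structure of the $V_l$'s, and use change of measure with the built-in $-\delta$ slack. The pairwise-overlap/Markov idea for producing $(M-1)/\delta$ is also in the right spirit.

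However, the step you flag as ``delicate'' is actually a genuine gap. After bounding $(1-x)^L\le e^{-Lx}$ and splitting on $\{Lx\ge\exp(\gamma)\}$, you need
\[
\mathbb{P}\bigl[L\,P_V(\mathcal{A}^{\mathcal{T}}_{U^M})<\exp(\gamma)\bigr]\;\le\;P_{UV}(\mathcal{T}^c)+\frac{M-1}{\delta},
\]
and you propose to get the first piece by arguing that $L\sum_m P_V(\mathcal{T}_{U_m})$ ``concentrates above $\exp(\gamma)$ except on an event absorbable into $P_{UV}(\mathcal{T}^c)$.'' This cannot work as stated: there is no concentration when $M$ is small (even $M=1$, which must recover the basic covering lemma), and one can construct simple examples where the tail probability strictly exceeds $P_{UV}(\mathcal{T}^c)$. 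For instance, take $U\perp V$ with $P_U$ uniform on $\{1,2\}$, $\mathcal{F}_1=\mathcal{V}$, $\mathcal{F}_2$ of tiny $P_V$-mass; then $\mathcal{T}=\mathcal{F}$ for suitable $\gamma,\delta$, and with $M=1$ the event $\{L\,P_V(\mathcal{T}_{U_1})<\exp(\gamma)\}$ has probability $\tfrac12>P_{UV}(\mathcal{T}^c)$ for a range of $\gamma$. So a tail-probability argument on the linear part is the wrong tool here.

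The fix is to avoid the crude $(1-x)^L\le e^{-Lx}$ plus tail-splitting and instead use inequality~\eqref{e_15}, $(1-p\alpha/L)^L\le 1-p+e^{-\alpha}$ with $\alpha=\exp(\gamma)$, exactly as in the proof of Lemma~\ref{lem7}. The point is that this converts the problem into bounding an \emph{expectation} $\mathbb{E}[1-p]$ rather than a tail probability, and expectations do not require concentration. With $p$ chosen from a pointwise lower bound on $P_V(\mathcal{A}^{\mathcal{T}}_{c^M})$ (via change of measure on $\mathcal{T}$, as in the $E_\gamma$ step of Lemma~\ref{lem7}'s proof but carried out directly), one obtains $\mathbb{E}[1-p]\le P_{UV}(\mathcal{T}^c)$ from linearity alone, and the remaining Markov estimate on the overlap furnishes $(M-1)/\delta$. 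Your Bonferroni/Markov treatment of the overlap term is the right idea for that second piece; it is only the handling of the linear part that needs to be replaced.
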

\begin{proof}
See Lemma~1 and Remark~4 in the conference version \cite{jingbo2014}.
\end{proof}
\begin{rem}
An advantage of Lemma~\ref{lem51} over Lemma~\ref{lem7} is that the upper-bound in the former contains a probability of a union of two events, rather than the sum of the probability of the two events. This yields a strict improvement in the second order rate analysis. Moreover, by setting $\delta\downarrow0$ and $M=1$ we \emph{exactly} recover the basic one-shot covering lemma in \cite{verdu2012non}.
\end{rem}
In terms of the second order rates, the one-shot Marton's inner bound for broadcast obtained from our one-shot mutual covering lemma (\cite[Theorem~10]{jingbo2015marton}) is equivalent to the achievability bound claimed in \cite[Theorem~4]{yassaee2013arxiv} based on the stochastic likelihood encoder. However, although it is not demonstrated explicitly in \cite[Theorem~10]{jingbo2015marton}, we can improve the analysis of \cite[Theorem~10]{jingbo2015marton} by using various nuisance parameters rather than a single $\gamma$, to obtain a one-shot Marton's bound which gives strictly better error exponents than \cite[Theorem~4]{yassaee2013arxiv}.
The reason for such improvement is that the third term in \eqref{e301} is doubly exponential and the second term converges to zero with a large exponent. On the other hand, the approach of \cite[Theorem~4]{yassaee2013arxiv} has the advantage of being easily extendable to the case of more than two users (which would correspond to a multivariate mutual covering lemma).

\section{Application to Wiretap Channels}\label{sec_wiretap}
%

Our final application of the $E_{\gamma}$-resolvability (in particular, the softer-covering lemma) is in the wiretap channel, whose setup is as depicted in Figure~\ref{fwiretap}.
The receiver and the eavesdropper observe $y\in\mathcal{Y}$ and $z\in\mathcal{Z}$, respectively.
Given a codebook $c^{ML}$, the input to $P_{YZ|X}$ is $c_{wl}$ where $w\in\{1,\dots,M\}$ is the message to be sent,
and $l$ is equiprobably chosen from $\{1,\dots,L\}$ to randomize the eavesdropper's observation.
We call such a $c^{ML}$ an \emph{$(M,L)$-code}.
Moreover, the eavesdropper's observation has the distribution ${{\pi}}_Z$ when no message is sent.
In this setup, we don't need to assume a prior distribution on the message/non-message.
We wish to design the codebook such that the receiver can decode the message (reliability) whereas the eavesdropper cannot detect whether a message is sent nor guess which message is sent (security).
For general wiretap channels the performance may be enhanced by appending a conditioning channel $Q_{X|U}$ at the input of the original channel \cite{hayashi2006general}.
In that case the same analysis can be carried out for the new wiretap channel $Q_{YZ|U}$. Thus the model in Figure~\ref{fwiretap} entails no loss of generality.
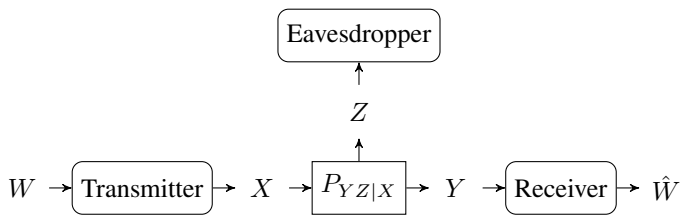
\begin{figure}[h!]
  \centering
\begin{tikzpicture}
[node distance=0.3cm,minimum height=7mm,minimum width=7mm,arw/.style={->,>=stealth'}]
  \node[rectangle,draw] (T) {$P_{YZ|X}$};
  \node[rectangle] (X) [left =of T] {$X$};
  \node[rectangle,draw,rounded corners] (A) [left =of X] {Transmitter};
  \node[rectangle] (Y) [right =of T] {$Y$};
  \node[rectangle,draw,rounded corners] (B) [right =of Y] {Receiver};
  \node[rectangle] (Z) [above =of T] {$Z$};
  \node[rectangle,draw,rounded corners] (E) [above =of Z] {Eavesdropper};
  \node[rectangle] (W) [left =of A] {$W$};
  \node[rectangle] (Wh) [right =of B] {$\hat{W}$};

  \draw [arw] (W) to node[midway,above]{} (A);
  \draw [arw] (A) to node[midway,above]{} (X);
  \draw [arw] (X) to node[midway,above]{} (T);
  \draw [arw] (T) to node[midway,above]{} (Y);
  \draw [arw] (Y) to node[midway,above]{} (B);
  \draw [arw] (B) to node[midway,above]{} (Wh);
  \draw [arw] (T) to node[midway,above]{} (Z);
  \draw [arw] (Z) to node[midway,above]{} (E);
\end{tikzpicture}
\caption{The wiretap channel}
\label{fwiretap}
\end{figure}

In Wyner's setup (see for example \cite{el2011network}),  secrecy is measured in terms of the conditional entropy of the message given the eavesdropper observation. In contrast, we measure secrecy in terms of the size of the list that the eavesdropper has to declare for the message to be included with high probability. Practically, the message $W$ is the compressed version of the plaintext. Assuming that the attacker knows which compression algorithm is used, the plaintext can be recovered by running each of the items in the eavesdropper output list through the decompressor and selecting
the one that is intelligible.

We need the following definitions to quantify the eavesdropper ability to detect/decode messages.
\begin{defn}\label{defn_achieve}
For a fixed codebook and channel $P_{Z|X}$,
we say the eavesdropper can perform $(A,T,{\epsilon})$-decoding if upon observing $Z$, it outputs a list which is either empty or of size $T$,
such that
\begin{itemize}
\item $\mathbb{P}[\textrm{list}\neq \emptyset|\textrm{no message}]\le A^{-1}$.
\item There exists $\epsilon_m\in[0,1]$, $m=1,\dots, M$ satisfying
$
{\epsilon}=\frac{1}{M}\sum_{m=1}^M\epsilon_m
$
such that
\begin{align}
\mathbb{P}[m\notin\textrm{list}|W=m]\le \epsilon_m.
\end{align}
\end{itemize}
\end{defn}
Although the decoder in Definition~\ref{defn_achieve} is reminiscent of erasure and list decoding \cite{forney1968exponential},
for the former it is possible that actually no message is sent, and we treat the undetected and detected errors together.

The logarithm of $T$ can be intuitively understood as the equivocation $H(W|Z)$ \cite{wyner1975wire}.
However, $\log T$ can be much smaller than $H(W|Z)$:
a distribution can have $99\%$ of its mass supported on a very small set,
and yet have an arbitrarily large entropy.

The quantity $A>0$ characterizes how well the eavesdropper can detect that no message is sent, which is related the notion of stealth communication \cite{hou2014effective} \cite{bloch2015}.
The ``non-stealth'' is measured by $D(P_Z\|\pi_Z)$ in \cite{hou2014effective}, and is measured by $|P_Z-\pi_Z|$ in \cite{bloch2015}.
Although both the relative entropy and total variation are related to error probability hypothesis testing, their results cannot be directly compared with ours, since they are interested in the regime where non-stealth vanishes while the transmission rate is below the secrecy capacity.
In contrast, we are mainly interested in the regime where $A$ grows exponentially (so that the ``non-stealth'' in their definition grows) in the blocklength, but the transmission rate is \emph{above} the secrecy capacity.

The asymptotic version of the eavesdropper achievability is as follows.
\begin{defn}\label{def_ach}
Fix a sequence of codebooks and a eavesdropper channel $(P_{Z^n|X^n})_{n=1}^{\infty}$. The rate pair $(\alpha,\tau)$ is ${\epsilon}$-achievable by the eavesdropper if there exist sequences $(A_n)$ and $(T_n)$ with
\begin{align}
\liminf_{n\to\infty}\frac{1}{n}\log A_n&\ge\alpha
\\
\limsup_{n\to\infty}\frac{1}{n}\log T_n&\le\tau
\end{align}
such that for sufficiently large $n$, the eavesdropper can achieve $(A_n,T_n,{\epsilon})$-decoding.
\end{defn}
By the diagonalization argument \cite[P56]{han2003information}, the set of $\epsilon$-achievable $(\alpha,\tau)$ is closed.

An \emph{$(M,L,Q_X)$-random code} is defined as the ensemble of the codebook $c^{ML}$ where each codeword $c_{wl}$ is i.i.d.~chosen according to $Q_X$, $w\in\{1,\dots,M\}$, $m\in\{1,\dots,M\}$.
We shall focus on random codes, for which reliability is guaranteed by channel coding theorems, so we only need to consider the security condition.

First, we extend the notions of achievability to the case of a random ensemble of codes by taking the average:
we say for a random ensemble of codes the eavesdropper can perform $(A,T,\epsilon)$-decoding, if there exists $\epsilon(c^{ML})$ such that for each $c^{ML}$ the eavesdropper can perform $(A,T,\epsilon(c^{ML}))$-decoding (in the sense of Definition~\ref{defn_achieve}), and the average of $\epsilon(c^{ML})$ with respect to the codebook distribution is upper-bounded by $\epsilon$.
Similarly, Definition~\ref{def_ach} can be extended to random codes.
Then, the following is our main result which characterizes the set of eavesdropper achievable pairs for stationary memoryless channels.
\begin{thm}\label{region}
Fix any $Q_{\sf X}$, $R$, $R_{\sf L}$ and $0<{\epsilon}<1$. Consider $(\exp(nR),\exp(nR_{\sf L}),Q_{\sf X}^{\otimes n})$-random codes and stationary memoryless channel with per-letter conditional distribution $Q_{\sf Z|X}$.
Then the pair $(\alpha,\tau)$ is ${\epsilon}$-achievable by the eavesdropper if and only if
\begin{align}
\left\{
\begin{array}{l}
  \alpha\le D(Q_{\sf Z}\|{{\pi}}_{\sf Z})+[I(Q_{\sf X},P_{\sf Z|X})-R-R_{\sf L}]^+; \\
  \tau \ge R-[I(Q_{\sf X},P_{\sf Z|X})-R_{\sf L}]^+.
  \end{array}
\right.
\end{align}
where $Q_{\sf X}\to Q_{\sf Z|X}\to Q_{\sf Z}$.
\end{thm}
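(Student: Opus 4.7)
The plan is to prove the two directions of the ``if and only if'' separately; both converse bounds (on $\tau$ and on $\alpha$) follow a common two-step recipe: first apply the change-of-measure inequality of Proposition~\ref{prop3}-\ref{prop3_1}) (i.e., $Q(\mathcal{A})\ge\gamma^{-1}(P(\mathcal{A})-E_\gamma(P\|Q))$) to a codebook-dependent decoding region, then invoke Corollary~\ref{cor_asymp} (softer-covering in the large-deviation regime) to control the expected $E_\gamma$ error over the random codebook. Achievability uses a standard joint-typicality list decoder together with a union/Chernoff analysis.

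For the list-size converse, fix any codebook-dependent decoder and use $|\mathcal{L}(z)|\le T_n$ to write $\sum_{w=1}^{M_n}Q_{\sf Z}^{\otimes n}(\mathcal{L}^{-1}(w))\le T_n$. Applying Proposition~\ref{prop3}-\ref{prop3_1}) to each term with $P\leftarrow Q_{Z^n\mid W=w}$ (the channel output given message $w$, averaged over the randomization index), $Q\leftarrow Q_{\sf Z}^{\otimes n}$, and $\gamma\leftarrow\exp(nE'')$, then summing and invoking Corollary~\ref{cor_asymp} on the per-message sub-codebook of $L_n$ codewords drawn iid from $Q_{\sf X}^{\otimes n}$ targeting $Q_{\sf Z}^{\otimes n}$ (so the $D$-term in the corollary vanishes and one only needs $E''>[I(Q_{\sf X},Q_{\sf Z|X})-R_{\sf L}]^+$), one obtains $T_n\ge M_n\exp(-nE'')(1-\epsilon-o(1))$ after expectation over the codebook. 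Sending $E''\downarrow[I-R_{\sf L}]^+$ yields $\tau\ge R-[I-R_{\sf L}]^+$.

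The detection converse admits a pleasingly parallel proof once one recognizes that Corollary~\ref{cor_asymp} accepts an arbitrary target, in particular $\pi_{\sf Z}^{\otimes n}$. Let $\mathcal{D}:=\{z:\mathcal{L}(z)\neq\emptyset\}$, so $\pi_{\sf Z}^{\otimes n}(\mathcal{D})\le A_n^{-1}$, and let $Q_{Z^n}:=\frac{1}{M_nL_n}\sum_{w,l}Q_{\sf Z|X}^{\otimes n}(\cdot\mid c_{wl})$ be the overall message-and-randomization-averaged output, which satisfies $Q_{Z^n}(\mathcal{D})\ge1-\epsilon(c^{M_nL_n})$ by definition. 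Applying Proposition~\ref{prop3}-\ref{prop3_1}) with $P\leftarrow Q_{Z^n}$, $Q\leftarrow\pi_{\sf Z}^{\otimes n}$, and $\gamma\leftarrow\exp(nE)$ gives
\[
\pi_{\sf Z}^{\otimes n}(\mathcal{D})\ge \exp(-nE)\bigl(Q_{Z^n}(\mathcal{D})-E_{\exp(nE)}(Q_{Z^n}\|\pi_{\sf Z}^{\otimes n})\bigr).
\]
Now Corollary~\ref{cor_asymp}, applied to the \emph{full} $M_nL_n$-codeword ensemble at combined rate $R+R_{\sf L}$ with target $\pi_{\sf Z}^{\otimes n}$, shows $\mathbb{E}[E_{\exp(nE)}(Q_{Z^n}\|\pi_{\sf Z}^{\otimes n})]\to 0$ for any $E>D(Q_{\sf Z}\|\pi_{\sf Z})+[I(Q_{\sf X},Q_{\sf Z|X})-R-R_{\sf L}]^+$. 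Taking expectation and using $\pi_{\sf Z}^{\otimes n}(\mathcal{D})\le A_n^{-1}$ deterministically yields $A_n^{-1}\ge \exp(-nE)(1-\epsilon-o(1))$, i.e., $\alpha\le E$, whence $\alpha\le D(Q_{\sf Z}\|\pi_{\sf Z})+[I-R-R_{\sf L}]^+$.

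For achievability I would have the eavesdropper use a joint-typicality list decoder: include $w$ in the preliminary list $\mathcal{L}_0(z)$ iff $(c_{wl},z)$ lies in the $\delta$-jointly-typical set of $Q_{\sf X}Q_{\sf Z|X}$ for some $l$, and output $\mathcal{L}_0(z)$ padded to size $T_n$ when $|\mathcal{L}_0(z)|\le T_n$, otherwise declare empty. Joint AEP gives that the true message lies in $\mathcal{L}_0(Z)$ with high probability; a Markov bound on the expected number of spurious messages (at most $\sim M_nL_n\exp(-nI)$) ensures $|\mathcal{L}_0(Z)|\le T_n$ provided $\tau>R-[I-R_{\sf L}]^+$; and under $z\sim\pi_{\sf Z}^{\otimes n}$ a change of measure to the product $Q_{\sf X}^{\otimes n}\times Q_{\sf Z}^{\otimes n}$ combined with the $\exp(-nI)$ estimate on joint typicality yields a false-alarm probability of at most $\min(1,M_nL_n\exp(-nI))\exp(-nD(Q_{\sf Z}\|\pi_{\sf Z}))=\exp(-n(D+[I-R-R_{\sf L}]^+))$. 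The key subtlety of the proof is in the detection converse: one must target $\pi_{\sf Z}^{\otimes n}$ \emph{directly} in Corollary~\ref{cor_asymp} rather than passing through $Q_{\sf Z}^{\otimes n}$ and chaining a separate Stein-type inequality, since the $D(Q_{\sf Z}\|\pi_{\sf Z})$ term already built into the corollary's condition is exactly what is needed and neatly absorbs what would otherwise be a delicate composition of a resolvability error with a large-deviation LLN tail.
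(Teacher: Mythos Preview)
Your converse is essentially identical to the paper's: the paper packages both change-of-measure steps into a single one-shot lemma (Theorem~\ref{thmlem}), with \eqref{e_45} corresponding to your detection bound and \eqref{e_46} to your list-size bound (after discarding the factor $A\ge 1$), and then invokes Corollary~\ref{cor_asymp} exactly as you do. Your remark that one should target $\pi_{\sf Z}^{\otimes n}$ directly in the detection converse is precisely what the paper does.

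Your achievability differs. The paper constructs three one-shot decoders (Theorem~\ref{eaveAchieve}) based on information-density thresholds, each tailored to one of the regimes $R+R_{\sf L}<I$, $R_{\sf L}<I\le R+R_{\sf L}$, $R_{\sf L}\ge I$; Decoders~2 and~3 carry an explicit pre-test on $\imath_{Q_{\sf Z}\|\pi_{\sf Z}}(z)$ to control the false-alarm probability. Your single typicality-based decoder achieves the same effect implicitly, because $(c_{wl},z)$ jointly typical forces $z$ to be $Q_{\sf Z}$-typical, which gives the $\exp(-nD(Q_{\sf Z}\|\pi_{\sf Z}))$ factor in the false-alarm bound for free. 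Your route is more unified; the paper's buys fully one-shot statements and a cleaner audit trail through the three regimes.

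There is one small gap in your achievability sketch. Your Markov bound on the spurious list uses $\mathbb{E}|\mathcal{L}_0|\lesssim M_nL_n\exp(-nI)$, which yields $\tau>R+R_{\sf L}-I$; but in the regime $R_{\sf L}\ge I$ this is strictly larger than the target $\tau\ge R$. The fix is immediate---the number of messages is $M_n$, so $|\mathcal{L}_0|\le M_n$ deterministically and $T_n=M_n$ suffices---but as written your Markov argument alone does not reach the boundary $\tau=R-[I-R_{\sf L}]^+$ in that regime. This is exactly why the paper splits off Decoder~3.
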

From the noisy channel coding theorem, the supremum randomization rate $R_{\sf L}$ such that the sender can reliably transmit messages at the rate $R$ is $I(Q_{\sf X},P_{\sf Y|X})-R$. The larger $R_{\sf L}$ the less reliably the eavesdropper can decode, so the optimal encoder chooses $R_{\sf L}$ as close to this supremum as possible. Thus Theorem~\ref{region} implies the the following result
\begin{thm}
Given a stationary memoryless wiretap channel with per-letter conditional distribution $P_{\sf YZ|X}$, there exists a sequence of codebooks
such that
messages at the rate $R$ can be reliably transmitted to the intended receiver
and that $(\alpha,\tau)$ is not $\epsilon$-achievable, for any $\epsilon\in (0,1)$, by
the eavesdropper if there exists some $Q_{\sf X}$ such that
\begin{align}
R<I(Q_{\sf X},P_{\sf Y|X})
\end{align}
and either
\begin{align}\label{e_reg1}
\alpha>D(Q_{\sf Z}\|{{\pi}}_{\sf Z})+[I(Q_{\sf X},P_{\sf Z|X})-I(Q_{\sf X},P_{\sf Y|X})]^+
\end{align}
or
\begin{align}\label{e_reg2}
\tau<R-[I(Q_{\sf X},P_{\sf Z|X})-I(Q_{\sf X},P_{\sf Y|X})+R]^+.
\end{align}
\end{thm}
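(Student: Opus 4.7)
The plan is to combine Theorem~\ref{region}, which pins down the eavesdropper's achievable region for $(M,L,Q_{\sf X}^{\otimes n})$-random codes, with the ordinary noisy channel coding theorem for the legitimate link, and then expurgate a single deterministic codebook sequence satisfying both the reliability and the security requirements. Fix a $Q_{\sf X}$ as in the hypothesis, pick a small $\delta>0$ to be sent to zero at the end, and set $R_{\sf L}:=I(Q_{\sf X},P_{\sf Y|X})-R-\delta$, which is strictly positive by the assumption $R<I(Q_{\sf X},P_{\sf Y|X})$. Consider the $(\exp(nR),\exp(nR_{\sf L}),Q_{\sf X}^{\otimes n})$-random code ensemble on which Theorem~\ref{region} is stated, using the same randomized encoding (index $(W,L)$ into $c^{M_nL_n}$).

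Reliability is immediate: the $\exp(n(R+R_{\sf L}))=\exp(n(I(Q_{\sf X},P_{\sf Y|X})-\delta))$ codewords are i.i.d.~under $Q_{\sf X}^{\otimes n}$, and the total rate $R+R_{\sf L}$ lies strictly below $I(Q_{\sf X},P_{\sf Y|X})$, so Shannon's channel coding theorem applied to the joint index $(W,L)$ yields an ensemble-averaged error $\mathbb{E}[P_e(C^{M_nL_n})]\to 0$, which a fortiori drives the decoding error for $W$ alone to zero. For security, substitute this choice of $R_{\sf L}$ into the two inequalities characterizing the eavesdropper achievable region in Theorem~\ref{region}; they become
\begin{align*}
\alpha &\le D(Q_{\sf Z}\|{{\pi}}_{\sf Z})+[I(Q_{\sf X},P_{\sf Z|X})-I(Q_{\sf X},P_{\sf Y|X})+\delta]^+,\\
\tau &\ge R-[I(Q_{\sf X},P_{\sf Z|X})-I(Q_{\sf X},P_{\sf Y|X})+R+\delta]^+.
\end{align*}
By continuity of $[\cdot]^+$ in $\delta$, if the prescribed $(\alpha,\tau)$ strictly violates either \eqref{e_reg1} or \eqref{e_reg2}, then for all sufficiently small $\delta>0$ it also violates the corresponding inequality above, so the converse direction of Theorem~\ref{region} implies that $(\alpha,\tau)$ is not $\epsilon$-achievable for the random-code ensemble, for any $0<\epsilon<1$.

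Finally I expurgate. Because the characterization in Theorem~\ref{region} is $\epsilon$-independent, non-$\epsilon$-achievability for every $\epsilon<1$ forces the ensemble-averaged optimum eavesdropper error $\mathbb{E}[\epsilon_e^*(C^{M_nL_n})]$ to tend to $1$. Combining with $\mathbb{E}[P_e(C^{M_nL_n})]\to 0$, Markov's inequality and a union bound over the two low-probability events produce a deterministic codebook sequence along which $P_e\to 0$ and $\epsilon_e^*\to 1$ simultaneously; the latter says that no eavesdropper decoder sequence can push the error below any prescribed $\epsilon<1$, which is precisely the non-$\epsilon$-achievability of $(\alpha,\tau)$ for this codebook sequence. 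The main technical subtlety is not the algebraic substitution (which is mechanical) but extracting a $1-o(1)$ lower bound on the average optimal eavesdropper error from the $\epsilon$-free converse of Theorem~\ref{region}; this strong-converse aspect is exactly what powers the Markov/expurgation step and lets us combine reliability and security in a single deterministic codebook sequence.
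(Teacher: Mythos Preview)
Your proposal is correct and follows precisely the approach the paper sketches in the paragraph preceding the theorem: choose $R_{\sf L}$ just below $I(Q_{\sf X},P_{\sf Y|X})-R$, invoke the channel coding theorem for reliability, and substitute into Theorem~\ref{region} for security. The paper leaves the passage from the random ensemble to a deterministic codebook sequence entirely implicit, whereas you spell out the expurgation via Markov's inequality; your observation that the $\epsilon$-independent converse of Theorem~\ref{region} actually yields $\mathbb{E}[\epsilon_e^*]\to 1$ (which one can read off from the per-codebook inequality in the proof of the ``only if'' part) is exactly what is needed to make that step rigorous.
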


\begin{rem}
In general the sender-receiver want to minimize $\alpha$ and maximize $\tau$ obeying the tradeoff \eqref{e_reg1}, \eqref{e_reg2} by selecting $Q_{\sf X}$.
In the special case where $\alpha$ has no importance and $R$ is larger than the secrecy capacity $C:=\sup_{Q_{\sf X}}\{I(Q_{\sf X},P_{\sf Y|X})-I(Q_{\sf X},P_{\sf Z|X})\}$, we see from \eqref{e_reg2} that the supremum $\tau$ is $C$. The formula for the supremum of $\tau$ is the same as the equivocation measure defined as $\frac{1}{n}H(W|{\sf Z}^n)$ \cite{wyner1975wire}, but technically our result does not follow directly from the lower bound on equivocation, since it may be possible that the a posterior distribution of $W$ is concentrated on a small list but has a tail spread over an exponentially large set, resulting a large equivocation.
\end{rem}
The next two subsections prove the ``only if'' and ``if'' parts of Theorem~\ref{region}, respectively.

\subsection{Converse for the Eavesdropper}
The ``only if'' part (the eavesdropper converse) of Theorem~\ref{region} follows by applying the following non-asymptotic bounds to different regions and invoking Corollary~\ref{cor_asymp}.
\begin{thm}\label{thmlem}
In the wiretap channel, fix an arbitrary distribution $\mu_Z$.
Suppose the eavesdropper can either detect that no message is sent upon observing $z\in\mathcal{D}_0$ with
\begin{align}
\mu_Z(\mathcal{D}_0)\ge 1-A^{-1}
\end{align}
for some $A\in [1,\infty)$,
or outputs a list of $T(z)$ messages upon observing $z\notin \mathcal{D}_0$ that contains the actual message $m\in\{1,\dots,M\}$ with probability at least $1-\epsilon_m$, for some $\epsilon_m\in [0,1]$. Define the average quantities
\begin{align}
T&:=\frac{1}{\mu_Z(\mathcal{D}_0^c)}\int_{\mathcal{D}_0^c} T(z){\rm d}\mu_Z(z),
\\
{\epsilon}&:=\frac{1}{M}\sum_{m=1}^M \epsilon_m.
\end{align}
Then for any $\gamma\in [1,+\infty)$,
\begin{align}
\frac{1}{A}\ge\frac{1}{\gamma}\left(1-{\epsilon}-E_{\gamma}(P_Z\|{{\pi}}_Z)\right),
\label{e_45}
\end{align}
where we recall that ${{\pi}}_Z$ is the non-message distribution,
$P_Z:=\frac{1}{M}\sum_{m=1}^M P_{Z|W=m}$,
and $P_{Z|W=m}$ is the distribution of the eavesdropper observation for the message $m$
(assuming an arbitrary codebook is used).
Moreover,
\begin{align}
\frac{T}{MA}\ge\frac{1}{\gamma}\left(1-{\epsilon}-\frac{1}{M}\sum_{m=1}^M E_{\gamma}(P_{Z|W=m}\|\mu_Z)\right).
\label{e_46}
\end{align}
\end{thm}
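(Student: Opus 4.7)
The plan is to derive both \eqref{e_45} and \eqref{e_46} from the change-of-measure inequality in Proposition~\ref{prop3}-\ref{prop3_1}) applied to carefully chosen events. Define $\mathcal{E}_m := \{z : m \in \mathrm{list}(z)\}$ for each message $m$. Since the decoder outputs an empty list on $\mathcal{D}_0$, we have $\mathcal{E}_m \subseteq \mathcal{D}_0^c$ and $|\mathrm{list}(z)| = T(z)\cdot 1_{\mathcal{D}_0^c}(z)$ pointwise. By hypothesis, $P_{Z|W=m}(\mathcal{E}_m) \ge 1 - \epsilon_m$ and $\mu_Z(\mathcal{D}_0^c) \le A^{-1}$.

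For \eqref{e_45}, I would apply Proposition~\ref{prop3}-\ref{prop3_1}) with $P \leftarrow P_Z$, $Q \leftarrow \mu_Z$ (which in the statement plays the role of ${{\pi}}_Z$), and $\mathcal{A} \leftarrow \mathcal{D}_0^c$. Averaging the bounds $P_{Z|W=m}(\mathcal{D}_0^c) \ge P_{Z|W=m}(\mathcal{E}_m) \ge 1 - \epsilon_m$ over $m$ yields $P_Z(\mathcal{D}_0^c) \ge 1 - \epsilon$, so change of measure gives
\begin{align*}
A^{-1} \ge \mu_Z(\mathcal{D}_0^c) \ge \frac{1}{\gamma}\bigl(P_Z(\mathcal{D}_0^c) - E_{\gamma}(P_Z\|\mu_Z)\bigr) \ge \frac{1}{\gamma}\bigl(1 - \epsilon - E_{\gamma}(P_Z\|\mu_Z)\bigr),
\end{align*}
which is \eqref{e_45}.

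For \eqref{e_46}, I would apply Proposition~\ref{prop3}-\ref{prop3_1}) message by message to the events $\mathcal{E}_m$ with $P \leftarrow P_{Z|W=m}$, obtaining $\mu_Z(\mathcal{E}_m) \ge \frac{1}{\gamma}\bigl(1 - \epsilon_m - E_{\gamma}(P_{Z|W=m}\|\mu_Z)\bigr)$. Summing in $m$ and exchanging the order of summation and integration gives
\begin{align*}
\sum_{m=1}^M \mu_Z(\mathcal{E}_m) = \int |\mathrm{list}(z)|\, \mathrm{d}\mu_Z(z) = \int_{\mathcal{D}_0^c} T(z)\, \mathrm{d}\mu_Z(z) = T\cdot \mu_Z(\mathcal{D}_0^c) \le T/A.
\end{align*}
Combining the two displays and dividing by $M$ yields \eqref{e_46}.

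The argument is short and essentially bookkeeping once the right events are identified, so I do not anticipate a serious obstacle. The only subtlety worth highlighting is the identity $\sum_m \mu_Z(\mathcal{E}_m) = T\,\mu_Z(\mathcal{D}_0^c)$, which uses the convention that the list has exactly $T(z)$ elements on $\mathcal{D}_0^c$ (and is empty otherwise), so summing the membership indicators reconstructs the list size pointwise; this is precisely why the averaged list size $T$ is defined as the conditional mean of $T(z)$ under $\mu_Z$ restricted to $\mathcal{D}_0^c$.
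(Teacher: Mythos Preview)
Your proposal is correct and follows essentially the same route as the paper: the paper defines $\mathcal{D}_m$ (your $\mathcal{E}_m$), applies the $E_\gamma$ change-of-measure inequality to $\mathcal{D}_0^c$ for \eqref{e_45} and to each $\mathcal{D}_m$ for \eqref{e_46}, and uses the same pointwise identity $\sum_m 1\{z\in\mathcal{D}_m\}=T(z)1_{\mathcal{D}_0^c}(z)$ together with $\mu_Z(\mathcal{D}_0^c)\le A^{-1}$ to close the argument. One small clarification: in \eqref{e_45} the reference measure is really the no-message distribution $\pi_Z$ (so the input is $\pi_Z(\mathcal{D}_0^c)\le A^{-1}$, coming from Definition~\ref{defn_achieve}), not the auxiliary $\mu_Z$; your parenthetical already signals you are aware of this, but it is worth stating cleanly rather than conflating the two.
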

We will choose $\mu_Z=P_Z$
when we use Theorem~\ref{thmlem} to prove Theorem~\ref{region}, although \eqref{e_46} holds for any $\mu_Z$.

From the eavesdropper viewpoint, a larger $A$ and a smaller $T$ is more desirable since it will then be able to find out that no message is sent with smaller error probability or narrow down to a smaller list when a message is sent. This observation agrees with \eqref{e_45} and \eqref{e_46}: a smaller $\gamma$ implies a higher degree of approximation, and hence higher indistinguishability of output distributions which is to the eavesdropper disadvantage.
\begin{proof}
To see \eqref{e_45},
\begin{align}
\frac{1}{A}
&\ge{{\pi}}_Z(\mathcal{D}_0^c)
\\
&\ge\frac{1}{\gamma}(P_{Z}(\mathcal{D}_0^c)-E_{\gamma}(P_{Z}\|{{\pi}}_Z))\label{e46}
\\
&=\frac{1}{\gamma}\left(\frac{1}{M}\sum_{m=1}^M P_{Z|W=m}(\mathcal{D}_0^c)-E_{\gamma}(P_{Z}\|{{\pi}}_Z)\right)
\\
&\ge\frac{1}{\gamma}\left(1-{\epsilon}-E_{\gamma}(P_Z\|{{\pi}}_Z)\right).
\end{align}
To see \eqref{e_46}, let $\mathcal{D}_m$ be the set of outputs $z\in\mathcal{Z}$ for which the eavesdropper list contains $m\in\{1,\dots,M\}$.
Then
\begin{align}
\frac{T}{MA}
&\ge \frac{T}{M}\mu_Z(\mathcal{D}_0^c)
\\
&=\frac{1}{M}\int_{\mathcal{D}_0^c} T(z){\rm d}\mu_Z(z)
\\
&=\frac{1}{M}\int\sum_{m=1}^M 1\{z\in\mathcal{D}_m\}{\rm d}\mu_Z(z)
\\
&=\frac{1}{M}\sum_{m=1}^M\int 1\{z\in\mathcal{D}_m\}{\rm d}\mu_Z(z)
\\
&=\frac{1}{M}\sum_{m=1}^M\mu_Z(\mathcal{D}_m)
\\
&\ge\frac{1}{M\gamma}\sum_{m=1}^M \left(P_{Z|W=m}(\mathcal{D}_m)-E_{\gamma}(P_{Z|W=m}\|\mu_Z)\right)
\label{e54}
\\
&\ge\frac{1}{\gamma}\left(1-{\epsilon}-\frac{1}{M}\sum_{m=1}^M E_{\gamma}(P_{Z|W=m}\|\mu_Z)\right).
\end{align}
\end{proof}
Next, we particularize Theorem~\ref{thmlem} to the asymptotic setting.
\begin{proof}[Proof of ``only if'' in Theorem~\ref{region}]
~\newline~
\begin{itemize}
\item Fix an arbitrary
\begin{align}
\alpha> D(Q_{\sf Z}\|{{\pi}}_{\sf Z})+[I(Q_{\sf X},P_{\sf Z|X})-R-R_{\sf L}]^+.
\end{align}
We will show that $(\alpha,\tau)$ is not ${\epsilon}$-achievable by the eavesdropper for any $\tau>0$ and $\epsilon\in (0,1)$. Pick $\sigma>0$ such that
\begin{align}\label{e6_25_0}
\alpha> D(Q_{\sf Z}\|{{\pi}}_{\sf Z})+[I(Q_{\sf X},P_{\sf Z|X})-R-R_{\sf L}]^++2\sigma
\end{align}
and define
\begin{align}\label{eantn_0}
\left\{
\begin{array}{c}
A_n=\exp(n(\alpha-\sigma))
  \\
T_n=\exp(n(\tau+\sigma)).
\end{array}
\right.
\end{align}
Assuming the eavesdropper can perform $(A_n,T_n,{\epsilon}(c^{ML}))$-decoding for a particular realization of the codebook $c^{ML}$, then applying Theorem~\ref{thmlem} with
\begin{align}
\gamma_n&=\exp(n(D(Q_{\sf Z}\|{{\pi}}_{\sf Z})+[I(Q_{\sf X},P_{\sf Z|X})-R-R_{\sf L}]^++\sigma)),
\label{e321}
\end{align}
we obtain
\begin{align}
&\exp(n(D(Q_{\sf Z}\|{{\pi}}_{\sf Z})+[I(Q_{\sf X},P_{\sf Z|X})-R-R_{\sf L}]^+-\alpha+2\sigma))
\nonumber\\
&=\frac{\gamma_n}{A_n}
\\
&\ge 1-{\epsilon}(c^{ML})-E_{\gamma_n}(P_{Z^n[c^{ML}]}\|\pi_{\sf Z}^{\otimes n}).
\end{align}
From \eqref{e6_25_0}, the above implies
\begin{align}
E_{\gamma_n}(P_{Z^n[c^{ML}]}\|\pi_{\sf Z}^{\otimes n})\ge\frac{1-{\epsilon}(c^{ML})}{2}.
\end{align}
For sufficiently large $n$.
By Corollary~\ref{cor_asymp} and \eqref{e321},
the average of the left side converges to zero as $n\to\infty$,
thus the average of the right side cannot be lower bounded by $\frac{1-\epsilon}{2}$.

  \item Fix an arbitrary
\begin{align}
\tau < R-[I(Q_{\sf X},P_{\sf Z|X})-R_{\sf L}]^+.
\end{align}
We will show that $(\alpha,\tau)$ is not ${\epsilon}$-achievable by the eavesdropper for any $\alpha>0$ and $\epsilon\in (0,1)$. Pick $\sigma>0$ such that
\begin{align}\label{e6_25}
\tau +2\sigma< R-[I(Q_{\sf X},P_{\sf Z|X})-R_{\sf L}]^+
\end{align}
and again define $A_n$ and $T_n$ as in \eqref{eantn_0}.
Assuming the eavesdropper can perform $(A_n,T_n,{\epsilon}(c^{ML}))$-decoding for a particular realization of the codebook $c^{ML}$, then applying Theorem~\ref{thmlem} with
\begin{align}
\mu_Z&=Q_{\sf Z}^{\otimes n},
\\
\gamma_n&=\exp(n([I(Q_{\sf X},P_{\sf Z|X})-R_{\sf L}]^++\sigma)),
\end{align}
and noting that $A_n\ge 1$,
we obtain
\begin{align}
&\quad\exp(n(\tau-R+[I(Q_{\sf X},P_{\sf Z|X})-R_{\sf L}]^++2\sigma))
\nonumber
\\
&=\frac{T_n\gamma_n}{M_n}
\\
&\ge 1-{\epsilon}(c^{ML})-\frac{1}{M}\sum_{m=1}^M E_{\gamma_n}(P_{Z^n|W=m}\|Q_{\sf Z}^{\otimes n})
\\
&= 1-{\epsilon}(c^{ML})-\frac{1}{M}\sum_{m=1}^M E_{\gamma_n}(P_{Z^n[{c_m}^L]}\|Q_{\sf Z}^{\otimes n})
\end{align}
where ${c_m}^L:=(c_{ml})_{l=1}^L$.
From \eqref{e6_25}, the above implies
\begin{align}
\frac{1}{M}\sum_{m=1}^M E_{\gamma_n}(P_{Z^n[{c_m}^L]}\|Q_{\sf Z}^{\otimes n})\ge\frac{1-{\epsilon}(c^{ML})}{2},
\end{align}
for sufficiently large $n$.
Invoking Corollary~\ref{cor_asymp},
we see the average of the right side with respect to the codebook converges zero as $n\to\infty$, and in particular cannot be lower-bounded by $\frac{1-\epsilon}{2}$.
\end{itemize}
\end{proof}

\subsection{Ensemble Tightness}
The (eavesdropper) achievability part of Theorem~\ref{region} follows by analyzing the eavesdropper list decoding ability for different cases of the rates $(R,R_{\sf L})$. First, consider the following one-shot achievability bounds for channel coding with possibly no message sent:
\begin{thm}\label{eaveAchieve}
Consider a random transformation $P_{Z|X}$ and a $(M,L,Q_X)$-random code.
Let $Q_X\to P_{Y|X}\to Q_Y$, and let $\pi_Z$ be the distribution of the eavesdropper observation when no message is sent.
Define
  \begin{align}
  \bar{\imath}_{Z;X}(z;x)&:=\log\frac{{\rm d}P_{Z|X=x}}{{\rm d}{{\pi}}_{Z}}(z);
  \\
  \imath_{Z;X}(z;x)&:=\log\frac{{\rm d}P_{Z|X=x}}{{\rm d}Q_{Z}}(z).
  \end{align}
Let $\delta,\beta,A,T>0$.
Then, there exist three list decoders such that for Decoder~1,
  \begin{align}
  \mathbb{E}_{\mathcal{C}}\mathbb{P}[\textrm{error}|\textrm{no message}]&\le\frac{1}{A}\exp(-\delta),
  \label{e335}
  \\
  \mathbb{E}_{\mathcal{C}}\mathbb{P}[\textrm{error}|\textrm{message is }m]
  &\le\mathbb{P}[\bar{\imath}_{Z;X}(Z;X)\le\log(LMA)+\delta]
  \nonumber\\
  &\quad+\mathbb{P}[\imath_{Z;X}(Z;X)\le\log\frac{LM}{T}+\delta]
  \nonumber
  \\
  &\quad+\frac{1}{1+\beta}+e^{-(\beta+1)}+\beta\exp(-\delta).
  \label{e336}
  \end{align}
  Here an error in the case of no message means that a non-empty list is produced. An error in the case of message $m$ means either the list does not contain $m$, or the list size exceeds $T$.\footnote{Such a decoder is a variable list-size decoder.
  However, we can add a post processor which declares no message if the list is empty, or outputs a list of fixed size $T$ otherwise (by arbitrarily deleting or adding messages to the list),
  resulting a new decoder as considered in
  Definition~\ref{defn_achieve},
  and the two types of error probability for the new decoder (i.e.~the best values of $\frac{1}{A}$ and $\epsilon_m$ in Definition~\ref{defn_achieve}) do not exceed the two types of error probability for the original variable list-size decoder.}
  For Decoder~2,
  \begin{align}
  \mathbb{E}_{\mathcal{C}}\mathbb{P}[\textrm{error}|\textrm{no message}]&\le\frac{1}{A};
  \\
  \mathbb{E}_{\mathcal{C}}\mathbb{P}[\textrm{error}|\textrm{message is }m]
  &\le\mathbb{P}\left[\imath_{Z;X}(Z;X)\le\log\frac{LM}{T}+\delta\right]
  \nonumber\\
  &\quad+\mathbb{P}[\imath_{Q_Z\|{{\pi}}_Z}(Z)\le \log A]
  \nonumber
  \\
  &\quad+\frac{1}{1+\beta}+e^{-(\beta+1)}+\beta\exp(-\delta),
  \end{align}
  where the error events are defined similarly to Decoder~1.
  Decoder~3 either output an empty list or a list of all messages, and
  \begin{align}
  \mathbb{E}_{\mathcal{C}}\mathbb{P}[\textrm{error}|\textrm{no message}]&\le\frac{1}{A};
  \\
  \mathbb{E}_{\mathcal{C}}\mathbb{P}[\textrm{error}|\textrm{message is }m]
  &\le\mathbb{P}[\imath_{Q_Z\|{{\pi}}_Z}(Z)\le \log A].
  \end{align}
\end{thm}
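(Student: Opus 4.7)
My plan is to construct three threshold-based list decoders, each keyed to one or two of the information densities $\bar\imath_{Z;X}$, $\imath_{Z;X}$, and $\imath_{Q_Z\|\pi_Z}$ appearing in the statement, and to analyze each by combining a change-of-measure/Markov argument for the no-message false alarm with a union-bound decomposition of the missed-detection event and a Markov-plus-Chernoff split for list overflow. Decoder~3 is a pure detection test: declare ``no message'' iff $\imath_{Q_Z\|\pi_Z}(Z)\le\log A$, and otherwise output the trivial list of all $M$ messages. The no-message bound $\mathbb{P}_{Z\sim\pi_Z}[\imath_{Q_Z\|\pi_Z}(Z)>\log A]\le 1/A$ is immediate from applying Markov to the likelihood ratio $dQ_Z/d\pi_Z$, and the message-sent bound uses that the codebook-averaged output law is exactly $Q_Z$.

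Decoder~2 keeps the same detection test but, when it declares a message to be present, forms the list $\Lambda=\{w:\exists l,\ \imath_{Z;X}(Z;c_{wl})>\log(LM/T)+\delta\}$. A union bound splits the missed-detection probability into three contributions: (i) the detection test failing under $Z\sim Q_Z$, producing the second term; (ii) the true codeword $c_{m,l^{*}}$ failing its $\imath_{Z;X}$-threshold under $(X,Z)\sim Q_X P_{Z|X}$, producing the first term; and (iii) list overflow $|\Lambda|>T$. Decoder~1 combines both thresholds into a single acceptance rule: include $w$ in $\Lambda$ iff some $l$ satisfies both $\bar\imath_{Z;X}(Z;c_{wl})>\log(LMA)+\delta$ and $\imath_{Z;X}(Z;c_{wl})>\log(LM/T)+\delta$, and declare no message iff $\Lambda=\emptyset$. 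The no-message bound~\eqref{e335} then follows by a union bound over the $LM$ pairs $(w,l)$ together with the identity $dP_{Z|X}/d\pi_Z=\exp(\bar\imath_{Z;X})$ (which yields $\mathbb{P}_{\pi_Z}[\bar\imath_{Z;X}(Z;x)>t\mid X=x]\le e^{-t}$ uniformly in $x$), and its missed-detection bound decomposes as for Decoder~2 into the same two atypicality terms plus the list-overflow event.

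The main obstacle, shared by Decoders~1 and~2, is bounding the list-overflow contribution by $\frac{1}{1+\beta}+e^{-(\beta+1)}+\beta\exp(-\delta)$. Conditional on $Z$ and the transmitted $(m,l^{*})$, the $(M-1)L$ spurious codewords are i.i.d.\ $Q_X$ and independent of $Z$, so $|\Lambda|-1$ is stochastically dominated by $\mathrm{Bin}((M-1)L,p(Z))$, where $p(Z):=\mathbb{P}_{X\sim Q_X}[\imath_{Z;X}(Z;X)>\log(LM/T)+\delta\mid Z]$; the change of measure $dP_{Z|X}/dQ_Z=\exp(\imath_{Z;X})$ yields $\mathbb{E}[\mu(Z)]\le T\exp(-\delta)$ for $\mu(Z):=(M-1)L\,p(Z)$. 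I would then use the decomposition
\[
\mathbb{P}[|\Lambda|>T]\le \mathbb{P}[\mu(Z)>\lambda]+\mathbb{E}\bigl[\mathbf{1}\{\mu(Z)\le\lambda\}\,\mathbb{P}[\mathrm{Bin}((M-1)L,p(Z))\ge T\mid Z]\bigr]
\]
and tune the truncation level $\lambda$ jointly with a Chernoff parameter so that a Markov bound on the first term contributes $\beta\exp(-\delta)$ while a conditional Chernoff bound on the Binomial in the second term produces $\frac{1}{1+\beta}+e^{-(\beta+1)}$. The delicate step is choosing these parameters so the three constants line up exactly with the stated bound; once this calculation is carried out, assembling the pieces for each decoder gives the theorem.
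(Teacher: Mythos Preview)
Your decoder constructions (all three) and the analysis of the no-message false alarm and the two ``atypicality'' terms are correct and coincide with the paper's. The divergence is entirely in the list-overflow step, and there your plan does not quite work as written.

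The paper does \emph{not} condition on $Z$ and bound a Binomial tail. It uses an exchangeability (``probe set'') argument: fix $S:=\{2,\dots,\lfloor\beta M/T\rfloor+1\}$ and split
\[
\mathbb{P}\bigl[|\mathcal{L}|\ge T+1\bigm| W=1\bigr]
\le
\mathbb{P}\bigl[|\mathcal{L}|\ge T+1,\ \mathcal{L}\cap S=\emptyset\bigr]
+\mathbb{P}\bigl[\mathcal{L}\cap S\neq\emptyset\bigr].
\]
For the second piece, a union bound over the $\approx\frac{\beta M}{T}\cdot L$ codewords indexed by $S$ together with $\mathbb{P}_{Q_X\times Q_Z}[\imath_{Z;X}>\log\tfrac{LM}{T}+\delta]\le\tfrac{T}{LM}e^{-\delta}$ gives $\beta e^{-\delta}$. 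For the first piece, conditioning on $|\mathcal{L}\setminus\{1\}|=t\ge T$ and using the exchangeability of the i.i.d.\ codebook over $\{2,\dots,M\}$ yields $\mathbb{P}[\mathcal{L}\cap S=\emptyset\mid t]\le(1-\tfrac{|S|}{M-1})^{t}\le(1-\tfrac{\beta}{T})^{T}$; then the elementary inequality $(1-\tfrac{p\alpha}{n})^{n}\le 1-p+e^{-\alpha}$ (the paper's \eqref{e_15}) with $p=\tfrac{\beta}{\beta+1}$, $\alpha=\beta+1$, $n=T$ produces exactly $\tfrac{1}{1+\beta}+e^{-(\beta+1)}$.

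Your proposed route forces $\lambda=T/\beta$ (the only choice making Markov on $\mu(Z)$ yield $\beta e^{-\delta}$), and then you need $\sup_{\mu\le T/\beta}\mathbb{P}[\mathrm{Bin}\ \text{of mean}\ \mu\ \ge T]\le\tfrac{1}{1+\beta}+e^{-(\beta+1)}$. This fails: at $T=1$, $\beta=2$ the Poisson limit gives $1-e^{-1/2}\approx 0.393>1/3+e^{-3}\approx 0.383$, so no Chernoff tuning can recover the stated constant. The ``delicate step'' you flagged is therefore a genuine obstruction, and the probe-set/exchangeability trick is the missing idea. (Interestingly, a \emph{cruder} argument than yours does close the gap: plain Markov on $|\mathcal{L}\setminus\{1\}|$ gives $\mathbb{E}[|\mathcal{L}\setminus\{1\}|]\le T e^{-\delta}$ and hence $\mathbb{P}[|\mathcal{L}|\ge T+1]\le e^{-\delta}$, and one checks $e^{-\delta}\le\tfrac{1}{1+\beta}+e^{-(\beta+1)}+\beta e^{-\delta}$ for all $\beta,\delta>0$, so this already implies the theorem as stated.)
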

\begin{proof}[Proof of Theorem~\ref{eaveAchieve}]
See Appendix~\ref{app_eaveAchieve}.
\end{proof}
Under various conditions, one out of the three decoders are asymptotically optimal. By choosing appropriate parameters $\delta,\beta,A,T>0$, it is clear that Theorem~\ref{eaveAchieve} implies the following:
\begin{cor}\label{cor_ea}
Fix any $Q_{\sf X}$, $R$, $R_{\sf L}$ and $0<{\epsilon}<1$. Consider $(\exp(nR),\exp(nR_{\sf L}),Q_{\sf X}^{\otimes n})$-random codes and stationary memoryless channel with per-letter conditional distribution $Q_{\sf Z|X}$.
\begin{itemize}
  \item When $R+R_{\sf L}<I(Q_{\sf X},P_{\sf Z|X})$, the rate pair $(\alpha,\tau)$ is ${\epsilon}$-achievable by a Decoder~1\footnote{By which we mean it is possible to choose the $\delta,\beta,A,T$ parameters for the decoder to achieve the desired performance.} if
      \begin{align}
      \left\{
      \begin{array}{c}
        D(Q_{\sf Z}\|{{\pi}}_{\sf Z})+I(Q_{\sf X},P_{\sf Z|X})>R_{\sf L}+R+\alpha; \\
        I(Q_{\sf X},P_{\sf Z|X})>R+R_{\sf L}-\tau.
      \end{array}
      \right.
      \label{e342}
      \end{align}
  \item When $R+R_{\sf L}\ge I(Q_{\sf X},P_{\sf Z|X})$ but $R_{\sf L}<I(Q_{\sf X},P_{\sf Z|X})$, the rate pair $(\alpha,\tau)$ is ${\epsilon}$-achievable by a Decoder~2 if
      \begin{align}
      \left\{
      \begin{array}{c}
        I(Q_{\sf X},P_{\sf Z|X})>R_{\sf L}+R-\tau; \\
        \alpha<D(Q_{\sf Z}\|{{\pi}}_{\sf Z}).
      \end{array}
      \right.
      \end{align}
  \item When $R_{\sf L}\ge I(Q_{\sf X},P_{\sf Z|X})$, the rate pair $(\alpha,\tau)$ is ${\epsilon}$-achievable by a Decoder~3 if
      \begin{align}
      \left\{
      \begin{array}{c}
        \tau\ge R; \\
        \alpha<D(Q_{\sf Z}\|{{\pi}}_{\sf Z}).
      \end{array}
      \right.
      \end{align}
\end{itemize}
\end{cor}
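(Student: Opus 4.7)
The plan is to apply the one-shot bounds of Theorem~\ref{eaveAchieve} to the product channel $P_{\sf Z|X}^{\otimes n}$ with input distribution $Q_{\sf X}^{\otimes n}$, and to choose the free parameters so that all non-information-spectrum terms are negligible while the information-spectrum probabilities are controlled by the weak law of large numbers. Concretely, set $M_n=\lceil\exp(nR)\rceil$, $L_n=\lceil\exp(nR_{\sf L})\rceil$, $A_n=\lfloor\exp(n\alpha)\rfloor$, $T_n=\lfloor\exp(n\tau)\rfloor$, and choose $\delta=\delta_n$ and $\beta=\beta_n$ with $\delta_n=\omega(1)$, $\beta_n=\omega(1)$ and $\beta_n=o(\exp(\delta_n))$ (e.g.\ $\delta_n=\sqrt{n}$ and $\beta_n=n$), so that the three residual terms $\tfrac{1}{1+\beta_n}$, $e^{-(\beta_n+1)}$, $\beta_n\exp(-\delta_n)$, as well as $\tfrac{1}{A_n}\exp(-\delta_n)$, all vanish.

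The key observation is that under $Q_{\sf X}^{\otimes n}\to P_{\sf Z|X}^{\otimes n}\to Q_{\sf Z}^{\otimes n}$ the three information densities are sums of i.i.d.\ terms with means
\begin{align}
\tfrac{1}{n}\mathbb{E}[\bar\imath_{Z;X}(Z^n;X^n)]&=I(Q_{\sf X},P_{\sf Z|X})+D(Q_{\sf Z}\|\pi_{\sf Z}),\nonumber\\
\tfrac{1}{n}\mathbb{E}[\imath_{Z;X}(Z^n;X^n)]&=I(Q_{\sf X},P_{\sf Z|X}),\nonumber\\
\tfrac{1}{n}\mathbb{E}[\imath_{Q_Z\|\pi_Z}(Z^n)]&=D(Q_{\sf Z}\|\pi_{\sf Z}).\nonumber
\end{align}
Hence each probability term in Theorem~\ref{eaveAchieve} of the form $\mathbb{P}[\tfrac{1}{n}(\cdot)\le r]$ tends to $0$ whenever $r$ is strictly below the corresponding mean. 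Applied to Decoder~1 with the parameter choices above, the two probabilities vanish exactly when $R+R_{\sf L}+\alpha<I(Q_{\sf X},P_{\sf Z|X})+D(Q_{\sf Z}\|\pi_{\sf Z})$ and $R+R_{\sf L}-\tau<I(Q_{\sf X},P_{\sf Z|X})$, which are the two strict inequalities of \eqref{e342}; a standard argument (take a slightly larger $\alpha$ and slightly smaller $\tau$, then take closure) yields the non-strict form stated in the corollary. The same reasoning applied to Decoder~2 yields the second case, and Decoder~3 requires only the single LLN bound on $\imath_{Q_Z\|\pi_Z}(Z^n)$, giving the third case.

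The only case split that needs justification is which decoder to use. Decoder~1 is preferable whenever $R+R_{\sf L}<I(Q_{\sf X},P_{\sf Z|X})$, because then the eavesdropper can in principle uniquely identify $(W,L)$ so a non-trivial list is meaningful; Decoder~2 is needed when the aggregate rate exceeds $I$ but the auxiliary rate $R_{\sf L}$ alone does not, since a nontrivial list of size $T_n\ge\exp(n(R+R_{\sf L}-I))$ is still useful; and Decoder~3 is the only option when $R_{\sf L}\ge I(Q_{\sf X},P_{\sf Z|X})$, since the randomization already saturates the eavesdropper's channel and no information about $W$ can be extracted, so the list is either empty or the full $\{1,\dots,M\}$. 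The main conceptual (but routine) obstacle is just bookkeeping: matching each of the three regimes to the correct decoder and verifying that the chosen $(\delta_n,\beta_n,A_n,T_n)$ make every error term vanish, which follows directly from the WLLN and the parameter choices above.
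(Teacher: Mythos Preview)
Your proposal is essentially the same approach as the paper's: apply Theorem~\ref{eaveAchieve} to the $n$-letter channel, pick $\delta_n,\beta_n\to\infty$ with $\delta_n=o(n)$ and $\beta_n=o(\exp(\delta_n))$ (the paper uses $\delta_n=n^{0.9}$, $\beta_n=n$; your $\delta_n=\sqrt{n}$, $\beta_n=n$ works equally well), set $A_n=\exp(n\alpha)$, $T_n=\exp(n\tau)$, and invoke the WLLN on the three information densities. One small correction: the inequalities in \eqref{e342} and in the Decoder~2 case are already strict, so your closure step is unnecessary there; only the condition $\tau\ge R$ for Decoder~3 is non-strict, and that follows directly since Decoder~3's error bound does not involve $T$ at all, so $T_n=M_n$ is permitted. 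Your final paragraph about ``which decoder to use'' is commentary rather than proof---the corollary already assigns a decoder to each regime, so no justification of the case split is required.
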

\begin{proof}[Proof Sketch]
Consider the first case. To see the achievability of $(\alpha,\tau)$ satisfying \eqref{e342},
choose
\begin{align}
\delta_n&:=n^{0.9},
\\
\beta_n&:=n,
\\
A_n&:=\exp(n\alpha),
\\
T_n&:=\exp(n\tau).
\end{align}
Then the right sides of \eqref{e335} and \eqref{e336} converges to zero as $n\to\infty$.
The analyses of the other two cases are similar using the same choice of the parameters as above.
\end{proof}
The eavesdropper's achievability (``if '' part) of Theorem~\ref{region} then follows from Corollary~\ref{cor_ea} and an application of the standard diagonalization argument to show that the achievable region is closed (see \cite{han2003information}).

%
%
%

\section{Conclusion and Future Work}\label{sec_conclusion}
This paper develops general bounds among various distance metrics, and showed that, in the memoryless case, the exponential growth of $\gamma$ for the $E_{\gamma}$ between the synthesized and target distributions to vanish is the same as the linear growth of the relative entropy or smooth R\'{e}nyi $\alpha$-divergence (of order $\alpha\neq1$).
This implies that in the context of relative entropy gauged approximation the bound in \cite[Theorem~12]{han1993approximation} is not asymptotically tight. An intuitive explanation for this is that \cite[Theorem~12]{han1993approximation} uses random codebooks where the distribution of each codeword induces the target output distribution through the stationary memoryless channel. However, this is not necessary, and in fact the optimal choice of the codeword distribution  generally does \emph{not} induce the target distribution. This is in stark contrast to the conventional resolvability under total variation distance, where the codeword distribution \emph{must} induce the target distribution to cause total variation between the output distribution and the target distribution to vanish.

We have seen three examples of the application of $E_{\gamma}$-resolvability in information theory. The essence is a change-of-measure: if $E_{\gamma}(P\|Q)$ is small and $P(\mathcal{A})$ is large, then $Q(\mathcal{A})$ is essentially lower bounded by $\frac{1}{\gamma}$.
There are two motivations for such a change-of-measure:

1) change one distribution to another distribution which is simpler to analyze. Consider the source coding example; in \eqref{e280} we changed the real joint distribution of the source and the reproduction to $P_{XU}$, which is simpler to analyze since its average is the given joint distribution $Q_{XU}$. The application to one-shot mutual covering lemma (step \eqref{e292}) also falls into this category.

2) change one distribution from a family to a common distribution. In the wiretap channel example, step \eqref{e54} upper-bounds the conditional probability $P_{Z|W=m}(\mathcal{D}_m)$ for each message using the probability under a fixed distribution $\mu_Z(\mathcal{D}_m)$. The sum of $\mu_Z(\mathcal{D}_m)$ over $m$ would be $T$ since (on average) the space $\mathcal{Z}$ is covered by $(\mathcal{D}_m)$ for $T$ times.

Other distance measures have been useful for change-of-measure in information theory. For example, $\beta_{\alpha}(P\|Q)$ is used in the converse proof for the error exponent of lossy compression \cite{csiszar1981information}. The relative entropy can also play a similar role (e.g.~proof of sphere packing bound in \cite[Theorem~2.5.3]{csiszar1981information}): the Log-Sum Inequality implies that
\begin{align}
Q(\mathcal{A})\ge\exp\left(-\frac{D(P\|Q)+h(Q(\mathcal{A}))}{P(\mathcal{A})}\right)
\end{align}
where $h(\cdot)$ is the binary entropy function. Thus if ${P(\mathcal{A})}$ is close to $1$ and $D(P\|Q)\gg1$, then $Q(\mathcal{A})$ is essentially lower bounded by $\exp(-D(P\|Q))$. Nevertheless, the $E_{\gamma}$ metric is sometimes more desirable because of its nice properties and relations to other metrics (see Proposition~\ref{prop3} and \ref{prop_1}). For example, $\beta_{\alpha}$ and $D(\cdot\|\cdot)$ do not seem to share analogues of the triangle inequality (Proposition~\ref{prop3}-\ref{prop3_2})) of $E_{\gamma}$.

In Section~\ref{sec_worst} the achievability of identification coding is used to derive a converse for worst-case resolvability. The contrapositive of this argument is that achievability of worst-case resolvability would imply a converse for identification coding. Indeed, \cite{han1992}\cite{han1993approximation} have initiated this approach to prove the strong converse of identification where resolvability in total variation distance is used; with achievability of resolvability in $E_{\gamma}$ (e.g.~Theorem~\ref{thm_ma}) the
minimum of the two exponents for ID coding over a general sequence of channels
can be upper-bounded, in contrast to the approach of \cite{ahlswede1989identification} specific to DMC. But for DMC, our preliminary study indicates that the resulting bound is not as tight.

Another potential application of the $E_{\gamma}$-based analysis, suggested by Yury Polyanskiy \cite{polyanskiy2014private}, is the study of the sphere-packing exponent in channel coding.
In \cite{bennett2014quantum},
the achievability of channel synthesis (under total variation distance) is used to prove the strong converse of channel coding.
This combined with a standard change-of-measure argument (see \cite[Theorem~2.5.3]{csiszar1981information}) yields the conventional sphere-packing bound.
In \cite{polyanskiy2014private} Polyanskiy pointed out that the sphere-packing bound might be improved if one could prove that channel synthesis under the more forgiving $E_{\gamma}$ metric requires smaller communication rates.
Since resolvability under total variation distance has been used to prove channel synthesis under total variation distance \cite{cuff2012distributed}, it is natural to ask whether $E_{\gamma}$-resolvability can lead to an $E_{\gamma}$-channel synthesis result.
Unfortunately, it appears not to be the case,
because of the asymmetry of $E_{\gamma}(P\|Q)$ in $P$ and $Q$ (for $\gamma>1$).

\section{Acknowledgements}
Our initial focus was on the alternative formulation of Wyner's wiretap setup without using information measures (Section~\ref{sec_wiretap}) as well as the excess information metric for channel resolvability, as in Theorem~\ref{thm3}. We gratefully acknowledge Yury Polyanskiy for bringing the $E_{\gamma}$ metric to our attention and sharing \cite{polyanskiy2014private} (see Section~\ref{sec_conclusion}).
This work was supported by the NSF under Grants CCF-1350595, CCF-1116013, 
by the Center for Science of Information (CSoI), an NSF Science and Technology Center, under grant agreement CCF-0939370,
and the Air Force Office of Scientific Research under Grant FA9550-12-1-0196.

\appendices
\section{Proof of Proposition~\ref{prop1}}\label{pf_prop1}
We first show that it suffices to consider the case of $|\mathcal{X}|\le3$. Put
\begin{align}
\mathcal{B}_t := \left\{ x\colon \frac{{\rm d}P}{{\rm d}Q} (x) > t\right\},\quad\forall t\ge 0.
\end{align}
We partition the whole space as
\begin{align}
X= \mathcal{A}_1 \cup
\mathcal{A}_2 \cup
\mathcal{A}_3 \cup
\mathcal{A}_4
\end{align}
with
\begin{align}
\mathcal{A}_1 &= \mathcal{B}_1\cap \mathcal{B}_{\lambda};
 \label{ea1}
\\
\mathcal{A}_2 &= \mathcal{B}_1^c \cap \mathcal{B}_{\lambda};
 \label{ea2}
\\
\mathcal{A}_3 &= \mathcal{B}_1 \cap \mathcal{B}_{\lambda}^c;
 \label{ea3}
\\
\mathcal{A}_4 &= \mathcal{B}_1^c\cap\mathcal{B}_{\lambda}^c.
 \label{ea4}
\end{align}
Let $\mathcal{Z}=\{1,2,3,4\}$ and $z(x)$ be a function of $x$ indicating which of the above nonempty sets $x$ belongs to. Next observe that
\begin{align}
P_Z(1)>\lambda Q_Z(1),&\quad P_Z(1)> Q_Z(1) \textrm{ if } Q_Z(1)>0;\label{e2_11}
\\
P_Z(2)>\lambda Q_Z(2),&\quad P_Z(2)\le Q_Z(2) \textrm{ if } Q_Z(2)>0;\label{e2_12}
\\
P_Z(3)\le\lambda Q_Z(3),&\quad P_Z(3)> Q_Z(3) \textrm{ if } Q_Z(3)>0;\label{e2_13}
\\
P_Z(4)\le \lambda Q_Z(4),&\quad P_Z(4)\le Q_Z(4).\label{e2_14}
\end{align}
For example, to see \eqref{e2_12}, consider
\begin{align}
P_Z(2)&=\int1\{x\in\mathcal{A}_2\}{\rm d}P_X(x)
\\
&\le\int1\{x\in\mathcal{A}_2\}{\rm d}Q_X(x)
\\
&=Q_Z(2).
\end{align}
This establishes the second inequality in \eqref{e2_12}.
The first inequality in \eqref{e2_12} takes a little more effort since it is not strict. Define
\begin{align}
\mathcal{A}_{2,t}:=\left\{x:\frac{{\rm d}P}{{\rm d}Q}(x)>t \textrm{ and }\frac{{\rm d}P}{{\rm d}Q}(x)\le 1\right\},
\quad\forall t>0.
\end{align}
Then clearly, $\mathcal{A}_{2}=\bigcup_{t>\lambda}\mathcal{A}_{2,t}$. By continuity of measure, there exists $t>\lambda$ such that
\begin{align}
Q(\mathcal{A}_{2,t})>\frac{1}{2}Q(\mathcal{A}_2),
\end{align}
so that
\begin{align}
P_Z(2)&=\int1\{x\in\mathcal{A}_2\}{\rm d}P(x)
\\
&=\int1\{x\in\mathcal{A}_{2,t}\}{\rm d}P(x)
+\int1\{x\in\mathcal{A}_2\setminus\mathcal{A}_{2,t}\}
{\rm d}P(x)
\\
&\ge\int1\{x\in\mathcal{A}_{2,t}\}t{\rm d}Q(x)
+\int1\{x\in\mathcal{A}_2\setminus\mathcal{A}_{2,t}\}
\lambda{\rm d}Q(x)
\\
&=\int1\{x\in\mathcal{A}_{2,t}\}(t-\lambda){\rm d}Q(x)
+\int1\{x\in\mathcal{A}_2\}\lambda{\rm d}Q(x)
\\
&>\frac{t-\lambda}{2}Q(\mathcal{A}_2)+\lambda Q_Z(2).
\end{align}
Thus \eqref{e2_12} is established, and \eqref{e2_11}, \eqref{e2_13} \eqref{e2_14} can be proved similarly (Note that there is no condition in \eqref{e2_14} because neither of the inequalities in \eqref{e2_14} is strict). \eqref{e2_11}-\eqref{e2_14} imply that
\begin{align}
|P-Q|&=|P_Z-Q_Z|;
\\
\mathbb{P}\left[\frac{{\rm d}P}{{\rm d}Q}(X)>\lambda\right]&=\mathbb{P}\left[\frac{{\rm d}P_Z}{{\rm d}Q_Z}(Z)>\lambda\right],
\end{align}
where $Z=z(X)$. However depending on the value of $\lambda$, one of the four sets defined in \eqref{ea1}-\eqref{ea4} is empty. Therefore we have shown that it suffices to consider the case of $|\mathcal{X}|\le3$, so that verifying \eqref{ub} and \eqref{lb} becomes elementary. We begin with the proof of the upper bound in \eqref{ub}:
\begin{enumerate}
  \item $\lambda\ge\frac{1}{1-\delta}$.\\
  The upper bound \eqref{ub} follows from \cite[Theorem~9]{verdu2014total}. To verify its tightness, consider $|\mathcal{X}|=2$, and
    \begin{align}
    P&:=\left[1-\frac{\lambda^+\delta}{\lambda^+-1},\frac{\lambda^+\delta}{\lambda^+-1}\right];\\
    Q&:=\left[1-\frac{\delta}{\lambda^+-1},\frac{\delta}{\lambda^+-1}\right];
    \end{align}
    where $\lambda^+>\lambda$ ensures that $P$ and $Q$ are distributions, and $\frac{\lambda^+\delta}{\lambda^+-1}$ can be made arbitrarily close to $\frac{\lambda\delta}{\lambda-1}$ as $\lambda^+\downarrow\lambda$.

  \item $\lambda<\frac{1}{1-\delta}$.
    The upper bound is trivial. Its tightness can be seen by choosing
    \begin{align}
    P&:=[0,1];\\
    Q&:=[\delta,1-\delta].
    \end{align}
\end{enumerate}
The lower bound \eqref{lb} is proved as follows:
\begin{enumerate}
  \item $\lambda>\frac{1}{1-\delta}$.\\
  The lower bound is trivial. Its tightness can be seen by choosing
  \begin{align}
    P&:=\left[1-\frac{\lambda\delta}{\lambda-1},\frac{\lambda\delta}{\lambda-1}\right];\\
    Q&:=\left[1-\frac{\delta}{\lambda-1},\frac{\delta}{\lambda-1}\right];
    \end{align}

  \item $1<\lambda\le\frac{1}{1-\delta}$.
   Without loss of generality, we may assume that $\mathcal{X}=\{1,2,3\}$ and
   \begin{align}
   \frac{P(1)}{Q(1)}&\le 1;
   \\
   1<\frac{P(2)}{Q(2)}&\le \lambda;
   \\
   \lambda<\frac{P(3)}{Q(3)};
   \end{align}
   Then the lower bound \eqref{lb} follows from
   \begin{align}
   P(3)&= 1-P(1)-P(2)
   \\
   &\ge 1-P(1)-\lambda Q(2)
   \\
   &= 1-P(1)-\lambda [1-Q(1)-Q(3)]
   \\
   &= 1-P(1)-\lambda [1-(P(1)+\delta)-Q(3)]
   \\
   &= 1+(\lambda-1)P(1)+\lambda Q(3)-\lambda+\lambda\delta
   \\
   &\ge 1-\lambda+\lambda\delta
   \end{align}
   and its tightness can be verified by considering the following distribution as $\epsilon\to0$:
   \begin{align}
   P&:=[0,(1-\delta-\epsilon)\lambda,1-(1-\delta-\epsilon)\lambda]
   \\
   Q&:=[\delta,1-\delta-\epsilon,\epsilon].
   \end{align}

   \item $1-\delta < \lambda \leq 1$. The lower bound \eqref{lb} follows from
   \begin{align}
   \mathbb{P}\left[\frac{{\rm d}P}{{\rm d}Q}(X)>\lambda\right]
   &\ge \mathbb{P}\left[\frac{{\rm d}P}{{\rm d}Q}(X)>1\right]-\mathbb{P}\left[\frac{{\rm d}P}{{\rm d}Q}(\tilde{X})>1\right]
   \\
   &=\delta
   \end{align}
   where $\tilde{X}\sim Q$.
   The tightness can be see by considering
   \begin{align}
   P&:=[1-\delta,\delta];
   \\
   Q&:=[1,0].
   \end{align}

   \item $\lambda\le 1-\delta$.\\
   Without loss of generality, assume that $\mathcal{X}=\{1,2,3\}$ and
   \begin{align}
   \frac{P(1)}{Q(1)}&\le \lambda;
   \\
   \lambda<\frac{P(2)}{Q(2)}&\le 1;
   \\
   \frac{P(2)}{Q(2)}&>1.
   \end{align}
   Then observe that
   \begin{align}
   1-P(1)-Q(3)-\delta&=1-P(1)-P(3)
   \\
   &=P(2)
   \\
   &\le Q(2)
   \\
   &=1-Q(1)-Q(3)
   \\
   &\le 1-\frac{1}{\lambda}P(1)-Q(3)
   \end{align}
   which upon rearrangements gives
   \begin{align}
   P(1)\ge \frac{\lambda\delta}{1-\lambda}
   \end{align}
   implying the lower bound \eqref{lb}. To see its tightness, consider
   \begin{align}
   P&:=\left[\frac{\lambda\delta}{1-\lambda},1-\frac{\delta}{1-\lambda},\delta\right];
   \\
   Q&:=\left[\frac{\delta}{1-\lambda},1-\frac{\delta}{1-\lambda},0\right].
   \end{align}
\end{enumerate}

\section{Proof of Proposition~\ref{prop_1}}\label{pf_prop_1}
\begin{enumerate}
  \item The first inequality in \eqref{e47} is evident from the definition. For the second inequality in \eqref{e47}, consider the event
      \begin{align}
      \mathcal{A}:=\{x\in\mathcal{X}:\imath_{P\|Q}(x)>\log\gamma\}.
      \end{align}
      Then
      \begin{align}
      &\quad E_{\frac{\gamma}{a}}(P\|Q)
      \nonumber\\
      &\ge
      P(\mathcal{A})-\frac{\gamma}{a}Q(\mathcal{A})
      \nonumber\\
      &\ge \mathbb{P}[\imath_{P\|Q}(X)>\log\gamma]
      -\frac{\gamma}{a}\cdot \frac{1}{\gamma}\mathbb{P}[\imath_{P\|Q}(X)>\log\gamma]
      \\
      &\ge \frac{a-1}{a}\mathbb{P}[\imath_{P\|Q}(X)>\log\gamma]
      \end{align}
      and the result follows by rearrangement.

  \item Direct from the monotonicity of R\'{e}nyi divergences in $\alpha$ and the definitions of their smooth versions.
  \item The bound \eqref{e_prop20} follows from $D(P\|Q)=\mathbb{E}[\imath_{P\|Q}(\hat{X})]$ and
  \eqref{e_prop_21} can be seen from
    \begin{align}
    D(P\|Q)&\ge \mathbb{E}[|\imath_{P\|Q}(\hat{X})|]-2e^{-1}\log e
    \label{e_prop24}
    \\
    &\ge \log\gamma \mathbb{P}[\imath_{P\|Q}(\hat{X})>\log\gamma]-2e^{-1}\log e
    \label{e_prop25}
    \\
    &\ge \log\gamma E_{\gamma}(P\|Q)-2e^{-1}\log e
    \label{e_prop26}
      \end{align}
   where \eqref{e_prop24} is due to Pinsker \cite[(2.3.2)]{pinsker}, \eqref{e_prop25} uses Markov's inequality, and \eqref{e_prop26} is from the definition \eqref{e_prop1}.
   \item 
By considering the $\frac{{\rm d}\mu}{{\rm d}Q}>\gamma$ and $\frac{{\rm d}\mu}{{\rm d}Q}\le\gamma$ cases separately, we can check the following (homogeneous) inequalities:
for each $\alpha<1$,
       \begin{align}
\left|\frac{{\rm d}\mu}{{\rm d}Q}-\gamma\right|
&\ge\frac{{\rm d}\mu}{{\rm d}Q}-2\gamma^{1-\alpha}\left(\frac{{\rm d}\mu}{{\rm d}Q}\right)^{\alpha}+\gamma,
\label{e_prop29}
\end{align}
and for each $\alpha>1$,
\begin{align}
\left|\frac{{\rm d}\mu}{{\rm d}Q}-\gamma\right|
&\le-\frac{{\rm d}\mu}{{\rm d}Q}+2\gamma^{1-\alpha}\left(\frac{{\rm d}\mu}{{\rm d}Q}\right)^{\alpha}+\gamma.
\label{e_prop30}
\end{align}
Integrating with respect to ${\rm d}Q$ both sides of \eqref{e_prop29}, we obtain
\begin{align}
|\mu-\gamma Q|\ge \mu(\mathcal{X})-2\gamma^{1-\alpha}\int\left(\frac{{\rm d}\mu}{{\rm d}Q}\right)^{\alpha}{\rm d}Q+\gamma
\end{align}
and \eqref{e_prop21} follows by rearrangement.
Integrating with respect to ${\rm d}Q$ on both sides of \eqref{e_prop30}, we obtain
\begin{align}
|\mu-\gamma Q|\le -\mu(\mathcal{X})+2\gamma^{1-\alpha}\int\left(\frac{{\rm d}\mu}{{\rm d}Q}\right)^{\alpha}{\rm d}Q+\gamma
\end{align}
and \eqref{e_prop21_1} follows by rearrangement.

\item Immediate from the previous result, the definition of smooth R\'{e}nyi divergence, and the triangle inequality \eqref{e19}.

\item $\Rightarrow$: By assumption there exists a nonnegative finite measure $\mu$ such that $E_1(P\|\mu)\le \epsilon$ and $\mu\le \gamma Q$. Then from Proposition~\ref{prop_tri},
\begin{align}
E_{\gamma}(P\|Q)&\le E_1(P\|\mu)+E_{\gamma}(\mu\|Q)\nonumber
\\
&\le\epsilon+0.\nonumber
\end{align}
$\Leftarrow$: Define $\mu$ by $\frac{{\rm d}\mu}{{\rm d}Q}:=\min\{\frac{{\rm d}P}{{\rm d}Q},\gamma\}$.
Since $\left\{\frac{{\rm d}P}{{\rm d}Q}>\gamma\right\}
=\left\{\frac{{\rm d}P}{{\rm d}\mu}>1\right\}$,
\begin{align}
E_1(P\|\mu)&=P\left(\frac{{\rm d}P}{{\rm d}Q}>\gamma\right)-\mu\left(\frac{{\rm d}P}{{\rm d}Q}>\gamma\right)\nonumber
\\
&=P\left(\frac{{\rm d}P}{{\rm d}Q}>\gamma\right)-\gamma Q\left(\frac{{\rm d}P}{{\rm d}Q}>\gamma\right)\nonumber
\\
&\le E_{\gamma}(P\|Q).\nonumber
\end{align}
Then $D_{\infty}(\mu\|Q)\le\log\gamma$ implies that
$D^{-\epsilon}_{\infty}(P\|Q)\le \log\gamma$.

\item ``$\ge$'': let $\mathcal{A}$ be a set achieving the minimum in \eqref{eq46}. Let $\mu$ be the restriction of $P$ on $\mathcal{A}$. Then by definition \eqref{eq46} we have
    \begin{align}
    E_1(P\|\mu)=P(\mathcal{A}^c)\le\epsilon
    \end{align}
    therefore
    \begin{align}
    D_0^{+\epsilon}(P\|Q)\ge-\log\beta_{1-\epsilon}(P,Q).
    \end{align}
    ``$\le$'': fix arbitrary $\delta>0$ and let $\mu$ be a nonnegative measure satisfying
    \begin{align}
    D_0^{+\epsilon}(P\|Q)<D_0(\mu\|Q)+\delta.
    \end{align}
    Define $\mathcal{A}:=\supp(\mu)$. Then
    \begin{align}
    P(\mathcal{A}^c)&=P(\mathcal{A}^c)-\mu(\mathcal{A}^c)
    \\
    &\le E_1(P\|\mu)
    \\
    &\le \epsilon,
    \end{align}
    hence
    \begin{align}
D_0^{+\epsilon}(P\|Q)-\delta
    &\le
    D_0(\mu\|Q)
    \\
    &=
    -\log Q(\mathcal{A})
    \\
    &\le
        -\log\beta_{1-\epsilon}(P,Q)
    \end{align}
    and the result follows by setting $\delta\downarrow 0$.

   \item
   In the case of non-atomic $P$, there exists a set $\mathcal{A}\subseteq \{x\colon\imath_{P\|Q}(x)>\tau\}$ such that
   $P(\mathcal{A})= 1-\epsilon$. Then
\begin{align}
\beta_{1-\epsilon}(P,Q)
&\le Q(\mathcal{A})
\\
&\le \exp(-\tau)P(\mathcal{A}).
\end{align}
We obtain \eqref{e_53} by taking the logarithms on both sides of the above and invoking Part~\ref{ptD0}.
When $P$ is not necessarily non-atomic,
since $P(\imath_{P\|Q}>\tau)\ge 1-\epsilon$,
\begin{align}
\beta_{1-\epsilon}(P,Q)
&\le Q(\imath_{P\|Q}>\tau)
\\
&\le \exp(-\tau)P(\imath_{P\|Q}>\tau)
\\
&\le \exp(-\tau)
\end{align}
and again the result follows by taking the logarithms on both sides of the above.
\end{enumerate}

\section{Proof of Theorem~\ref{eaveAchieve}}\label{app_eaveAchieve}
\begin{itemize}
  \item Codebook generation: $
  (c_{ij})_{1\le i\le M,1\le j\le L}$ according to $Q_{X}^{\otimes ML}$.
  \item Decoders:
  Fix an arbitrary constant $\delta>0$.
  Upon observing $z$, Decoder~1 outputs as a list all $1\le i\le M$ such that there exists $1\le j\le L$ satisfying
  \begin{align}
  \left\{\begin{array}{rl}
      \bar{\imath}_{Z;X}(z;c_{ij})&>\log(LMA)+\delta \\
      \imath_{Z;X}(z;c_{ij})&>\log\frac{LM}{T}+\delta
    \end{array}\right.
  \end{align}
  if there is at least one such an $i$,
  or declares that no message is sent (i.e.~outputs an empty list) if otherwise. Decoder~2 outputs as a list all $1\le i\le M$ such that there exists $1\le j\le L$ satisfying
  \begin{align}
      \imath_{Z;X}(z;c_{ij})>\log\frac{LM}{T}+\delta
  \end{align}
  if there exists at least one such $i$ and in addition,
  \begin{align}\label{edthres}
  \imath_{Q_Z\|{{\pi}}_Z}(z)> \log A,
  \end{align}
  or declares that no message is sent if otherwise. Decoder~3 outputs $\{1,\dots,M\}$ as the list if \eqref{edthres} holds (so that the list size equals $M$), or otherwise declares no message.

  \item Error analysis: we denote by $\mathcal{L}$ the list of messages recovered by the eavesdropper.\\
  Decoder~1:
\begin{align}
  &\quad\mathbb{P}[\mathcal{L}\neq\emptyset|\textrm{no message}]
  \nonumber\\
  &\le \mathbb{P}\left[\max_{1\le m\le M,1\le l\le L}\bar{\imath}_{Z;X}(\bar{Z};X_{ml})>\log (LMA)+\delta\right]
  \\
  &\le \frac{1}{A}\exp(-\delta)\label{echm0}
  \end{align}
  where the probability is averaged over the codebook, $(X^{ML},\bar{Z})\sim Q_X^{\otimes ML}\times {{\pi}}_Z$, and \eqref{echm0} used the packing lemma \cite{verdu2012non}. Moreover
  \begin{align}
  &\quad\mathbb{P}[\textrm{$1\notin\mathcal{L}$ or $\mathcal{L}=\emptyset$}|W=1]
  \\
  &\le \mathbb{P}[\bar{\imath}_{Z;X}(Z;X)\le\log(LMA)+\delta]
  \nonumber\\
  &\quad+\mathbb{P}\left[\imath_{Z;X}(Z;X)\le\log\frac{LM}{T}+\delta\right]
  \end{align}
  where $W$ denotes the message sent, and $(X,Z)\sim Q_{XZ}$.
  Further,
  \begin{align}
  &\quad\mathbb{P}[|\mathcal{L}|\ge T+1|W=1]
  \nonumber\\
  &\le \mathbb{P}\left[\left.|\mathcal{L}|\ge T+1,\,\mathcal{L}\cap\left\{2,\dots,\frac{\beta M}{T}+1\right\}
  =\emptyset\right|W=1\right]
  \nonumber
  \\
  &\quad+\mathbb{P}\left[\left.
  \mathcal{L}\cap\left\{2,\dots,\frac{\beta M}{T}+1\right\}\neq\emptyset\right|W=1\right]
  \\
  &\le \left(1-\frac{\beta M/T}{M}\right)^T
  \nonumber\\
  &\quad+\mathbb{P}
  \left[\left.\mathcal{L}\cap\left\{2,\dots,\frac{\beta M}{T}+1\right\}\neq\emptyset\right|W=1\right]
  \label{e7_16}
  \\
  &\le \frac{1}{1+\beta}+e^{-(\beta+1)}
  \nonumber\\
  &~+
  \mathbb{P}\left[
  \max_{2\le m\le \frac{\beta M}{T}+1,1\le l\le L}\imath_{Z;X}(\hat{Z};X_{ml})>\log\frac{LM}{T}+\delta
  \right]\label{e7_17}
  \\
  &\le \frac{1}{1+\beta}+e^{-(\beta+1)}+\beta\exp(-\delta)\label{e7_18}
  \end{align}
  where
  \begin{itemize}
  \item To see  \eqref{e7_16}, note that by the symmetry among the messages $2,\dots,M$,
      for any $t\ge T$,
  \begin{align}
  &\mathbb{P}\left[\left.\mathcal{L}
  \cap\left\{2,\dots,\frac{\beta M}{T}+1\right\}=\emptyset\right|W=1,|\mathcal{L}\setminus\{1\}|
  =t\right]
  \\
  &=\left(1-\frac{\beta M/T}{M-1}\right)^t
  \\
  &\le \left(1-\frac{\beta M/T}{M}\right)^T.
  \end{align}
  \item In \eqref{e7_17} $(X^{ML},\hat{Z})\sim Q_X^{\otimes ML}\times Q_Z$, and we used the inequality \eqref{e_15}.
  \item \eqref{e7_18} used the packing lemma \cite{verdu2012non}.
  \end{itemize}
In summary,
  \begin{align}
  \mathbb{P}[\textrm{error}|\textrm{no message}]&\le\frac{1}{A}\exp(-\delta),
  \end{align}
  and for each $m=1,\dots,M$, by the union bound and by the symmetry in codebook generation we have
  \begin{align}
  \mathbb{P}[\textrm{error}|W=m]
  &\le\mathbb{P}[\bar{\imath}_{Z;X}(Z;X)\le\log(LMA)+\delta]
  \nonumber\\
  &\quad+\mathbb{P}\left[\imath_{Z;X}(Z;X)\le\log\frac{LM}{T}+\delta\right]
  \nonumber\\
  &\quad+\frac{1}{1+\beta}+e^{-(\beta+1)}+\beta\exp(-\delta).
  \end{align}
  Decoder~2:
  \begin{align}
  \mathbb{P}[\mathcal{L}\neq\emptyset|\textrm{no message}]
  &\le \mathbb{P}[\imath_{Q_Z\|{{\pi}}_Z}(\bar{Z})>\log A]
  \\
  &\le \frac{1}{A}\mathbb{P}[\imath_{Q_Z\|{{\pi}}_Z}(Z)>\log A]
  \label{echm}
  \\
  &\le \frac{1}{A}
  \end{align}
  where $\bar{Z}\sim {{\pi}}_Z$ and $Z\sim Q_Z$, and \eqref{echm} used the change of measure. On the other hand,
  \begin{align}
  &\mathbb{P}[\mathcal{L}\neq\emptyset,\,1\notin\mathcal{L}|W=1]
  \nonumber\\
  &\le \mathbb{P}\left[\imath_{Z;X}(Z;X)\le\log\frac{LM}{T}
  +\delta\right]
  \end{align}
  and
  \begin{align}\label{e6_66}
  \mathbb{P}[\mathcal{L}=\emptyset|W=1]
  \le  \mathbb{P}[\imath_{Q_Z\|{{\pi}}_Z}(Z)\le \log A],
  \end{align}
  Moreover, as in \eqref{e7_18}, we have
  \begin{align}
  &\mathbb{P}[|\mathcal{L}|\ge T+1|W=1]
  \nonumber\\
  &\le \frac{1}{1+\beta}+e^{-(\beta+1)}+\beta\exp(-\delta)\label{e7_18_n}
  \end{align}
  By union bound,
  \begin{align}
  \mathbb{P}[\textrm{error}|\textrm{no message}]&\le\frac{1}{A};
  \end{align}
  and for each $m=1,\dots,M$,
  \begin{align}
  \mathbb{P}[\textrm{error}|
  W=m]
  &\le\mathbb{P}\left[\imath_{Z;X}(Z;X)\le\log\frac{LM}{T}
  +\delta\right]
  \nonumber\\
  &\quad+\mathbb{P}[\imath_{Q_Z\|{{\pi}}_Z}(Z)\le \log A]
  \nonumber
  \\
  &\quad+\frac{1}{1+\beta}+e^{-(\beta+1)}+\beta\exp(-\delta).
  \end{align}
  \item Decoder~3:\\
  The analysis is similar to that of Decoder~2 and the result follows from \eqref{echm} and \eqref{e6_66}.
\end{itemize}

\bibliographystyle{ieeetr}
\bibliography{SM}
\end{document}